\documentclass[sigconf]{acmart}

\def\cameraReady{}

\AtBeginDocument{}

\copyrightyear{2024}
\acmYear{2024}
\setcopyright{acmlicensed}\acmConference[CCS '24]{Proceedings of the 2024 ACM SIGSAC Conference on Computer and Communications Security}{October 14--18, 2024}{Salt Lake City, UT, USA}
\acmBooktitle{Proceedings of the 2024 ACM SIGSAC Conference on Computer and Communications Security (CCS '24), October 14--18, 2024, Salt Lake City, UT, USA}
\acmDOI{10.1145/3658644.3670286}
\acmISBN{979-8-4007-0636-3/24/10}

\settopmatter{printacmref=true}

%
%
\usepackage{setspace}
\usepackage{xcolor}
\usepackage{amsfonts,amsmath,amsthm}
\usepackage{enumitem}
\setitemize{noitemsep,topsep=0pt,parsep=0pt,partopsep=0pt}
\setenumerate{noitemsep,topsep=0pt,parsep=0pt,partopsep=0pt}
\usepackage{cryptocode}
\usepackage{multirow, tabu, makecell, graphics}
\usepackage{algorithm}
\usepackage[noend]{algpseudocode}
\usepackage{amsmath}
\usepackage{pifont}
\usepackage{xspace}
\usepackage{bm}
\usepackage{enumitem}
\setlist{leftmargin=*}
\usepackage{hyperref}
\usepackage{listings}
\usepackage{cleveref}
\usepackage{pifont}
\usepackage[compact]{titlesec}

\ifdefined\cameraReady
  \newcommand{\sysname}{\textsc{Sui Lutris}\xspace}
\else
  \newcommand{\sysname}{\textsc{Lutris}\xspace}
\fi

\newcommand{\bragging}{
  As of May 2024, the Sui mainnet is operated by 107 geo-distributed heterogeneous validators and processes over 3.1 millions certificates a day (over 8.9 million ops/day through transaction blocks) over 383 epochs changes using the \sysname protocols. It stores over 143 million objects, owned by over 10.1 million addresses, defined by over 15,500 Move packages. On the peak throughput day (Jul.\ 27, 2023) it processed over 65 million certificates (most of which owned object transactions), the highest volume of any blockchain, and higher than the total number of transactions on all chains combined, on that day. \sysname secures over 700 million USD of value locked in Defi protocols, and a fully diluted market cap of over 12 billion USD of native cryptocurrency.
}
\newcommand{\extendedBragging}{
  As of May 2024, the Sui mainnet is operated by 107 geo-distributed heterogeneous validators, has processed over 1.2 billion certificates, at the rate of over 3.1 million certificates a day, in over 32 million checkpoints, and has undergone over 383 epoch changes. It stores over 143 million objects, owned by over 10.1 million addresses, defined by over 15,500 Move packages.
}

\newcommand{\objectid}{\textsf{ObjID}}

\newcommand{\objectkey}{\textsf{ObjKey}}
\newcommand{\objectversion}{\textsf{Version}}

\newcommand{\object}{\textsf{Obj}}
\newcommand{\cert}{\textsf{TxCert}}
\newcommand{\txsign}{\textsf{TxSign}}

\newcommand{\epoch}{\textsf{Epoch}}


\newcommand{\okey}[1]{\textit{key} ({#1})}
\newcommand{\version}[1]{\textit{version} ({#1})}
\newcommand{\initialversion}[1]{\textit{initial} ({#1})}

\newcommand{\owned}[1]{\textit{owned} ({#1})}
\newcommand{\oauth}[1]{\textit{owner} ({#1})}

\newcommand{\parent}[1]{\textit{creator} ({#1})}
\newcommand{\contents}[1]{\textit{contents} ({#1})}

\newcommand{\transaction}{  \textsf{Tx}}
\newcommand{\transactiondigest} {  \textsf{TxDigest}}
\newcommand{\effects}{  \textsf{Effects}}
\newcommand{\esign}{  \textsf{EffSign}}
\newcommand{\ecert}{  \textsf{EffCert}}

\newcommand{\status}{  \textsf{Status}}
\newcommand{\created}{  \textsf{Created}}
\newcommand{\mutated}{  \textsf{Mutated}}
\newcommand{\deleted}{  \textsf{Deleted}}
\newcommand{\events}{  \textsf{Events}}
\newcommand{\wrapped}{  \textsf{Wrapped}}
\newcommand{\unwrapped}{  \textsf{Unwrapped}}


\newcommand{\txepoch}[1]{\textit{epoch} ({#1})}
\newcommand{\txvalid}[1]{\textit{valid} ({#1})}
\newcommand{\txinputs}[1]{\textit{inputs} ({#1})}
\newcommand{\txsharedinputs}[1]{\textit{shared\_inputs} ({#1})}
\newcommand{\txreadonlyinputs}[1]{\textit{read\_only\_inputs} ({#1})}
\newcommand{\txdeps}[1]{\textit{dependencies} ({#1})}
\newcommand{\txecon}[1]{\textit{payment} ({#1})}
\newcommand{\txexec}[1]{\textit{exec} ({#1})}
\newcommand{\txdigest}[1]{\textit{digest} ({#1})}
\newcommand{\efftransaction}[1]{\textit{transaction} ({#1})}
\newcommand{\sign}[1]{\textit{sign} ({#1})}

\newcommand{\ownedlockdb}{  \textsf{OwnedLock}}
\newcommand{\sharedlockdb}{  \textsf{SharedLock}}
\newcommand{\nextsharedlockdb}{  \textsf{NextSharedLock}}
\newcommand{\objdb}{  \textsf{ObjDB}}
\newcommand{\certdb}{  \textsf{Ct}}

\newcommand{\one}{\ding{202}\xspace}
\newcommand{\two}{\ding{203}\xspace}
\newcommand{\three}{\ding{204}\xspace}
\newcommand{\four}{\ding{205}\xspace}
\newcommand{\five}{\ding{206}\xspace}
\newcommand{\six}{\ding{207}\xspace}
\newcommand{\seven}{\ding{208}\xspace}
\newcommand{\eight}{\ding{209}\xspace}








\newcommand{\Obj}{\ensuremath{\mathsf{Obj}}}

\newcommand{\Version}{\ensuremath{\mathsf{Version}}}
\newcommand{\InitVersion}{\ensuremath{\mathsf{InitialVersion}}}

\newcommand{\ObjID}{\mathsf{ObjID}}

\newcommand{\para}[1]{\vskip 1em \noindent \textbf{#1.}}

\newif \ifcomments \commentstrue
\newif \iffull \fulltrue

\fulltrue

\ifcomments
  \newcommand{\note}[2]{\textsf{\color{blue}{[Note(#1): {#2}]}}}
\else
  \newcommand{\note}[2]{}
\fi

\ifdefined\cameraReady
  \def\commentson{0} 
\else
  \def\commentson{1} 
\fi

\ifnum\commentson=1
  \newcommand{\alberto}[1]{{{\color{olive}\textbf{Alberto: }#1}\normalcolor}}
  \newcommand{\lef}[1]{{{\color{pblue}\textbf{Lef: }#1}\normalcolor}}
  \newcommand{\george}[1]{{{\color{orange}\textbf{George:}#1}\normalcolor}}
  \newcommand{\kostas}[1]{{{\color{magenta}\textbf{Kostas:}#1}\normalcolor}}
  \newcommand{\arnab}[1]{{{\color{teal}\textbf{Arnab:}#1}\normalcolor}}

\else
  \newcommand{\alberto}[1]{}
  \newcommand{\lef}[1]{}
  \newcommand{\george}[1]{}
  \newcommand{\kostas}[1]{}
  \newcommand{\arnab}[1]{}
\fi

\definecolor{pblue}{rgb}{0.13,0.13,1}
\definecolor{pgreen}{rgb}{0,0.5,0}
\definecolor{pred}{rgb}{0.9,0,0}
\definecolor{pgrey}{rgb}{0.46,0.45,0.48}

\definecolor{dullred}{Hsb}{0,1,0.4}
\definecolor{dullyellow}{Hsb}{30,1,0.4}
\definecolor{dullgreen}{Hsb}{60,1,0.4}
\definecolor{dullteal}{Hsb}{150,1,0.4}
\definecolor{dullblue}{Hsb}{210,1,0.4}
\definecolor{dullpurple}{Hsb}{270,1,0.4}
\definecolor{dullmagenta}{Hsb}{300,1,0.4}
\definecolor{dullpurplered}{Hsb}{330,1,0.4}

\definecolor{ckeyword}{HTML}{7F0055}
\definecolor{ccomment}{HTML}{3F7F5F}
\definecolor{cnumber}{HTML}{2A0099}

\lstdefinelanguage{Move}{
  keywords={
      abort, acquires, assert, copy, borrow_global, borrow_global_mut, create_account, freeze, fun,
      module, move_to, move_from, public, resource, break, exists, has, key, store, vector
      continue,
      else,
      false,
      if,
      let, loop,
      move, mut,
      return,
      struct,
      true,
      while,
      use,
    },
  ndkeywords={address, id, u8, bool, u64, bytearray, Self},
  showspaces=false,
  showtabs=false,
  breaklines=true,
  showstringspaces=false,
  breakatwhitespace=true,
  lineskip=-0.6pt,
  morecomment=[l]{//}, 
  morecomment=[s]{/*}{*/}, 
  basewidth={0.54em, 0.4em},
  basicstyle= \ttfamily,
  keywordstyle={\color{ckeyword}\ttfamily\bfseries},
  ndkeywordstyle={\color{pblue}\ttfamily\bfseries},
  commentstyle={\color{ccomment}\itshape},
  stringstyle=\color{green},
  moredelim=[il][\textcolor{pgrey}]{$ $},
  moredelim=[is][\textcolor{pgrey}]{\%\%}{\%\%}
}


\definecolor{eclipseStrings}{RGB}{42,0.0,255}
\definecolor{eclipseKeywords}{RGB}{127,0,85}
\colorlet{numb}{magenta!60!black}

\lstdefinelanguage{json}{
  basicstyle=\normalfont\ttfamily,
  commentstyle=\color{eclipseStrings}, 
  stringstyle=\color{eclipseKeywords}, 
  stepnumber=1,
  numbersep=8pt,
  showstringspaces=false,
  breaklines=true,
  string=[s]{"}{"},
  comment=[l]{:\ "},
  morecomment=[l]{:"},
  literate=
    *{0}{{{\color{numb}0}}}{1}
    {1}{{{\color{numb}1}}}{1}
    {2}{{{\color{numb}2}}}{1}
    {3}{{{\color{numb}3}}}{1}
    {4}{{{\color{numb}4}}}{1}
    {5}{{{\color{numb}5}}}{1}
    {6}{{{\color{numb}6}}}{1}
    {7}{{{\color{numb}7}}}{1}
    {8}{{{\color{numb}8}}}{1}
    {9}{{{\color{numb}9}}}{1}
}

\newtheorem{theorem}{Theorem}
\newtheorem{lemma}{Lemma}

\begin{document}

\title[\sysname]{
    \sysname: A Blockchain Combining Broadcast and Consensus
}

\ifdefined\cameraReady
    \author{Sam Blackshear}
    \email{sam@mystenlabs.com}
    \orcid{0000-0002-0024-6655}
    \affiliation{\institution{Mysten Labs}\city{Palo Alto}\country{USA}}

    \author{Andrey Chursin}
    \email{andrey@mystenlabs.com}
    \orcid{0009-0005-2758-5496}
    \affiliation{\institution{Mysten Labs}\city{Palo Alto}\country{USA}}

    \author{George Danezis}
    \email{george@mystenlabs.com}
    \orcid{0000-0002-8923-4860}
    \affiliation{\institution{Mysten Labs, UCL }\city{London}\country{UK}}

    \author{Anastasios Kichidis}
    \email{tasos@mystenlabs.com}
    \orcid{0000-0002-6998-4031}
    \affiliation{\institution{Mysten Labs}\city{London}\country{UK}}

    \author{Lefteris Kokoris-Kogias}
    \email{lefteris@mystenlabs.com}
    \orcid{0000-0002-8827-3382}
    \affiliation{\institution{Mysten Labs, IST Austria}\city{Athens}\country{Greece}}

    \author{Xun Li}
    \email{xun@mystenlabs.com}
    \orcid{0009-0005-0522-8295}
    \affiliation{\institution{Mysten Labs}\city{Palo Alto}\country{USA}}

    \author{Mark Logan}
    \email{mark@mystenlabs.com}
    \orcid{0009-0000-1379-4733}
    \affiliation{\institution{Mysten Labs}\city{Palo Alto}\country{USA}}

    \author{Ashok Menon}
    \email{ashok@mystenlabs.com}
    \orcid{0009-0007-1372-0479}
    \affiliation{\institution{Mysten Labs}\city{London}\country{UK}}

    \author{Todd Nowacki}
    \email{tmn@mystenlabs.com}
    \orcid{0009-0006-8870-8872}
    \affiliation{\institution{Mysten Labs}\city{Palo Alto}\country{USA}}

    \author{Alberto Sonnino}
    \email{alberto@mystenlabs.com}
    \orcid{0000-0001-5337-4741}
    \affiliation{\institution{Mysten Labs, UCL }\city{London}\country{UK}}

    \author{Brandon Williams}
    \email{brandon@mystenlabs.com}
    \orcid{0009-0006-3883-7334}
    \affiliation{\institution{Mysten Labs}\city{Palo Alto}\country{USA}}

    \author{Lu Zhang}
    \email{lu@mystenlabs.com}
    \orcid{0009-0000-4359-3174}
    \affiliation{\institution{Mysten Labs}\city{Palo Alto}\country{USA}}

    \renewcommand{\shortauthors}{Sam Blackshear et al.}
\else
    \author{}
\fi

\begin{abstract}
    \sysname is the first smart-contract platform to sustainably achieve sub-second finality. It achieves this significant decrease by employing consensusless agreement not only for simple payments but for a large variety of transactions. Unlike prior work, \sysname neither compromises expressiveness nor throughput and can run perpetually without restarts. \sysname achieves this by safely integrating consensuless agreement with a high-throughput consensus protocol that is invoked out of the critical finality path but ensures that when a transaction is at risk of inconsistent concurrent accesses, its settlement is delayed until the total ordering is resolved.
    Building such a hybrid architecture is especially delicate during reconfiguration events, where the system needs to preserve the safety of the consensusless path without compromising the long-term liveness of potentially misconfigured clients. We thus develop a novel reconfiguration protocol, the first to provably show the safe and efficient reconfiguration of a consensusless blockchain. \sysname is currently running in production and underpins the Sui smart-contract platform. Combined with the use of Objects instead of accounts it enables the safe execution of smart contracts that expose objects as a first-class resource. In our experiments \sysname achieves latency lower than 0.5 seconds for throughput up to 5,000 certificates per second (150k ops/s with transaction blocks), compared to the state-of-the-art real-world consensus latencies of 3 seconds. Furthermore, it gracefully handles validators crash-recovery and does not suffer visible performance degradation during reconfiguration.
\end{abstract}

\begin{CCSXML}
    <ccs2012>
    <concept>
    <concept_id>10002978.10003006.10003013</concept_id>
    <concept_desc>Security and privacy~Distributed systems security</concept_desc>
    <concept_significance>500</concept_significance>
    </concept>
    </ccs2012>
\end{CCSXML}

\ccsdesc[500]{Security and privacy~Distributed systems security}

\keywords{
    Deployed Blockchain; Consensusless Transactions; BFT Consensus
}

\maketitle


\section{Introduction}
Traditional blockchains totally order transactions across replicated miners or validators to mitigate ``double-spending'' attacks, i.e., a user trying to use the same coin in two different transactions. It is well known that total ordering requires consensus. In recent years, however, systems based on consistent~\cite{BaudetDS20} and reliable~\cite{GuerraouiKMPS19} broadcasts have been proposed instead. These rely on objects (e.g., a coin) being controlled by a single authorization path (e.g., a single signer or a multi-sig mechanism), responsible for the liveness of transactions. This concept has been used to design asynchronous, and lightweight alternatives to traditional blockchains for decentralized payments~\cite{astro,BaudetDS20,zef}. We call these systems \emph{consensusless} as they forgo consensus. Yet, so far they have not been used in a production blockchain.

There are multiple reasons for this. First, consensusless protocols typically support a restricted set of operations limited to asset transfers. Second, deploying consesusless protocols in a dynamic environment is challenging as they do not readily support state checkpoints and validator reconfiguration. Supporting these functions is vital for the health of a long-lived production system.
Finally, consensusless protocols are sensitive to client bugs as equivocations lock the assets forever.

Consequently all existing blockchains implement consensus-based protocols that allow for general-purpose smart contracts. Sadly, this flexibility comes at the cost of higher complexity and significantly higher latency even for transactions that operate on unrelated parts of the state.

In this paper, we present \sysname, the first system that combines the consensusless and consensus-based approaches to provide the best of both worlds when processing transactions in a replicated Byzantine setting.
%
\sysname uses a consistent broadcast protocol between validators to ensure the safety of all operations, ensuring lower latency as compared to consensus. It only relies on consensus for the correct execution of complex smart contracts operating on shared-ownership objects, as well as to support network maintenance operations such as defining checkpoints and reconfiguration. It is maintained by a permissionless set of validators that play a similar role to miners in Bitcoin.

\para{Challenges} \sysname requires tackling three key challenges:
%

\noindent Firstly, a high-throughput system such as \sysname requires a checkpointing protocol in order to archive parts of its history and reduce the memory footprint and bootstrap cost of new participants. Checkpointing \sysname however is not as simple as in classic blockchains since it does not have total ordering guarantees for all transactions. Instead, \sysname proposes an after-the-fact checkpointing protocol that eventually generates a canonical sequence of transactions in blocks, without delaying execution and finality.

Secondly, consensusless protocols provide low latency at the cost of usability. A misconfigured client (e.g., underestimating the gas fee or crash-recovering) risks deadlocking its account. We consider this an unacceptable compromise for a production system. We develop \sysname such that client bugs only affect the liveness of equivocated owned objects for a single \emph{epoch}, and provide rigorous proofs to support it. The current epoch length for our production system is 24h, but experiments with an epoch length of 10 minutes show no deterioration in performance.

Finally, the last challenge to solve is the dynamic participation of validators in a permissionless system. The lack of total ordering makes the solution non-trivial as different validators may stop processing transactions at different points compromising the liveness of the system.
Additional challenges stem from the non-starvation needs of misconfigured clients coupled with ensuring that final transactions are never reverted across reconfiguration events. To this end, we design a custom reconfiguration protocol that preserves safety with minimal disruption of the processing pipeline.

\para{Real-world system}
\sysname has been designed for and adopted as the core system behind the Sui blockchain~\cite{sui}, that uses the Move programming language~\cite{move_white} for smart contracts. It launched in May 2023, and has been operating continuously using the protocols described here, with no downtime for a year, as of May 2024.
\bragging

Over the history of \sysname, about 64\% of transactions used the fast path, with peaks reaching 98\% during specific activities such as launches of new applications (e.g., epoch 103). However, in some rare days where traffic is dominated by more traditional DeFi and Oracle applications, the rate of transactions using the fast path drops to 1\%-2\%~\cite{sui-explorer}.
However, on the day (July 2023) the peak throughput of 65 million certificates was observed, the vast majority of these transactions used the fast path, encoding an action of an on-chain game. More recently, a Sui community member\footnote{\url{https://www.youtube.com/watch?v=oWZ9eNUclxA}} showed that 1 million NFTs may be minted on the Sui testnet in about 45 seconds, using the consensussless path. Finally, we note that the number of consensuless transactions may be a poor proxy for their value: transfers and mints are often user facing, and users have low tolerance for delays in payments and transfers (less than a second). While operations relating to DeFi may have a higher natural delay tolerance (a few seconds).

Due to \sysname underpinning an operational infrastructure, the long version of this paper~\cite{sui-lutris} present details that go beyond merely illustrating core components.
%
%
\sysname achieves finality within 0.5 seconds with a committee of 10 validators processing up to 5,000 cert/s or with a committee of 100 validators processing about 4,000 cert/s (150k ops/s with transaction blocks). \sysname can provably withstand up to $1/3$ of crash validators without meaningful performance degradation and can seamlessly reconfigure despite the complexities of supporting a consesusless transactions.

We evaluated \sysname against Bullshark~\cite{bullshark}, a state-of-the-art consensus protocol, running within the same blockchain and show that \sysname achieves finality up to 15x earlier. The \sysname execution engine leverages the explicit transaction metadata about objects accesses to statically schedule transaction execution in parallel on all cores.
Consequently, \sysname simplifies dependency-graph extraction during execution by seamlessly tracking accessed objects, mitigating congestion across different shared-objects.
Parallel execution is a desired feature that both EVM chains~\cite{saraph2019empirical} and pure consensus chains~\cite{gelashvili2023block} are currently trying to adopt. However details of the parallel execution are out of scope.

\para{Contributions} We make the following contributions:
\begin{itemize}[leftmargin=*]
      \item We present \sysname, the first smart-contract system that forgoes consensus for single-writer operations and only relies on consensus for settling multi-writer operations, combining the two modes securely and efficiently.

      \item As part of \sysname we show how to use a consensus engine to efficiently checkpoint a consensusless blockchain without forfeiting the latency benefits of running consensusless transactions. Our checkpointing mechanism puts transactions into a sequence after execution and finality, reducing the need for agreement and therefore latency.

      \item We show how to perform reconfiguration safely and with minimal downtime. Unlike prior consensusless blockchains our reconfiguration mechanism allows for forgiving equivocation so that careless users can regain access to their assets.




      \item We provide a production-grade implementation of \sysname and evaluate it, on a real geo-distributed set of validators, under varying transaction loads, and crash faults.
\end{itemize}
%

\section{Background \& Models} \label{sec:overview}

\sysname uses a novel approach to processing blockchain transactions which ensures low latency 
by forgoing the need for consensus from the critical latency path. 
Yet, skipping consensus provides finality (knowing that a transaction will execute) and settlement (knowing the exact execution result)
only for single-owner assets (assets that are immutable or owned directly or indirectly by a single address, see below).

Despite single-owner transactions constituting most of the load on our mainnet, these types of assets are not expressive enough to implement all types of smart contracts since some transactions must process assets belonging to different parties. For the rest of the transactions, we can only get finality but need to postpone settlement until potential conflicts are resolved. For this reason, we couple \sysname with a consensus protocol. This hybrid architecture is both a curse and a blessing: two different execution paths create the threat of inconsistencies and safety concerns; but, the consensus component enables us to implement checkpointing and reconfiguration, which consensusless systems lack.

We define the problem \sysname addresses, the threat model, and the security properties maintained.

\subsection{Threat Model}

We assume a message-passing system with a set of $n$ validators per epoch and
a computationally bound adversary that controls the network and can statically
corrupt up to $f < n/3$ validators within any epoch.
We say that validators corrupted by the adversary are \emph{Byzantine} or \emph{faulty} and
the rest are \emph{honest} or \emph{correct}.
To capture real-world networks we assume asynchronous \emph{eventually
    reliable} communication links among honest validators.
That is, there is no bound on message delays and there is a finite but unknown number of messages that can be lost.
Informally, \sysname exposes to all participants a \emph{key-value} object store abstraction that can be used to read and write objects.


\noindent \textbf{Consensus Protocol.}
\sysname uses a consensus protocol as a black box that takes some valid inputs and outputs a total ordering. It makes no additional synchrony assumptions and thus inherits the synchrony assumptions of the underlying consensus protocol. In our implementation, we specifically use the Bullshark protocol~\cite{bullshark}. It is secure in the partially synchronous network model~\cite{dwork1988consensus}, which stipulates that after some unknown global stabilization time all messages are delivered within a bounded delay.
It could be configured to run the Tusk protocol~\cite{narwhal}, making \sysname asynchronous.

\subsection{Core Properties} \label{sec:core-properties}
\vspace{-0.2cm}

\sysname achieves the standard security properties of blockchain systems relating to validity, safety, and liveness:
\begin{itemize}[leftmargin=*]
    \item \textbf{Validity}: State transitions at correct validators are in accordance with the authorization rules relating to objects, as well as the VM logic constraining valid state transitions on objects of defined types. This property is unconditional with respect to the number of correct validators in the network.

    \item \textbf{Safety}: If two transactions $t$ and $t'$ are executed on correct validators, in the same or different epochs, and take the same inputs, then $t=t'$. This property also holds in asynchrony subject to a maximum threshold of Byzantine nodes.

    \item \textbf{Liveness}: All valid transactions sent by correct clients are eventually processed until final (and their effects persist across epoch boundaries).  All objects that have not been used as inputs to a committed transaction are eventually available to be used by a correct client as part of a valid transaction. This property holds under partial synchrony, due to our use of Bullshark~\cite{bullshark} consensus (but would hold in asynchrony when using Tusk~\cite{narwhal}).
\end{itemize}

Liveness encompasses censorship resistance. It also only holds for a correct client that does not equivocate by sending conflicting transactions for the same owned object version.


\subsection{Transaction Finality}\label{sec:bis-finality}
The BIS~\cite{BISfinality} defines finality as the property of being ``irrevocable and unconditional''. We distinguish between transaction finality, after which a transaction processing is final,
and settlement, after which the effects of a transaction are final and may be used by subsequent transactions. In \sysname, unlike other blockchains, both finality and settlement occur before the creation of checkpoints.

In all cases, \emph{a transaction becomes final when $2f+1$ validators accepts for processing a transaction counter-signed by $2f+1$ validators (i.e., a certificate), even before such a certificate is sequenced by consensus or executed}. After this, no conflicting transaction can occur, the transaction may not be revoked, and is eventually executed and persists across epochs. However, the result of execution is only known a-priori for owned object transactions, and for shared object transactions it is known only after consensus (see \Cref{sec:design-description}). Transaction finality is achieved within 2 network round trips.

Settlement occurs upon execution on $2f+1$ validators, when an \emph{effects certificate} could be formed (see \Cref{sec:design-description}).
For owned object
transactions execution occurs without the delay of consensus; for shared object transactions it happens just after the certificate
has been sequenced by consensus. In both cases, settlement is not delayed by the process of committing the transaction within a checkpoint, and thus, has lower latency
than checkpoint creation.

\section{The \sysname Architecture} \label{sec:core-protocol}
We present the \sysname protocol, which combines broadcast and consensus to provide a scalable and secure blockchain. We defer detailed algorithms, explanations, and data structures to extended version paper~\cite{sui-lutris}.

\subsection{Object-Centric Design} \label{sec:system-model}
Current blockchains typically use two types of programming models for their core resource. Unspent Transaction Outputs (UTXO)~\cite{nakamoto2008bitcoin} or accounts~\cite{zhang2022ethereum}.
Neither of the two serves well for \sysname's goals as UTXO provides a bad user experience where the wallets need to keep track of all the fresh UTXO entries owned by the user whereas Accounts limit parallelism as a user's account cannot be concurrently credited, which would make consensusless path transactions significantly less usable.

\para{\sysname objects}
To this end, we introduce a middle ground, which we simply call Objects. Objects are long-lived, like accounts, but can be manipulated atomically through versioning, allowing a significant amount of parallelism. More specifically,  \sysname validators replicate the state represented as a set of objects. Each object has a type
that defines operations that are valid state transitions for the type. Each object may be read-only, owned, or shared.
\begin{itemize}[leftmargin=*]
    \item \emph{Read-only objects} cannot be mutated within an epoch and can be used in transactions concurrently and by all users.
    \item \emph{Owned objects} have an owner field. The owner can be set to an address representing a public key.
          In that case, a transaction is authorized to use the object and mutate it if it is signed by that address. A transaction is signed by a single address and, therefore, can use objects owned by that address. However, a single transaction cannot use objects owned by multiple addresses. The owner of an object (called a child object) can also be another object (called the parent object).  In that case, the child object may only be used if the root object (the first one in a tree of possibly many parents) is part of the transaction and the transaction is authorized to use the parent. Contracts use this facility to create dynamic collections and other complex data structures.
    \item \emph{Shared objects} are mutable and do not have a specific owner. They can instead be included in transactions by anyone, and they perform their own authorization logic as part of the smart contract. Such objects, by virtue of having to support multiple writers while ensuring safety and liveness, require a consensus protocol to be executed safely.
\end{itemize}

The extended version of the paper~\cite{sui-lutris} provides additional details on the object model and the operations that can be performed on objects, including objects wrapping/unwrapping, parent-child relationships, and details on safe objects deletion.

\para{Object versions}
Both owned and shared objects are associated with a version number. The tuple $(\objectid, \objectversion)$ is called an $\objectkey$ and takes a single value, thus it can be seen as the equivalent of a Bitcoin UTXO~\cite{nakamoto2008bitcoin} that should not be equivocated. This explicit ownership of Objects allows \sysname increased programming simplicity. More concretely:

\begin{itemize}
    \item  It improves security as, by construction, an attacker cannot steal or even send a transaction that accesses an owned object--only the owner address can. This is in contrast to the other programming models where all smart contract data is shared by default and only through (often delicate or incorrect) access control is an attacker limited.
    \item It enables native composability. An object can both wrap and own other objects, providing a natural mechanism for creating rich ownership hierarchies. For example, a game character may have an inventory containing swords, shields, and potions, and the sword might own badges  recording its historical achievements.
    \item  It defines a common vocabulary for assets throughout the client stack. Assets are the building blocks of decentralized applications, and they need to be surfaced to users. Yet on existing blockchains, there is no explicit representation of assets. Instead, application-specific view functions read unstructured, untyped data from accounts and attempt to convert it into types the client can understand. On \sysname, objects are a typed, structured representation of assets and every object has a unique cryptographic identifier. Objects serve as a unifying abstraction: programmers create objects, APIs read them, smart contracts mutate them, wallets and apps display them.
    \item It allows an easy way to detect when two transactions do not have data dependencies since they do not share owners. This allows maximal use of \sysname's fast path as well as easily parallelizable execution of \sysname's workloads in general, which can now leverage the multi-core architecture of modern CPUs.
    \item It allows to statically check whether a transaction can forgo consensus: transactions forgo consensus if they only access owned objects (see \Cref{sec:design-description}).
\end{itemize}

Object-level tracking of state draws a balance between the account based model that limits parallelism to the unit of authentication (i.e., the account), and UTXO, which allows parallelism at the cost of mutating identifiers, making state difficult to track across time. The object-centric design allows to detect read-write/write-write conflicts statically for shared-objects and resolve them after sequencing by assigning to transactions the shared-object version on which they should operate, which is extremely performant. To our knowledge, no previous system did this.

However, the object-centric model also introduces some complexities for the users: they need to track multiple objects they own instead of a single account, which makes managing coin balances harder, while making other operations dealing with objects easier. This is mitigated in Sui via wallets managing the complexity and showing a unified balance, indexers reporting aggregate balances per account, all transactions being able to use gas from multiple objects and smashing them into a single object, and programable transactions blocks that can take multiple object inputs, aggregate them, and use them as a single object.

\para{Transactions} A transaction is a signed command specifying the input objects (read-only, owned or shared), a version number for each owned object, an entry function into a 
smart contract, and a set of parameters. If valid, it consumes the mutable input objects and constructs a set of output objects at a fresh version -- which can be the same objects at a later version or new objects.

\subsection{Example Owned and Shared Object Uses}
As a simple example of the use of owned objects we show in \Cref{fig:owned-contract} a game character representing a warrior. The public functions `new\_warrior' and `new\_sword' respectively create a new warrior object and a sword object to equip to the warrior. Both these objects are owned by the sender of the transaction. Calls to `equip` and `unequip` mutate the warrior to respectively equip and unequip them with a sword.
We are open sourcing a production-ready example of game character implemented exclusively with owned objects\footnote{
    \ifdefined\cameraReady
        \url{https://github.com/MystenLabs/sui/blob/main/examples/move/hero/sources/example.move}
    \else
        Code available but link omitted for review.
    \fi
}.

\begin{figure}[t]
    \begin{lstlisting}[language=Move, basicstyle=\tiny\ttfamily, columns=fullflexible]
/// Example of a game character with a basic attribute.
module example::warrior {
    /// A sword to equip to our warrior
    struct Sword has key, store {
        id: UID,
        strength: u8,
    }
    /// A owned object representing a warrior
    struct Warrior has key, store {
        id: UID,
        sword: Option<Sword>,
    }

    /// Create a new sword
    public fun new_sword(strength: u8, ctx: &mut TxContext) {
        let sword = Sword { id: object::new(ctx), strength };
        transfer::transfer(sword, tx_context::sender(ctx));
    }
    /// Create a new warrior
    public fun new_warrior(ctx: &mut TxContext) {
        let warrior = Warrior { id: object::new(ctx), sword: option::none() };
        transfer::transfer(warrior, tx_context::sender(ctx));
    }

    /// Equip the warrior with a sword
    public entry fun equip(warrior: &mut Warrior, sword: Sword) {
        assert!(option::is_none(&warrior.sword), EAlreadyEquipped);
        option::fill(&mut warrior.sword, sword);
    }
    /// Remove the sword from the warrior
    public entry fun unequip(warrior: &mut Warrior): Sword {
        assert!(option::is_some(&warrior.sword), ENotEquipped);
        option::extract(&mut warrior.sword)
    }
}
\end{lstlisting}
    \caption{Owned object contract for inventory management.}\label{fig:owned-contract}
\end{figure}

As an example of the use of shared objects we show in \Cref{fig:shared} an atomic swap contract. The public `create' function wraps an escrowed item into a \emph{shared} escrowed Object, to be swapped for another object with a specific ID. One of two things can then happen: the counter party can call `exchange' with the correct object to swap executes the atomic swap; or the creator of the escrow object can call `cancel' and recovers the escrowed object. Since two distinct mutually distrustful actors may access this state it has to be represented as a shared object.

\begin{figure}[t]
    \begin{lstlisting}[language=Move, basicstyle=\tiny\ttfamily, columns=fullflexible]
/// An escrow for atomic swap of objects without a trusted third party
module defi::shared_escrow {
    /// An object held in escrow
    struct EscrowedObj has key, store {
        id: UID,
        /// owner of the escrowed object
        creator: address,
        /// intended recipient of the escrowed object
        recipient: address,
        /// ID of the object `creator` wants in exchange
        exchange_for: ID,
        /// the escrowed object
        escrowed: Option<T>,
    }

    /// Create an escrow for exchanging goods with counterparty
    public fun create(recipient: address, exchange_for: ID, escrowed_item: T, 
        ctx: &mut TxContext
    ) {
        let creator = tx_context::sender(ctx);
        let id = object::new(ctx);
        let escrowed = option::some(escrowed_item);
        transfer::public_share_object(
            EscrowedObj { id, creator, recipient, exchange_for, escrowed } );
    }

    /// The `recipient` of the escrow can exchange `obj` with the escrowed item
    public entry fun exchange( obj: ExchangeForT, escrow: &mut EscrowedObj,
        ctx: &TxContext
    ) {
        assert!(option::is_some(&escrow.escrowed), EAlreadyExchangedOrCancelled);
        let escrowed_item = option::extract(&mut escrow.escrowed);
        assert!(&tx_context::sender(ctx) == &escrow.recipient, EWrongRecipient);
        assert!(object::borrow_id(&obj) == &escrow.exchange_for, EWrongExchangeObject);
        // everything matches. do the swap!
        transfer::public_transfer(escrowed_item, escrow.recipient);
        transfer::public_transfer(obj, escrow.creator);
    }

    /// The `creator` can cancel the escrow and get back the escrowed item
    public entry fun cancel(escrow: &mut EscrowedObj,
        ctx: &TxContext
    ) {
        assert!(&tx_context::sender(ctx) == &escrow.creator, EWrongOwner);
        assert!(option::is_some(&escrow.escrowed), EAlreadyExchangedOrCancelled);
        transfer::public_transfer(option::extract(&mut escrow.escrowed), escrow.creator);
    }
}
\end{lstlisting}
    \caption{Shared object contract for an atomic swap.}\label{fig:shared}
\end{figure}

\subsection{Core Protocol Description} \label{sec:design-description}
\Cref{fig:sui-overview} illustrates the high-level interactions between a client and \sysname validators to commit a transaction.

\begin{figure*}[t]
    \centering
    \includegraphics[width=0.8\textwidth]{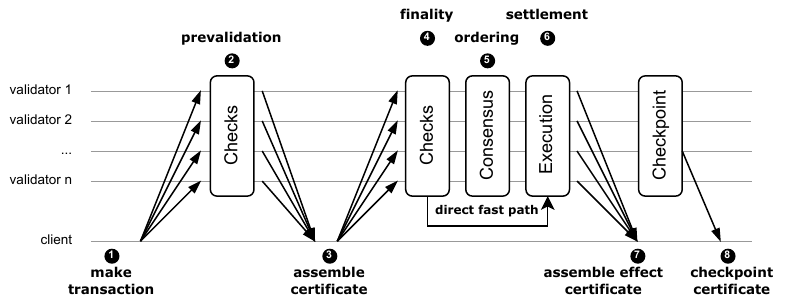}
    \caption{Overview and transaction life cycle. The Byzantine agreement protocol is only executed for transactions containing shared objects and is not necessary for transactions involving only single-owner objects.}
    \label{fig:sui-overview}
\end{figure*}

\para{Step~\one of \Cref{fig:sui-overview}: Dispatch transactions}
A user with a private key creates and signs a \emph{user transaction} to either mutate objects they own (case 1) or a mix of objects they own (at least one is gas) and shared objects (case 2). This is the only time user signature keys are needed; the remaining process may be performed by the user or a gateway on behalf of the user.

\para{Step~\two of \Cref{fig:sui-overview}: Prevalidate transactions}
The first challenge \sysname needs to overcome is transaction spamming, occurring when malicious clients submits transactions referencing non-existing or already-spent objects. These attack are currently inherent on blockchains that execute post-ordering such as DAGs~\cite{spiegelman2022bullshark,narwhal,stathakopoulou2022state,celestia} as the decoupling that increases their performance also means that they simply orders bytes which could be duplicate transactions.
In \sysname we circumvent this issue without requiring full execution by adding a prevalidation step before execution. This prevalidation step makes sure that only transactions with gas are forwarded to the resource-heavy consensus protocol.

Specifically, the signed transaction is sent to each \sysname validator in order to check it for validity. If it is valid then the validator locally locks all input owned objects using their $\objectkey$, signs it, and returns the \emph{signed transaction} to the client. 
It is critical to see that since an $\objectkey$ is locked for a specific transaction, no double-spend can happen and only one transaction per object can pass this check, preventing spams. \sysname leverages this mechanism to force clients to lock gas objects, thus limiting the ability to spam the network.

\para{Step~\three of \Cref{fig:sui-overview}: Certificate aggregation}
A second benefit of our spam prevention approach is the low cost it imposes on validators. Once they sign the transaction it is the job of the client (or any external gateway entity) to
collect the responses from a quorum ($2f+1$) of validators to form a \emph{transaction certificate}. As a result, unlike consensus-based blockchains, in \sysname, the validators do not need to gossip signatures or aggregate certificates, which is now the responsibility of the client/gateway.

A second challenge that \sysname had to address is the fact that enabling two parallel transaction processing paths has the risk of breaking the safety of the system if two transactions try to access the same object on different paths. A naive solution to this would be to shard the state into two such that owned objects only interact with owned objects and shared objects only interact with shared objects.
However, this would make the system effectively two fragmented systems that are simply running on the same hardware and cannot natively interact with one another.
Instead in \sysname we exploit the fact that both communication paths run on the same hardware to synchronize using shared memory. Specifically, every validator has locally a lock table named $\ownedlockdb$, which assigns objects to transactions during the prevalidation step. Together with the fact that all transactions need to go through the prevalidation step, we can ensure that there is no concurrent access on an $\objectkey$ (more details on the extended version of the paper~\cite{sui-lutris}).

\para{Step~\four of \Cref{fig:sui-overview}: Transaction finality}
Once the prevalidation step is successful and the client has assembled the certificate, it sends it back to all validators, that respond once they check its validity (a quorum of responses at this point ensures
\emph{transaction finality}, see \Cref{sec:bis-finality}). If the transaction involves exclusively read-only and owned objects the transaction certificate can be immediately executed and settled through a parallel execution (shown as the \emph{direct fast path} in the diagram), thus skipping step (\five) of \Cref{fig:sui-overview}. As demonstrated in \Cref{sec:evaluation}, this path typically leads to settling transactions in sub-second latencies almost 4x less than running them on consensus.

\para{Step~\five of \Cref{fig:sui-overview}: Total ordering}
All certificates are forwarded to a \emph{Byzantine agreement protocol} operated by the \sysname validators. Consensus then outputs a total order of certificates, and validators check the certificate and schedule the parallel execution of shared object transactions through a lock table (see the extended version~\cite{sui-lutris} for details, the table is named $\sharedlockdb$).

\para{Step~\six of \Cref{fig:sui-overview}: Transaction settlement}
Validators leverage the tables $\ownedlockdb$ and $\sharedlockdb$ to execute transactions in parallel. The execution result is a summary of how the transaction affects the state and is used to construct a \emph{signed effects} response. 

\para{Step~\seven of \Cref{fig:sui-overview}: Effect certificate}
Once a quorum of validators has executed the certificate, its effects are final (in the sense of \emph{settlement}, see \Cref{sec:bis-finality}). Clients can collect a quorum of validator responses, create an \emph{effects certificate}, and use it as proof of the finality of the transaction effects.

\subsection{Cost of Transactions}
The final resulting protocol, \sysname, manages to ensure that execution for transactions involving read-only and owned objects requires only reliable broadcast and a single certificate. This has a minimal $O(n)$ communication and computation cost on the critical path and requires \emph{no validator to validator communication}. Smart contract developers can therefore design their types and their operations to optimize transfers and other operations on objects of a single user to reduce the cost of their transactions.
At the same time, \sysname also provides the flexibility of using shared objects, through the Byzantine agreement path, and enables developers to implement logic that needs to be accessed by multiple users.

\subsection{Mitigating User Errors}
Equivocation of an $\objectkey$ is considered malicious in prior work~\cite{BaudetDS20, astro, zef} and permanently locks the object forever. However, equivocation is often the result of a misconfigured wallet due to poor synchronization between gateways or poor gas predictions, leading to re-submissions. As we discuss in \Cref{sec:reconfiguration}, \sysname only loses liveness for equivocated owned objects until the end of the epoch. In the new epoch, the user can try again to access the same $\objectkey$ with a fresh, hopefully correct, transaction. This is in contrast with all other consensusless blockchain that punishes all kinds of equivocation with a deadlock of the asset forever.

\section{Long-Term Stability}
Unlike previous consensus-less systems, \sysname is specifically engineered for sustained production use over extended periods. As of Jun 13, 2024 it has operated continuously with no downtime for 13 months. Achieving this requires implementing protocols that allow for the seamless removal and addition of validators -- a feature not present in earlier consensusless designs that assume infinite memory and static membership~\cite{BaudetDS20,zef, astro}.
In order to ensure long-term stability, we detail the necessary protocols, emphasizing the creation of checkpoints and the facilitation of reconfiguration.

To optimize for minimal latency, \sysname executes prior to block creation. While this enhances speed, it introduces complexity for external entities seeking proofs of transaction execution, conducting comprehensive audits, or systematically replicating the chain's state. To address these challenges, we introduce checkpoints.
Moreover, to handle client-side errors effectively, there is a need to safely unlock objects erroneously locked due to client double-spending bugs. The evolution of the validator set and voting power over time is essential to support permissionless delegated proof of stake. These functionalities are embedded in the epoch-change and reconfiguration process. This process persists all final transactions and their effects across epochs, enabling clients to unlock objects involved in partially locked transactions securely. It provides a robust mechanism for validators to enter or exit the system without compromising its liveness and performance. The period between reconfigurations represents a rare window of perfect consistency among all validators, facilitating software upgrades and the distribution of global incentives and rewards.

\subsection{Checkpoints} \label{sec:checkpoint}\label{sec:checkpoint-protocol}
%
\sysname validators emit a sequence of certified \emph{checkpoints} each
containing an agreed-upon sequence of transactions, the authorization path of the transactions, and a commitment to their effects. These form a hash-chain, which is the closest \sysname has to
the blocks of a traditional blockchain. Checkpoints are used for multiple purposes: they are gossiped from validators to full nodes to update them about the state of the chain; they are used by validators to perform synchronization in case they fall behind in execution; as well as to bring new
validators up to speed with the state at the start of each epoch. Checkpoints, packaged along with the transactions and the effects structures
they contain are also the canonical historical record of execution used for audit.

\para{Step~\eight of \Cref{fig:sui-overview}: Checkpoint creation} Upon receiving a valid certificate a correct validator records it and commits to including it in a checkpoint before the end of the epoch.
A validator schedules all certificates for sequencing using the consensus engine.
For owned transactions this does not block execution, which can happen before the transaction is sequenced; in the case of shared object transactions
execution resumes once the transaction certificate has been sequenced. 

Periodically, using a deterministic rule, validators pick a consensus commit to use as a checkpoint (our current implementation checkpoints each and every commit separately).
The new checkpoint contains all transactions present in the commits
since the last checkpoint and any additional transaction required for the execution to be causally complete. If a transaction's certificate exists more than once, the first occurrence is taken as the canonical one and the accompanied user signature is included in the checkpoint for audit purposes.
Note that due to asynchrony and
failures, the certificates sequenced may not be in causal order, or may contain `gaps', i.e., missing transaction
dependencies. Checkpoint creation waits for all transactions necessary for the checkpoint to be causally complete to be within a commit, sorts them in canonical topological order, and includes them in the checkpoint.


\subsection{Committee Reconfiguration} \label{sec:reconfiguration}

Reconfiguration occurs between epochs when the current committee is replaced by a new committee. Other changes requiring global coordination, such as software updates, and parameter updates also take effect between epochs. Immutable objects, such as system parameters or software packages, may be mutated in that period.

The goal of the reconfiguration protocol is to preserve safety of transactions between epochs, and unlock equivocated objects. To this end, we require that if a transaction $\transaction$ was committed during epoch $e$ or before, no conflicting transaction can be committed after epoch $e$. This is trivial to ensure when running only a consensus protocol since a reconfiguration event logged on-chain clearly separates transactions committed in epoch $e$ from transactions committed in epoch $e+1$. However, in \sysname, solutions are not as straightforward. Specifically, \sysname requires a final transaction at epoch $e$ with effects reflected in all subsequent epochs.

\para{Challenges and security intuition}
The main challenge for \sysname reconfiguration at each validator is the race between committing transactions and constructing checkpoints, that are running  asynchronously to each other. If a checkpoint snapshots the end-state of epoch $e$ at time $T$ and is only committed at time $T+1$, we cannot set that checkpoint as the initial state of epoch $e+1$. If we did, all transactions happening during the last timestamp are at risk of being unsafe when validators drop their $\ownedlockdb$ to allow for liveness recovery of equivocated transactions. This is not an issue for the consensus path since we can define as end state the checkpoint at time $T$ plus all transactions ordered before it is committed at $T+1$. That end state is well-defined by the total ordering property of consensus. Unfortunately, it is hard to establish a set of committed transactions on the consensus-less path.



Remember that the consensusless path works in two phases, first a transaction locks the single-owner object and produces a certificate. Then this certificate is sent as proof to the validators who reply with a signed effects certificate (execution). The safety risk during a reconfiguration is that a transaction is executed during the transition phase without the checkpoint recording it. However, \sysname splits reconfiguration into multiple steps instead of doing it atomically. As part of the last step, we pause the consensus-less path. This allows us to show that if a transaction had executed before the reconfiguration message
then there is no safety risk because we can guarantee that at least one honest party will persist the execution in the state. We enforce this by introducing an \textit{End-of-Epoch} message that the committee members of the current epoch send when they have seen all the certificates they processed in the consensus output sequence. The new committee takes over completely only after $2f+1$ such End-of-Epoch messages are ordered. As a result, \sysname manages to preserve safety without needing to block for long periods. The full proofs can be found in \Cref{sec:security}.

\para{The \sysname reconfiguration protocol} \label{sec:reconfiguration-protocol}
The \sysname reconfiguration logic is coded as the smart contract can be found in the extended version of the paper~\cite{sui-lutris}.
Once a quorum of stake for epoch $e$ votes to end the epoch, authorities exchange information to commit to a checkpoint, determine the next committee, and change the epoch.
More specifically,  reconfiguration happens in four steps, (i) \emph{stake recalculation}, (ii) \emph{ready new committee}, (iii) \emph{End-of-Epoch}, and (iv) \emph{Handover}. Each step is handled by the interaction of the new and old committee members with the smart contract functions. This is key for the correctness of the protocol.
The smart contract is parametrized with  the total number of checkpoints before initiating the epoch change protocol $\textsf{S}$ and the minimum amount of stake $\textsf{T}$ required to become a validator.

\para{Step 1: Registration of new validators}
Candidate validators in the next epoch submit a transaction to the reconfiguration contract calling the \textsc{Register} function. This function establishes the static stake distribution for the next epoch.
The smart contract accepts registrations until the $\texttt{S}^{\text{th}}$ checkpoint of the epoch is created.

\para{Step 2: Ready new committee}
Before taking over committee operations, future validators run a full node
to download the required state to become a validator and make sure their local state is consistent. 
Once a validator for the new epoch is ready to start validating, they call the \textsc{Ready} function to signal they have successfully synchronized the required state. This function can only be called towards the end of the epoch, after the creation of $\texttt{S}^{\text{th}}$ checkpoints. 
The cutoff period for the epoch is when a quorum of new validators is ready. At this point, the old committee stops signing new transactions or locking objects. 

\para{Step 3: End of epoch}
When the new committee is ready the old committee only runs consensus and their only job is to make sure all the transactions they executed are sequenced by the consensus engines (so that they are part of the next checkpoint). To this end, they stop accepting certificates from clients and instead make sure that all the certificates they have processed are sequenced by the consensus engine.
They then call the \textsc{End-of-Epoch} function.
This means that any transactions submitted by clients for this epoch are discarded with an end-of-epoch message and need to be resent with an updated epoch number.

\para{Step 4: Handover}
After $2f+1$ old validators call the \textsc{End-of-Epoch} function, the system enters the handover phase.
After an extra checkpoint 
anyone can call the function \textsc{Handover} that effectively terminates the epoch.  At this stage, the state of the smart contract is reset and the old validators can shut down. 
If a validator also participates in the new epoch, it must perform the following operations before entering the next epoch: (i) drop the temporary lock stores, and (ii) roll back the execution of any transaction that did not appear in any checkpoint so far. As discussed earlier, this is safe as these transactions were not final.

This protocol design decouples all the essential steps needed for a secure reconfiguration. This decoupling allows for the necessary logic of defining the new committee, providing the new committee sufficient time to bootstrap from the actual handover, and preserving the safety of consensusless transactions across epochs. Service interruption is thus minimized and unobservable (see \Cref{sec:evaluation}).

\section{Security Proofs of \sysname} \label{sec:security}
\sysname satisfies validity, safety, and liveness as informally described in \Cref{sec:core-properties}.
These properties hold against any polynomial-time constrained adversary under the following assumptions:
\begin{itemize}[leftmargin=*]
    \item \textbf{BFT:} Every quorum of $3f+1$ validators contains at most $f$ Byzantine validators. Correct validators of previous epochs never leak their signing keys. 
    \item \textbf{Network:} The network is partially synchronous~\cite{dwork1988consensus}. \sysname operates in the partial synchrony model only due to our use of Bullshark~\cite{bullshark} as consensus protocol (it would operate in asynchrony if we used Tusk~\cite{narwhal} instead).
\end{itemize}
%

Parts of the proofs also refer to the exact algorithms provided in the extended version of the paper~\cite{sui-lutris} for accuracy, but the reader can omit these references.

\subsection{Safety Analysis} \label{sec:safety}
We prove the safety of the \sysname protocol. That is, we show that at the start of every epoch, all correct validators have the same state.

\para{Proof intuition}
We provide an intuition of the safety of the core protocol outlined in \Cref{sec:core-protocol}. \sysname preserves two fundamental properties: (i) prevention of conflicting transactions' execution, and (ii) validators reach the same states, even when transactions on the fast path are executed in different orders.

To ensure the first property, we leverage the well-established quorum intersection argument. No honest validator would permit the certification of two conflicting transactions. Given that all transactions require certification at step (\two) of \Cref{fig:sui-overview}, and considering the existence of $\ownedlockdb$ -- a shared memory communication channel between the fast path and the consensus path on every validator -- we can confidently assert that there will be no conflicts.

Addressing the second property involves demonstrating that \sysname, as the pioneering blockchain deviating from a serial schedule, upholds the state machine replication guarantees crucial for our objectives. This assurance stems from two key factors: (i) for every owned object transaction, the $\ownedlockdb$ table, coupled with monotonically increasing version numbers of each object, ensures linearizability; and (ii) it is evident that despite the non-serial execution schedule, it remains serializable. An easily identifiable equivalent schedule is provided to external entities (e.g., full nodes), to execute transactions based on checkpoints rather than upon certificate reception. Since this schedule represents a total ordering, it inherently follows a serial structure. Furthermore, as each validator executes transactions in a non-conflicting manner aligned with this serial schedule, all schedules maintain serializability, rendering them as secure as any other serial blockchain.

We prove safety using three main ingredients: (i) correct validators execute the same set of transactions, (ii) correct validators execute those transactions in the same order whenever (partial) order matters, and (iii) the execution of those transactions causes the same state transition across all correct validators.

\para{Execution set}
We start by showing that all correct validators eventually execute the same set of transactions.

\begin{lemma}[Owned-Object Execution Set] \label{th:owned-object-execution-set}
    No correct validators have executed a different set of owned object transactions by the end of epoch $e$.
\end{lemma}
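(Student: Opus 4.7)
The plan is to show that the set of owned-object transactions that remain in the persistent state of any correct validator at epoch end is identical across all correct validators, relying on three properties: (i) \emph{uniqueness of certified transactions per owned object key}, obtained from Byzantine consistent broadcast; (ii) \emph{exhaustive sequencing}, obtained from the End-of-Epoch step of the reconfiguration protocol; and (iii) \emph{rollback of non-checkpointed executions} at handover, which pins down a canonical ``executed set'' at epoch boundaries.

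First I would invoke consistent broadcast. By check (1.4) in \Cref{alg:process-tx}, a correct validator signs at most one transaction per $\objectkey$ for every owned input, writing it to $\ownedlockdb$ atomically under the per-input mutex of \Cref{alg:line:acquire_tx_locks}. Any certificate requires $2f+1$ signatures, so by the BFT assumption and quorum intersection, two certificates $\cert \neq \cert'$ whose transactions share an owned input $\objectkey$ would force some correct validator to have signed two distinct transactions for the same $\objectkey$ in the same epoch, a contradiction. Hence within epoch $e$ there is at most one certificate per owned $\objectkey$; moreover by check (4.2) of \textsf{ProcessCert} and \Cref{alg:line:atomic-persist}, any execution at a correct validator consumes that unique $\objectkey$, so each correct validator executes at most one transaction per owned-object version.

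Next I would close the gap between ``certified somewhere'' and ``executed and persisted at every correct validator by end-of-epoch''. Here I use the reconfiguration protocol of \Cref{sec:reconfiguration-protocol}. A correct validator that has executed a certificate $\cert$ must, before emitting \textsc{End-of-Epoch}, ensure $\cert$ is sequenced by consensus (that is the contract of the End-of-Epoch step described in \Cref{sec:reconfiguration-protocol}). Since \textsc{Handover} waits for $2f+1$ \textsc{End-of-Epoch} messages, quorum intersection gives at least one correct validator that executed $\cert$ and sequenced it; by the properties of consensus every correct validator eventually observes $\cert$ in the same commit stream, and by the checkpointing rule (\Cref{sec:checkpoint-protocol}, waiting for causal completion) $\cert$ is placed in a checkpoint of epoch $e$ at every correct validator. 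Conversely, by the handover rule, any certificate a validator executed but that is \emph{not} in any checkpoint is rolled back before the next epoch starts. Thus the ``executed set at end of epoch $e$'' at every correct validator coincides with the set of certificates contained in the epoch-$e$ checkpoints, which by the argument above is identical at all correct validators.

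The main obstacle I expect is the race between execution and checkpointing: a correct validator may have executed a certificate that is not yet sequenced when another validator is ready to close the epoch, and the atomicity of \textsf{AtomicPersist} must be reconciled with the possibility of rolling back at handover. I would handle this by carefully defining the ``executed set at end of epoch $e$'' as the content of the $\certdb$ restricted to certificates that appear in the epoch-$e$ checkpoint sequence (i.e., after the handover rollback), and showing that the End-of-Epoch message obligation forbids a correct validator from ever equivocating between ``I have executed $\cert$'' and ``$\cert$ is not in any checkpoint''. The uniqueness from step one then ensures that two such checkpoint-agreed sets cannot differ on any owned $\objectkey$, giving the lemma.
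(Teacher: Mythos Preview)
Your plan arrives at the right conclusion, and your final paragraph --- defining the end-of-epoch executed set as the post-rollback checkpoint content --- is exactly the move the paper makes. But the route you take to get there is longer than necessary, and one intermediate step is not justified as written.

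The paper's proof is essentially three lines: (a) by the agreement property of consensus, all correct validators observe the same sequence of committed certificates; (b) checkpoint construction is deterministic in that sequence, so all correct validators build identical checkpoints; (c) correct validators execute everything in their checkpoints and, at handover, revert any owned-object execution not in a checkpoint. Hence the end-of-epoch executed set equals the (common) checkpoint set at every correct validator. Consistent broadcast is not invoked at all.

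Your step (i) is therefore unnecessary here. Uniqueness of certificates per owned $\objectkey$ is what the paper uses in separate lemmas (\Cref{th:bcb-consistency}, \Cref{th:sequencial-exec-owned}) to argue about execution \emph{order}; it plays no role in showing the executed \emph{sets} coincide. Once you have checkpoint agreement and rollback, the sets are equal regardless of how many certificates could touch a given $\objectkey$.

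Your step (ii) is where the detour becomes shaky. You write that ``quorum intersection gives at least one correct validator that executed $\cert$ and sequenced it'', but you have only assumed a \emph{single} correct validator executed $\cert$; there is no second quorum to intersect with the $2f+1$ End-of-Epoch signers. In fact the statement you are reaching for --- that any certificate executed by \emph{some} correct validator necessarily lands in an epoch-$e$ checkpoint --- is not true in general: that validator may lie outside the $2f+1$ whose End-of-Epoch messages trigger handover, and its certificate may simply be rolled back. The paper only proves the analogous inclusion when an \emph{effect certificate} exists (so $2f+1$ have executed), and does so separately in \Cref{th:client-safety}. For the present lemma none of this is needed: rollback already forces each validator's end-of-epoch executed set to equal its checkpoint set, and those sets agree by (a) and (b) above.
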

\begin{proof}
    Correct validators assemble checkpoints by observing the sequence of transaction certificates committed in consensus (\Cref{sec:reconfiguration-protocol}). By the agreement properties of the consensus protocol, all correct validators obtain the same sequence of certificates.
    They then use those ordered certificates to construct the same checkpoints since creating checkpoints is a deterministic process given the certificate sequence input.
    Correct validators execute all transactions within the checkpoints they assemble.
    Additionally, at the end of every epoch validators revert the execution of any owned-object transaction not included in a checkpoint (see \Cref{sec:checkpoint-protocol}), hence only owned-object transactions included in a checkpoint persist.
    Reverting these transactions is safe since final transactions are eventually all included in a checkpoint within the epoch (see \Cref{th:client-safety}).
\end{proof}

\begin{lemma}[Shared-Object Execution Set] \label{th:shared-object-execution-set}
    No correct validators have executed a different set of shared object transactions by the end of epoch $e$.
\end{lemma}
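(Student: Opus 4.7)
The plan is to reduce this lemma to the agreement property of the underlying consensus engine combined with the reconfiguration barrier established in Section 5.3. The first step is to observe that, by the structure of \textsc{ProcessCert} in \Cref{alg:process-cert}, a correct validator executes a shared object certificate $\cert$ only after (i) consensus has delivered $\cert$, (ii) \textsc{AssignSharedLocks} has been invoked (line for \textsc{WriteSharedLocks}), and (iii) \textsc{CheckSharedLocks} succeeds. Thus every shared object transaction executed by any correct validator during epoch $e$ lies in the ordered output of the consensus protocol, and by the agreement property of Bullshark all correct validators see the same prefix of that output.

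The second step is to argue that the shared-lock assignment, and therefore the set of certificates that pass \textsc{CheckSharedLocks}, is a deterministic function of the consensus sequence. Since \textsc{AssignSharedLocks} is executed by a single task (per the protocol), consumes certificates in consensus order, and updates $\nextsharedlockdb$ via the Lamport timestamp rule deterministically, any two correct validators that have consumed the same prefix of consensus output maintain identical $\sharedlockdb$ and $\nextsharedlockdb$ states, and hence the same set of shared object certificates becomes eligible for execution. Combined with \Cref{th:valid-certificates}, each eligible certificate executes deterministically on both replicas.

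The third step is to pin down the epoch cut. By the End-of-Epoch protocol (Section~5.3, Step~3 and \Cref{alg:reconfiguration}), an old validator only emits its End-of-Epoch message once every certificate it has processed has been sequenced, and the system only transitions to \textsf{Handover} after $2f+1$ such messages are ordered, with one additional checkpoint produced before \textsc{Handover} fires. By quorum intersection, any shared object certificate accepted by a correct validator in epoch $e$ is observed by at least one honest End-of-Epoch signer and is therefore guaranteed to appear in the consensus output before the epoch closes, and thus in a checkpoint of epoch $e$ by \Cref{sec:checkpoint-protocol}. Conversely, any certificate appearing in the consensus output after the $(2f+1)^{\text{th}}$ End-of-Epoch is rejected by Check 4.1 in \Cref{alg:process-cert}, so it is not executed in epoch $e$ by any correct validator.

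Putting these pieces together, the set of shared object transactions executed in epoch $e$ at a correct validator equals the set of shared object certificates that (a) appear in the common consensus prefix up to the Handover point, and (b) pass the deterministic \textsc{CheckSharedLocks}; both are identical across correct validators. The main obstacle I expect is the careful treatment of stragglers: a correct validator may lag in executing a sequenced-but-not-yet-executed shared object certificate at the epoch boundary. The argument must invoke the rule from \Cref{sec:checkpoint-protocol} that a correct validator does not finish the epoch until all certificates it received are sequenced into checkpoints, together with the fact that causal prerequisites are guaranteed to land in the same checkpoint, so any outstanding shared object certificate will be executed before the epoch's handover completes, matching the set at every other correct replica.
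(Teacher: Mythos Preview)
Your core reduction to consensus agreement matches the paper's approach. The paper's own proof is two sentences: correct validators execute all shared-object transactions sequenced by consensus before the last checkpoint of epoch $e$; by the agreement property of consensus all correct validators obtain the same sequence, hence the same set. Your Steps~2 and~3 are unnecessary for the \emph{set} claim: since correct validators execute every sequenced shared-object certificate before the last checkpoint, set equality follows from agreement alone, without invoking the determinism of \textsc{AssignSharedLocks}/\textsc{CheckSharedLocks} (that is the separate content of \Cref{th:shared-locks-consistency}, used later for \emph{order} in \Cref{th:sequencial-exec-shared}) or a quorum-intersection argument at the epoch boundary.

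Your Step~3 also contains a concrete inaccuracy. Check~4.1 tests $\txepoch{\cert}$ against the current epoch, so a certificate created in epoch $e$ that is sequenced after the $(2f{+}1)^{\text{th}}$ \textsc{End-of-Epoch} message but before \textsc{Handover} still carries epoch $e$ and passes that check; indeed the protocol emits at least one further checkpoint after those messages (\Cref{alg:line:one-more} of \Cref{alg:reconfiguration}). The correct epoch cut is the last checkpoint of epoch $e$, and certificates past that point are excluded because validators shut down or transition at \textsc{Handover}, not because Check~4.1 rejects them.
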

\begin{proof}
    Correct validators execute all shared object transactions sequenced by the consensus protocol before the last checkpoint of epoch $e$ (see \Cref{sec:reconfiguration}). By the agreement property of the consensus protocol, all correct validators obtain the same sequence and thus execute the same set of shared-object transactions.
\end{proof}

\para{Execution order}
We now show that correct authorities execute conflicting transactions in the same order.
\begin{lemma}[BCB Consistency] \label{th:bcb-consistency}
    No two conflicting transactions, namely transactions sharing the same owned inputs objects, object version, and epoch, are certified.
\end{lemma}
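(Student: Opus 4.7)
The plan is a standard quorum-intersection argument combined with a careful look at the atomicity of the lock check in \textsf{ProcessTx} (\Cref{alg:process-tx}). Suppose for contradiction that two conflicting transactions $\transaction$ and $\transaction'$ (with $\transaction \neq \transaction'$) are both certified within the same epoch $e$, and that they share an owned input with the same $\objectkey = (\objectid, \objectversion)$. Each certificate carries signatures from a quorum of at least $2f+1$ validators. Since the total number of validators in the epoch is $3f+1$, by quorum intersection the two signer sets overlap in at least $f+1$ validators, hence in at least one \emph{correct} validator $V$. The remainder of the proof shows that $V$ cannot have legitimately signed both.

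Next I would trace what $V$ does. By assumption, $V$ returned a partial certificate $\txsign$ for both $\transaction$ and $\transaction'$, so $V$ completed every check of \Cref{alg:process-tx} for each of them. In particular, for each transaction $V$ reached check 1.4 and, for the shared owned input $\objectkey$, found $\ownedlockdb[\objectkey]$ to be either \textsf{None} or equal to the signed value it was about to write. I would then argue by cases on the order in which $V$ processed the two transactions. Without loss of generality, suppose $V$ signed $\transaction$ first. Then after check 1.4 completes for $\transaction$, we have $\ownedlockdb[\objectkey] = \sign{\transaction}$ (set at \Cref{alg:line:assign_owned_lock}), and this entry is never subsequently reset to \textsf{None} within the epoch except when the corresponding certificate is executed and its output objects are persisted (see \textsf{AtomicPersist} in \Cref{alg:storage-support}), which would create a fresh $\objectkey$ at a strictly higher \objectversion and so cannot affect the key under consideration.

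Therefore when $V$ later runs \textsf{ProcessTx} on $\transaction'$, at check 1.4 it reads $\ownedlockdb[\objectkey] = \sign{\transaction} \neq \sign{\transaction'}$, so the branch at \Cref{alg:line:no-conflict} triggers and $V$ returns an error instead of a signature. This contradicts the assumption that $V$ signed $\transaction'$.

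The main obstacle is handling the concurrent invocation of \textsf{ProcessTx}: many tasks may call it on different transactions touching the same $\objectkey$ simultaneously, so the read-then-write on $\ownedlockdb[\objectkey]$ could in principle race. This is precisely what the per-input mutex \textsf{AcquireLocks} at \Cref{alg:line:acquire_tx_locks} rules out: any two invocations that contend on $\objectkey$ are serialized, so the check and write of check 1.4 are effectively atomic on that key. I would state this explicitly and invoke the key self-consistency property of $\ownedlockdb$ declared in \Cref{sec:datastructures}. With that, the case analysis in the previous paragraph is exhaustive, completing the contradiction and establishing the lemma.
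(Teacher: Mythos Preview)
Your proof is correct and follows essentially the same quorum-intersection argument as the paper: both reduce to the impossibility of a single correct validator passing check~1.4 of \Cref{alg:process-tx} for two conflicting transactions on the same $\objectkey$ within an epoch. The paper phrases the counting dually (two disjoint sets of $f{+}1$ correct signers exceeding $2f{+}1$) and is terser about why check~1.4 is atomic, whereas you make the mutex at \Cref{alg:line:acquire_tx_locks} and the persistence of $\ownedlockdb[\objectkey]$ explicit; this extra care is welcome but does not constitute a different route.
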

\begin{proof}
    The proof of this lemma directly follows from the consistency property of Byzantine consistent broadcast (BCB)~\cite{cachin2011introduction} over the label $(e, \ObjID, \Version)$.
    Let's assume two conflicting transactions $\transaction_A$ and $\transaction_B$ taking as input the same owned object $\Obj$ with version $\Version$ are certified during the same epoch $e$.
    Then $f+1$ correct validators produced $\txsign_A$ and $f+1$ correct validators did the same and produced $\txsign_B$.
    A correct validator cannot successfully process both (conflicting) $\transaction_A$ and $\transaction_B$. As a result, a set of $f+1$ correct validators produced $\txsign_A$ but not $\txsign_B$, and a distinct set of $f+1$ correct validators produced $\txsign_B$ but not $\txsign_A$.
    Hence there should be $f+1+f+1=2f+2$ correct validators in addition to the $f$ byzantine. However, $N=3f+1 < 3f+2$ hence a contradiction.
\end{proof}


\begin{lemma}[Shared-Locks Consistency] \label{th:shared-locks-consistency}
    The shared lock stores 
    of correct validators are never conflicting; that is, the shared lock stores of correct validators are either identical or a subset of each other.
\end{lemma}
\begin{proof}
    Let's assume two correct validators update their shared lock stores 
    by assigning different version numbers to a shared object $\Obj$ of a certificate $\cert$.
    Correct validators only assign a version number to $\Obj$ after sequencing $\cert$ through consensus.
    Once consensus outputs the call, a local function named \textsc{AssignSharedLocks} which is deterministic. Consequently, the version number assigned to $\Obj$ depends only on the consensus output sequence.
    %
    %
    As a result, two correct validators assign a different version to $\Obj$ only if they receive a different consensus output sequence. However, by the agreement property of the consensus, all correct validators received the same output sequence, hence a contradiction.
\end{proof}

\begin{lemma}[Owned Objects Sequential Execution] \label{th:sequencial-exec-owned}
    If two certificates $\cert$ and $\cert'$ both take as input the same owned object $\Obj$ (and no shared objects), all correct validators execute $\cert$ and $\cert'$ in the same order.
\end{lemma}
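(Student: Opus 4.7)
The plan is to leverage \Cref{th:bcb-consistency} (BCB Consistency) to force $\cert$ and $\cert'$ to act on distinct versions of $\Obj$, and then show that execution of a certificate taking a given version of $\Obj$ strictly requires prior execution of the unique certificate that produced that version.

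First, I would dispose of the easy case where $\cert$ and $\cert'$ belong to different epochs: the reconfiguration protocol of \Cref{sec:reconfiguration-protocol} ensures that all certificates from epoch $e$ are incorporated into a checkpoint before any certificate from epoch $e+1$ is processed, so correct validators execute them in increasing epoch order. Hence, it suffices to assume both certificates are in the same epoch $e$.

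Next, since $\cert$ and $\cert'$ both take $\Obj$ as input in epoch $e$, by \Cref{th:bcb-consistency} they cannot take the same $(\ObjID, \Version)$ pair, so they use distinct versions. Write $\cert$ as taking input $(\Obj, v)$ and $\cert'$ as taking input $(\Obj, v')$; without loss of generality, $v < v'$. The key structural observation I would then establish is that BCB Consistency (applied to every version of $\Obj$) implies that the sequence of versions of $\Obj$ is a \emph{linear chain}: for each version $v_i$, there is at most one certified transaction in epoch $e$ whose input contains $(\Obj, v_i)$, and this transaction produces exactly one successor version $v_{i+1}$ via the Lamport timestamp rule of \Cref{alg:line:lamport-timestamp} in \Cref{alg:process-cert}. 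Iterating this, there is a unique ascending chain $v = v_0 < v_1 < \dots < v_k = v'$ of versions of $\Obj$ and a unique sequence of certified transactions linking them; in particular $\cert$ is the unique certificate in this chain that takes $(\Obj, v)$ as input.

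Finally, I would show that any correct validator must execute $\cert$ before $\cert'$. By \Cref{alg:line:process-cert-load-owned-objects} of \Cref{alg:process-cert}, a correct validator processing $\cert'$ first calls \textsf{LoadObjects} on $\txinputs{\cert'}$, which requires the pair $(\Obj, v')$ to be present in $\objdb$; if it is absent the validator aborts with an error and does not execute. The only way for $(\Obj, v')$ to appear in $\objdb$ at a correct validator is via the atomic write in \textsf{AtomicPersist} (\Cref{alg:storage-support}), executed at the end of \textsc{ProcessCert} for the unique certificate producing version $v'$. Unwinding the linear chain, this forces the validator to have previously executed every certificate in the chain, and in particular $\cert$. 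Hence every correct validator executes $\cert$ before $\cert'$.

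The main obstacle I anticipate is formalizing the \emph{linear chain} claim cleanly, i.e.\ showing that BCB Consistency, combined with the determinism of the Lamport-timestamped version assignment and the uniqueness of the transaction that mutates a given version, forces all certified transactions touching $\Obj$ in epoch $e$ to form a totally ordered sequence with no branching. Once this structural property is stated as a small auxiliary lemma, the execution-order argument reduces to a straightforward inductive appeal to the \textsf{LoadObjects} precondition.
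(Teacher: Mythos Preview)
Your proposal is correct and rests on the same two ingredients as the paper's proof: \Cref{th:bcb-consistency} to separate the versions of $\Obj$ used by $\cert$ and $\cert'$, and the \textsf{LoadObjects} precondition (check~4.2 of \Cref{alg:process-cert}) to tie execution order to version order. The structure, however, differs. The paper gives a short proof by contradiction: assuming two correct validators execute $\cert$ and $\cert'$ in opposite orders, it observes that each execution order forces a strict inequality between the input versions of $\Obj$ (because check~4.2 only succeeds along monotonically increasing Lamport timestamps), obtains incompatible inequalities, and closes with an appeal to \Cref{th:bcb-consistency}. You instead give a direct argument: you fix $v<v'$, build the linear chain of certified transactions on $\Obj$ linking $v$ to $v'$, and show that the \textsf{LoadObjects}/\textsf{AtomicPersist} coupling forces every validator to traverse the chain, hence to execute $\cert$ before $\cert'$. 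Your route is more explicit---the chain lemma effectively \emph{proves} the ``monotone versions'' claim the paper asserts in one sentence---and your separate treatment of the cross-epoch case is more careful than the paper's one-line invocation of check~(4.1); the cost is the extra auxiliary lemma, which the paper's contradiction sidesteps entirely.
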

\begin{proof}
    Let's assume two correct validators $v$ and $v'$ execute in different orders $\cert$ and $\cert'$ taking the same input object $\Obj$. That is, $v$ executes $\cert$ then $\cert'$, and $v'$ executes $\cert'$ then $\cert$.
    We argue this lemma by contradiction of \Cref{th:bcb-consistency}.
    A correct validator only executes certificates by following the sequence of monotonically increasing version numbers. 
    As a result, since $v$ executes $\cert$ before $\cert'$, it follows that $\version{\Obj}$ of $\cert$ is strictly lower than  $\version{\Obj}$ of $\cert'$.
    Similarly, since $v'$ executes $\cert'$ before $\cert$ it follows that $\version{\Obj}$ of $\cert'$ is strictly lower than  $\version{\Obj}$ of $\cert$.
    This implies that both $\cert$ and $\cert$ take as input $\Obj$ with the same version number. We finally note that correct validators only execute certificates valid for the current epoch 
    , thus $\cert$ and $\cert'$ share the same epoch number.
    As a result, $\cert$ and $\cert$ share the input $\Obj$ for the same version and epoch number and are thus conflicting.  This is a direct contradiction of \Cref{th:bcb-consistency}.
\end{proof}

\begin{lemma}[Shared Objects Sequential Execution] \label{th:sequencial-exec-shared}
    If two certificates $\cert$ and $\cert'$ both take as input the same shared object $\Obj$, all correct validators execute $\cert$ and $\cert'$ in the same order.
\end{lemma}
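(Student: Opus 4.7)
The plan is to reduce shared-object execution ordering to (i) a consistent assignment of shared-lock versions across correct validators and (ii) a strictly sequential gating rule at execution time. I would first invoke \Cref{th:shared-locks-consistency} (Shared-Locks Consistency) to conclude that every correct validator assigns the same version $v$ to $\sharedlockdb[(\txdigest{\cert}, \Obj)]$ and the same version $v'$ to $\sharedlockdb[(\txdigest{\cert'}, \Obj)]$. From \Cref{th:shared-object-execution-set} (Shared-Object Execution Set), every correct validator eventually executes both $\cert$ and $\cert'$, so the ordering statement is not vacuous; it remains to show that this common ordering is forced.

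Next I would argue $v \neq v'$. Since \textsc{AssignSharedLocks} is invoked by a single task on the consensus output sequence (which is identical at every correct validator by the agreement property of Bullshark), the two calls happen one after the other in some fixed order. Inside \textsc{WriteSharedLocks}, the assignment to $\sharedlockdb[(\cdot, \Obj)]$ reads the current $\nextsharedlockdb[\Obj]$ and then overwrites it with $v_{\max}+1$, where $v_{\max}$ is a Lamport timestamp that dominates the value just read. Hence whichever of $\cert, \cert'$ is processed first receives the strictly smaller shared-lock version. Without loss of generality, assume $v < v'$.

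The core step is then to show that $\cert$ must execute before $\cert'$ at every correct validator. Check (4.3) of \Cref{alg:process-cert} invokes \textsc{CheckSharedLocks}, which only lets a certificate proceed when $\version{\objdb[\Obj]}$ equals its recorded shared-lock version; for $\cert'$ this value is $v'$. The version of $\Obj$ in $\objdb$ is monotonically advanced exclusively by \textsc{AtomicPersist} (\Cref{alg:storage-support}) and always to the Lamport timestamp of the executing certificate. Combined with the strict monotonicity of $\nextsharedlockdb[\Obj]$ established above, this implies that $\objdb[\Obj]$ passes through version $v$ strictly before it can reach $v'$, and that the certificate that advances $\Obj$ past $v$ is exactly $\cert$ (any other certificate with shared-lock version in $(v, v')$ for $\Obj$ can only execute after $\cert$ by the same argument, yielding an inductive chain). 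Therefore $\cert$ executes before $\cert'$ at every correct validator.

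The main obstacle I anticipate is the careful bookkeeping in the monotonicity step: one must verify that the intertwined updates to $\nextsharedlockdb[\Obj]$, $\sharedlockdb[(\cdot, \Obj)]$, and the $\version{\cdot}$ field set by \textsc{AtomicPersist} line up so that no intermediate object version between $v$ and $v'$ can appear in $\objdb$ without first executing $\cert$. This hinges on the fact that both \textsc{WriteSharedLocks} and \textsc{AtomicPersist} use the same Lamport-timestamp rule (\Cref{alg:line:lamport-timestamp} of \Cref{alg:process-cert}), so the ``next version'' expected by a cert's shared lock always coincides with the ``new version'' written by its immediate shared-object predecessor. Once this alignment is spelled out, the rest of the proof is a direct application of \Cref{th:shared-locks-consistency} and the agreement property of consensus.
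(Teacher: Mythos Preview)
Your proposal is correct and rests on the same two ingredients as the paper's proof: \Cref{th:shared-locks-consistency} and the strict monotonicity of shared-object versions induced by the Lamport-timestamp rule in \textsc{AssignSharedLocks}/\textsc{AtomicPersist}. The only structural difference is that the paper argues by contradiction---assuming two correct validators execute $\cert$ and $\cert'$ in opposite orders, inferring (via the \textsc{CheckSharedLocks} gate and monotone versions) that their $\sharedlockdb$ entries for $\Obj$ must then be ordered oppositely, and concluding that the two stores conflict, contradicting \Cref{th:shared-locks-consistency}---whereas you argue directly, fixing the common lock versions $v<v'$ first and then showing \textsc{CheckSharedLocks} forces the execution order. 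Your direct route is slightly more explicit about the \textsc{CheckSharedLocks}/\textsc{AtomicPersist} alignment (the ``bookkeeping'' you flag), while the paper's contradiction form is terser; both are equally valid and neither needs your extra invocation of \Cref{th:shared-object-execution-set}, since the lemma's conclusion is conditional on both certificates being executed.
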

\begin{proof}
    Let's assume two correct validators $v$ and $v'$ execute in different orders $\cert$ and $\cert'$ taking the same input shared object $\Obj$. That is, $v$ executes $\cert$ then $\cert'$, and $v'$ executes $\cert'$ then $\cert$.
    We note $\transactiondigest = \txdigest{\cert}$, $\transactiondigest' = \txdigest{\cert'}$, and call $\ObjID$ the identifier of $\Obj$.
    We argue this lemma by contradiction of \Cref{th:shared-locks-consistency}.
    First, we note that Lamport timestamps ensure correct validators always assign strictly increasing version numbers to shared objects.
    Since $v$ executes $\cert$ before $\cert'$, its $\sharedlockdb$ store holds the following two entries:
    \begin{eqnarray} \nonumber
        \sharedlockdb[(\transactiondigest, \ObjID)] &=& \Version \\ \nonumber
        \sharedlockdb[(\transactiondigest', \ObjID)] &=& \Version'
    \end{eqnarray}
    with $\Version < \Version'$.
    Similarly, since $v'$ executes $\cert'$ before $\cert$, its $\sharedlockdb$ store holds the following two entries: \begin{eqnarray} \nonumber
        \sharedlockdb[(\transactiondigest', \ObjID)] &=& \Version' \\ \nonumber
        \sharedlockdb[(\transactiondigest, \ObjID)] &=& \Version
    \end{eqnarray}
    with $\Version' < \Version$.
    This however means that the stores of $v$ and $v'$ conflict, which is a direct contradiction of \Cref{th:shared-locks-consistency}.
\end{proof}

\para{State transistions}
We finally show that correct authorities executing the same sequence of certified transitions end up in the same state.

\begin{lemma}[Objects Identifiers Uniqueness] \label{th:objects-id-uniqueness}
    No polynomial-time constrained adversary can create two objects with the same identifier $\objectid$ without two successful invocations of $\txexec{\cert}$ over the same certificate $\cert$.
\end{lemma}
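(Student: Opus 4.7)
The plan is to reduce uniqueness of freshly-minted $\objectid$s to the collision resistance of the cryptographic hash used to derive them, combined with the determinism of $\txexec$. First I would recall from \Cref{sec:objects-operations} that $\objectid$ is cryptographically derived (``so that finding collisions is infeasible'') and, crucially, that fresh object identifiers are produced only inside $\txexec{\cert, [\object_o], [\object_s]}$ during the $\created$ and $\unwrapped$ steps. No other operation in Algorithms~\ref{alg:process-tx}--\ref{alg:process-cert} mints identifiers; so if an adversary ever obtains an $\object$ whose $\objectid$ appears twice in $\objdb$, some correct validator must have executed $\txexec$ whose output set $[\Obj_{out}]$ contains that $\objectid$.

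Next I would make explicit the form of the derivation: each newly created object has its $\objectid$ computed as $H(\txdigest{\cert}, k)$ where $H$ is the collision-resistant hash function used by the platform and $k$ is a position-dependent tag (the creation counter within the transaction, or the parent object reference for unwrapped children). Using this representation, I would argue uniqueness by a case split on two invocations $\txexec{\cert_1,\ldots}$ and $\txexec{\cert_2,\ldots}$ producing colliding output identifiers. If $\txdigest{\cert_1} \neq \txdigest{\cert_2}$, or the tags $k_1 \neq k_2$, then the collision is a hash collision in $H$, which contradicts the assumption that the adversary is polynomial-time bounded. The remaining case, $\txdigest{\cert_1}=\txdigest{\cert_2}$ and $k_1=k_2$, forces the invocations to be over the same certificate (since $\transaction$ is uniquely determined by its digest via a second collision-resistance argument), which is precisely the exception allowed by the lemma statement.

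The main obstacle will be handling the $\unwrapped$ and child-object cases carefully, since here the identifier is not freshly sampled but is re-introduced from a previously created object that was wrapped inside another object. I would dispatch this by induction on the creation history: every $\objectid$ that ever becomes an input to $\txexec$ was, at some earlier point in time, emitted by a successful $\txexec$ invocation over some ancestor certificate, and so its uniqueness is inherited from that earlier invocation by the induction hypothesis. Unwrapping merely re-exposes an existing identifier and cannot cause two distinct $\txexec$ invocations to mint the same $\objectid$ without, again, inducing a hash collision on $H$.

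Finally, I would close by combining the two directions: either the two objects were produced by invocations of $\txexec$ on two certificates whose digests (together with creation tags) differ, which contradicts collision resistance of $H$; or both were produced by $\txexec$ on a certificate with identical digest and tag, which by injectivity of $\txdigest{\cdot}$ means two invocations of $\txexec$ on the same $\cert$, matching the lemma's exception. Hence no efficient adversary can produce duplicate $\objectid$s otherwise, as required.
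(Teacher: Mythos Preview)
Your proposal is correct and follows essentially the same approach as the paper: both reduce uniqueness of $\objectid$ to collision resistance of the hash $H(\txdigest{\cert}, k)$, arguing that distinct certificates (or distinct indices) force a hash collision. Your treatment is more thorough than the paper's three-sentence proof---in particular, the paper does not spell out the case split on $\txdigest{\cert_1}$ versus $\txdigest{\cert_2}$, nor does it separately handle the $\unwrapped$ case via induction on creation history---but these are elaborations of the same core reduction rather than a different route.
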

\begin{proof}
    We argue this lemma by the construction of the object identifier $\objectid$. \sysname derives each objects identifier $\objectid$ by hashing the digest $\txdigest{\cert}$ of the certificate creating the object along with an index unique to each input object of $\cert$. The adversary thus needs to find a hash collision to generate the same $\objectid$ twice through the invocation of $\txexec{\cert}$ and $\txexec{\cert'}$, where $\cert \neq \cert'$.
\end{proof}

\begin{lemma}[Deterministic Execution] \label{th:deterministic-exec}
    Every correct validator executing the same set of certificates makes the same state transitions.
\end{lemma}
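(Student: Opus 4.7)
The plan is to prove this by induction on the size of the set of executed certificates, leveraging the determinism of the Move VM together with the ordering and uniqueness lemmas already established. I would fix two correct validators $v$ and $v'$ that have both executed the same set $\mathcal{C}$ of certificates and argue that their resulting $\objdb$, $\ownedlockdb$, $\sharedlockdb$, and $\nextsharedlockdb$ stores coincide. The base case is the empty set, where the initial state at the start of an epoch is fixed by the last checkpoint of the previous epoch, which by the agreement property of consensus (and the checkpoint construction rule of \Cref{sec:checkpoint-protocol}) is identical across correct validators.

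For the inductive step, I would show that given any prefix of executions on which $v$ and $v'$ agree, processing an additional certificate $\cert \in \mathcal{C}$ produces identical effects. The inputs $[\Obj_o]$ and $[\Obj_s]$ loaded by \textsf{LoadObjects} at \Cref{alg:line:process-cert-load-owned-objects,alg:line:process-cert-load-shared-objects} of \Cref{alg:process-cert} are fully determined by $\txinputs{\cert}$, $\txsharedinputs{\cert}$, and the current $\sharedlockdb$ mapping, which by \Cref{th:shared-locks-consistency} is consistent between $v$ and $v'$. Since $\txexec{\cert, [\Obj_o], [\Obj_s]}$ is infallible and deterministic on its inputs (\Cref{sec:messages}), the returned $\esign$ and $[\Obj_{out}]$ are identical, and \Cref{th:objects-id-uniqueness} guarantees that any freshly created identifiers do not collide with existing keys of $\objdb$, so \textsc{AtomicPersist} produces matching store updates.

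The subtle point is that $v$ and $v'$ need not execute the certificates in $\mathcal{C}$ in the same total order, so I must argue that any two valid serialisations yield the same final store. I would introduce a happens-before partial order on $\mathcal{C}$ induced by shared input objects: whenever two certificates share an input owned or shared object, \Cref{th:sequencial-exec-owned,th:sequencial-exec-shared} pin their relative execution order at every correct validator. Certificates that share no input object touch disjoint keys of $\objdb$, and their Lamport-assigned output versions fall in disjoint slots by \Cref{th:objects-id-uniqueness}, so their effects commute. A standard confluence argument then lets me swap adjacent commuting executions to transform any linearisation used by $v$ into the one used by $v'$ without altering the final state, and the induction closes.

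The main obstacle I anticipate is the bookkeeping for the auxiliary stores rather than the core VM determinism: $\sharedlockdb$ and $\nextsharedlockdb$ evolve via \textsc{AssignSharedLocks}, which is invoked from the consensus output sequence rather than from the execution loop, so I need to argue separately that these tables evolve identically at $v$ and $v'$. This reduces to noting that \textsc{AssignSharedLocks} is a deterministic function of the consensus output and the prior contents of $\nextsharedlockdb$, combined with the agreement property of consensus; once this is established, the commuting-execution argument above is unaffected by when exactly each validator invokes the procedure.
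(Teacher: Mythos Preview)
Your proposal is correct, but it is substantially more elaborate than what the paper actually does. The paper's proof of this lemma is only three sentences: it observes that each certificate is executed via \textsc{ProcessCert}, that $\txexec{\cert}$ invokes the Move VM whose determinism (and type checker) guarantees identical $[\Obj_{out}]$ for identical inputs, and that \textsc{AtomicPersist} commits the result atomically so crash--recovery cannot corrupt state. That is the entire argument; the paper does not perform an induction, does not invoke \Cref{th:shared-locks-consistency} or \Cref{th:objects-id-uniqueness} here, and does not carry out any confluence or commuting-executions reasoning inside this lemma.

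The reason the paper can be so terse is its decomposition: the ordering lemmas (\Cref{th:sequencial-exec-owned,th:sequencial-exec-shared}) are established separately and are only combined with this lemma later, in the proof of \Cref{th:safety}. In that composition, the paper effectively assumes the relevant partial order is already fixed before it appeals to deterministic execution, so all this lemma has to supply is ``same inputs $\Rightarrow$ same outputs'' for a single call to $\txexec{\cdot}$. Your approach instead internalises the ordering question: you take the lemma's phrasing (``same \emph{set}'' rather than ``same \emph{sequence}'') at face value and build a self-contained confluence argument showing that any two linearisations compatible with the happens-before order yield the same final stores. This is a genuinely different route; it makes the lemma stand on its own and arguably matches its statement more faithfully, at the cost of duplicating work that the paper pushes to \Cref{th:safety}. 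Your separate treatment of \textsc{AssignSharedLocks} via consensus agreement is also something the paper leaves implicit.
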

\begin{proof}
    Every certificate $\cert$ in the sequence is executed by calling the the Move VM to produce the set of the newly created or mutated objects $[\Obj_{out}]$. The determinism of the Move VM and the correctness of its type checker ensures that every correct validator calling $\txexec{\cert}$ with the same input $\cert$ produces the same $[\Obj_{out}]$.
    %
    %
\end{proof}

\para{\sysname safety}
We now prove the safety of \sysname using the previous lemmas.

\begin{theorem}[\sysname Safety] \label{th:safety}
    At the start of every epoch, all correct validators have the same state.
\end{theorem}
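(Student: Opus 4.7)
The plan is to proceed by induction on the epoch number $e$. For the base case, all correct validators start epoch $0$ with the same genesis state by assumption. For the inductive step, assume that all correct validators begin epoch $e$ with identical state; we will show they begin epoch $e+1$ with identical state as well.

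First, I would characterize precisely the set of transactions that constitute the state transition between the two epoch boundaries. By \Cref{th:owned-object-execution-set} and \Cref{th:shared-object-execution-set}, all correct validators execute exactly the same set $\mathcal{T}_e$ of owned- and shared-object certificates during epoch $e$ that persist into epoch $e+1$ (recall that any owned-object transaction not included in a checkpoint is rolled back as part of the handover step of \Cref{sec:reconfiguration-protocol}, and this rollback is deterministic given the sequence of checkpoints).

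Next, I would argue that the execution order of $\mathcal{T}_e$ at any correct validator, although not globally total, yields the same final state. Concretely, define a conflict relation on $\mathcal{T}_e$ where $\cert$ and $\cert'$ conflict if they share an input object (owned or shared). By \Cref{th:sequencial-exec-owned} and \Cref{th:sequencial-exec-shared}, every pair of conflicting certificates is executed in the same relative order at every correct validator. Since non-conflicting certificates operate on disjoint input objects, their relative execution order is irrelevant for the resulting object map. Thus any two serializations consistent with the conflict order induce the same state transition. \Cref{th:objects-id-uniqueness} ensures that newly created objects carry the same identifiers across validators, so no spurious collisions arise when composing transitions.

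Finally, I would apply \Cref{th:deterministic-exec} certificate-by-certificate along any such serialization to conclude that, starting from the common state at the beginning of epoch $e$, each validator reaches the same $\objdb$ contents at the end of epoch $e$. Combined with the deterministic reset of the auxiliary stores $\ownedlockdb$ and $\sharedlockdb$ prescribed by the handover step (\Cref{sec:reconfiguration-protocol}), this gives identical validator state at the start of epoch $e+1$. The main obstacle I expect is the bridge between the partial order enforced by \Cref{th:sequencial-exec-owned,th:sequencial-exec-shared} and the need for a globally consistent final state: specifically, one must verify that Lamport-timestamped versions, interleaved with the checkpoint-driven rollback at epoch boundary, cannot leave any correct validator observing a distinct version of any object. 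Handling this carefully will require invoking the checkpoint-inclusion guarantee (\Cref{sec:checkpoint-protocol}) to rule out executed-but-not-checkpointed transactions surviving the handover on any correct validator.
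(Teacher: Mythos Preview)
Your proposal is correct and follows essentially the same approach as the paper: induction on epochs with the base case at genesis, invoking \Cref{th:owned-object-execution-set} and \Cref{th:shared-object-execution-set} for the common execution set, \Cref{th:sequencial-exec-owned} and \Cref{th:sequencial-exec-shared} for agreement on order wherever it matters, and \Cref{th:deterministic-exec} for identical resulting state. Your additional remarks on the conflict relation, \Cref{th:objects-id-uniqueness}, and the deterministic rollback/store reset at handover are elaborations the paper leaves implicit, but they do not change the structure of the argument.
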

\begin{proof}
    We argue this property by induction. Assuming a history of $n-1$ epochs for which this property holds we consider epoch $n$.
    \Cref{th:owned-object-execution-set} and \Cref{th:shared-object-execution-set} prove that all correct validators will execute the same set of transactions by the start of epoch $n+1$.
    Then \Cref{th:sequencial-exec-owned} and \Cref{th:sequencial-exec-shared} show that correct validators can only execute those transactions in the same order (whenever order matter).
    Finally, \Cref{th:deterministic-exec} shows that the execution of those transactions causes the same state transition across all correct validators.
    As a result, every correct validator will have the same state at the start of epoch $n+1$. The inductive base case is argued by construction: all correct validators start in the same state during the first epoch (i.e., genesis).
\end{proof}

\para{Client-perceived safety} Clients consider a transaction $\transaction$ final if there exists an effect certificate $\ecert$ over $\transaction$. We thus show that the existence of $\ecert$ implies that $\transaction$ is never reverted (\Cref{th:checkpoint-inclusion} of \Cref{sec:security} shows that $\transaction$ will eventually be executed). Thus all final transactions will be in a checkpoint within the epoch.

\begin{theorem}[Client-Perceived Safety] \label{th:client-safety}
    If there exists an effect certificate $\ecert$ over a transaction $\transaction$, the execution of $\transaction$ is never reverted.
\end{theorem}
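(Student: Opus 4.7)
The plan is to reduce the claim to two sub-claims: (i) within the epoch in which the effect certificate was formed, the state transitions induced by $\transaction$ are stable at correct validators, and (ii) those state transitions persist across the epoch boundary. Sub-claim (i) follows almost immediately from the earlier safety results: an effect certificate $\ecert$ over $\transaction$ means that at least $f+1$ correct validators have already executed $\transaction$ via \textsc{ProcessCert} and atomically persisted its outputs; by \Cref{th:sequencial-exec-owned}, \Cref{th:sequencial-exec-shared}, and \Cref{th:deterministic-exec}, no correct validator ever produces a conflicting state transition for the same inputs, so within the epoch the execution of $\transaction$ cannot be overwritten or undone on a correct validator.

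The harder part is sub-claim (ii), namely ruling out that $\transaction$ gets rolled back during \textsc{Handover}, where validators drop uncheckpointed owned-object transactions (\Cref{sec:reconfiguration-protocol}). Here I would argue by quorum intersection against the \textsc{End-of-Epoch} message set. Concretely: since $\ecert$ exists, $f+1$ correct validators have a certificate $\cert$ for $\transaction$ in $\certdb$. A correct validator, by the checkpoint-creation rule, forwards every certificate it holds to the consensus engine and does not issue \textsc{End-of-Epoch} until every such certificate has appeared in the committed consensus sequence before the epoch's final checkpoint. The \textsc{End-of-Epoch} step of \Cref{alg:reconfiguration} requires $2f+1$ old validators to vote, so at least one of them is correct and also holds $\cert$; that validator therefore ensures $\cert$ is sequenced before the cut-off, which by the deterministic checkpoint construction places $\transaction$ in a checkpoint of the current epoch.

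From there, the \textsc{Handover} rollback (which only reverts transactions not in any checkpoint) does not touch $\transaction$, and by sub-claim (i) its executed effects are preserved at all correct validators entering the next epoch. Inductively applying \Cref{th:safety} to all subsequent epochs then shows that the effects of $\transaction$ persist forever on correct validators, which is what "never reverted" means from a client's perspective.

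I expect the main obstacle to be tightening the argument that a correct validator really does wait for \emph{all} of its certificates, including those produced late in the epoch (e.g.\ after \textsc{PauseTxLocking}), to be sequenced before voting \textsc{End-of-Epoch}. This is where the decoupling of \textsc{Ready}, \textsc{End-of-Epoch}, and \textsc{Handover} matters: after $2f+1$ readies, correct validators stop signing new transactions but continue driving outstanding certificates into consensus, so no certificate held by a correct validator can be orphaned between the two phases. Once this invariant is stated precisely, the quorum-intersection step above closes the proof.
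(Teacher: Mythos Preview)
Your proposal is correct and uses essentially the same argument as the paper: the key step in both is the quorum-intersection between the $f+1$ correct validators that executed $\transaction$ (and therefore refuse to issue \textsc{End-of-Epoch} until $\cert$ is sequenced) and the $2f+1$ votes required to close the epoch, which forces $\transaction$ into a checkpoint and hence out of the rollback set. The paper's proof is slightly leaner---it simply equates ``reverted'' with ``not in a checkpoint by end of epoch'' and runs the counting argument directly---whereas you additionally invoke \Cref{th:sequencial-exec-owned}, \Cref{th:sequencial-exec-shared}, and \Cref{th:deterministic-exec} for within-epoch stability; that extra sub-claim is not needed here but is harmless.
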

\begin{proof}
    Let's assume there exists an effect certificate $\ecert$ over a transaction $\transaction$ and that the execution of $\transaction$ is reverted.
    The execution of $\transaction$ is reverted if and only if $\transaction$ is not included in a checkpoint by the end of the epoch. However, correct validators only sign $\ecert$ after including $\transaction$ in the list of certificates to be sequenced and eventually included into a checkpoint $c$. By the liveness property of consensus within an epoch, a correct validator will eventually be able to sequence the certificate as long as the epoch is ongoing.
    %
    %
    The epoch ending before the certificate being sequenced and included in a checkpoint implies that a set of $f+1$ correct validators signed $\ecert$ and did not see it included in a checkpoint within the epoch, and a disjoint set of $f+1$ correct validators did not sign $\ecert$ and participated in the reconfiguration protocol to to move to the next epoch. This implies a total of $f+1 + f+1 + f = 3f+2 > 3f+1$ validators, hence a contradiction.
\end{proof}

\begin{theorem}[No Conflicts]
    No two conflicting effect certificates exist. That is, two different effect certificates sharing the same input objects and object version.
\end{theorem}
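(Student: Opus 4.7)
The plan is to proceed by contradiction, assuming two distinct effect certificates $\ecert \neq \ecert'$ exist whose underlying effects structures are over transactions $\cert$ and $\cert'$ that share some input object $\Obj$ at the same version $\Version$. Each effect certificate carries $2f+1$ validator signatures, and a correct validator only produces $\esign$ after successfully executing the transaction in \textsc{ProcessCert} and calling \textsc{AtomicPersist} (Algorithm~\ref{alg:process-cert}). By quorum intersection, at least $f+1$ correct validators signed $\ecert$ and at least $f+1$ signed $\ecert'$, so at least one correct validator executed both $\cert$ and $\cert'$. If $\cert$ and $\cert'$ were the same transaction, then by Deterministic Execution (Lemma~\ref{th:deterministic-exec}) the effects would be identical and $\ecert$, $\ecert'$ would not be different; hence $\cert \neq \cert'$.

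I would then case-split on the type of the shared input object $\Obj$. If $\Obj$ is owned, then within a single epoch $e$, BCB Consistency (Lemma~\ref{th:bcb-consistency}) directly contradicts the existence of two distinct certified transactions sharing the label $(e, \ObjID, \Version)$. If $\Obj$ is shared, I would invoke Shared-Locks Consistency (Lemma~\ref{th:shared-locks-consistency}) together with the structure of \textsc{AssignSharedLocks}: because \nextsharedlockdb[\ObjID]{} is strictly incremented via the Lamport timestamp (line~\ref{alg:line:lamport-timestamp} of Algorithm~\ref{alg:process-cert}) after each assignment, no two distinct transaction digests can be assigned the same version for the same shared object, again a contradiction.

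The remaining case is the cross-epoch one, where $\cert$ is certified in epoch $e$ and $\cert'$ in some epoch $e' > e$. Here I would use the \sysname Safety theorem (Theorem~\ref{th:safety}): once the correct validator that executed $\cert$ invoked \textsc{AtomicPersist}, the version of $\Obj$ in \objdb{} was updated to a strictly larger Lamport timestamp, and by Theorem~\ref{th:safety} this updated state is present at every correct validator at the start of epoch $e'$. Check~(4.2) of \textsc{ProcessCert} therefore cannot load $\Obj$ at the old version $\Version$ in epoch $e'$, so no correct validator would have executed $\cert'$ and signed $\esign$ for it, contradicting the existence of $\ecert'$.

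The main obstacle I anticipate is tightening the cross-epoch argument: the \ownedlockdb{} store is dropped at epoch boundaries (precisely to support forgiving equivocation), so the per-epoch label in Lemma~\ref{th:bcb-consistency} does not itself rule out a fresh owned-object conflict in a later epoch. The argument must therefore rely on the persistence of object versions in \objdb{} across epoch changes (guaranteed by Theorem~\ref{th:safety} and the handover step, which only drops lock tables, not object state) rather than on BCB alone, and on the fact that \textsc{AtomicPersist} was necessarily invoked before $\esign$ was produced.
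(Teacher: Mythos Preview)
Your decomposition matches the paper's: proceed by contradiction and case-split on whether the common input is an owned or a shared object, invoking BCB Consistency for the former and Shared-Locks Consistency for the latter. Two differences are worth noting. First, in the shared-object case the paper argues by counting: the $f{+}1$ correct signers of $\ecert$ and the $f{+}1$ correct signers of $\ecert'$ must be disjoint (otherwise some correct validator would hold conflicting $\sharedlockdb$ entries, violating Shared-Locks Consistency), yielding $3f{+}2 > 3f{+}1$ validators; your strictly-increasing Lamport-timestamp argument reaches the same contradiction more directly and without the quorum-disjointness step. Second, and more substantively, you explicitly separate out and handle the cross-epoch case via the Safety theorem and the persistence of $\objdb$ through reconfiguration, whereas the paper's proof simply invokes BCB Consistency---a per-epoch lemma---without addressing how the owned-object argument survives the drop of $\ownedlockdb$ at epoch boundaries. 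Your identification of this issue and the proposed fix are a genuine strengthening of the paper's own argument; the preliminary use of Deterministic Execution to rule out $\cert = \cert'$ is likewise more careful than the paper, which tacitly assumes the underlying certificates are distinct.
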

\begin{proof}
    Let's assume two conflicting effect certificates $\ecert$ and $\ecert'$ exist.
    We distinguish two (exhaustive) cases, (i) $\ecert$ and $\ecert'$ share an input owned object with the same version number, and (ii) $\ecert$ and $\ecert'$ share an input shared object with the same version number.
    Case (i) implies there must exist two conflicting certificates $\cert$ and $\cert$ (remember effect certificate are created by signing certificates). This is however a direct contradiction of \Cref{th:bcb-consistency}.
    Case (ii) implies that the $f+1$ correct validators who signed $\ecert$ persisted $\sharedlockdb[(\transactiondigest, \ObjID)] = \Version$, and the $f+1$ correct validators who signed $\ecert'$ persisted $\sharedlockdb[(\transactiondigest, \ObjID')] = \Version$. Since \Cref{th:shared-locks-consistency} ensures the shared lock store of correct validators do not conflict, the set of $f+1$ correct validators who signed $\cert$ is disjoint from the set of $f+1$ correct validators who signed $\cert'$. As a result, there must be a total of $f+1 + f+1 +f = 3f+2 > 3f+1$ validators, hence a contradiction.
\end{proof}

\input{}

\subsection{Validity} \label{sec:validity}
We show that correct \sysname validators only execute valid transactions. Validity holds unconditionally to the network assumption. The protocol described in \Cref{sec:core-protocol} ensures validity as long as the BFT assumption holds. However, our implementation ensures validity unconditionally because correct validators re-run all validity checks upon processing a certificate. Should the BFT assumption break, they would thus early-reject certificates over invalid transactions before even starting to process them.

\begin{lemma}[Valid Certificates] \label{th:valid-certificates}
    All certified transactions are valid with respect to the authorization rules relating to owned object (defined in \Cref{sec:system-model}).
\end{lemma}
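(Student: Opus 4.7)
The plan is to reduce the claim to the honest-majority observation that, under the BFT assumption, any quorum of $2f+1$ validators contains at least $f+1$ honest ones. First I would unfold the definition of a transaction certificate from \Cref{sec:messages}: a certificate $\cert$ over $\transaction$ carries signatures from a quorum of validators of the current epoch. A standard quorum-intersection argument then yields at least one honest signer $V$ for $\transaction$, which is all we need.

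Next I would trace what an honest validator $V$ does before emitting a signature over $\transaction$. By inspection of \Cref{alg:process-tx}, $V$ returns $\txsign = \sign{\transaction}$ only after successfully passing Check~1.2, i.e.\ only after $\txvalid{\transaction, [\object_o]}$ returns true on the owned inputs loaded in \Cref{alg:line:process-tx-load-objects} via \textsf{LoadObjects}. By the stated semantics of $\txvalid{\cdot,\cdot}$ in \Cref{sec:messages}, success of this check is exactly that the authorization rules of \Cref{sec:system-model} are met on the loaded inputs: the signer of $\transaction$ is the owner of every input owned by an address, every included child object has its root ancestor present as an input with appropriate authorization, and the gas object in $\txecon{\transaction}$ suffices to cover the minimum execution cost.

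Finally, I would argue that the validity witnessed by $V$ at signing time is the validity claimed by the lemma. The owned inputs loaded by $V$ are uniquely determined by the pairs $\objectkey = (\objectid, \objectversion)$ appearing in $\txinputs{\transaction}$, since the $\objdb$ store is keyed by $\objectkey$; no two honest validators can disagree on which object they load for a given $\objectkey$. Moreover, $\txvalid{\cdot,\cdot}$ is specified as a static check, depending only on $\transaction$ and the uniquely identified owned inputs, so its verdict is a function of data that the certificate already commits to.

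The only subtlety I expect is handling read-only and shared inputs, which are loaded at their latest versions and could in principle differ across validators. I would dispatch this by observing that the authorization rules of \Cref{sec:system-model} which the lemma speaks of are stated purely in terms of owned (and child) objects, so divergence in the loaded versions of read-only or shared inputs is immaterial to the claim.
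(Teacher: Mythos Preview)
Your proposal is correct and follows essentially the same approach as the paper: a certificate carries $2f+1$ signatures, so at least $f+1$ (hence at least one) honest validator signed, and honest validators only sign after Check~1.2 of \Cref{alg:process-tx} succeeds. The paper's proof is a terse two-sentence version of your first two paragraphs; your additional discussion of why $\txvalid{\cdot,\cdot}$ is well-defined across validators and why read-only/shared inputs are immaterial is extra care the paper omits, but it does not change the route.
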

\begin{proof}
    Certificates are signed by at least $2f+1$ validators, out of which at least $f+1$ are correct. Correct validators only sign a transaction after ensuring that the transaction is valid with respect to the authorization rules relating to owned object.
    As a result, no invalid transaction will ever be signed by a correct authority, and will thus never be certified.
\end{proof}

\begin{theorem}[\sysname Validity]
    State transitions at correct validators are in accordance with (i) the authorization rules relating to owned object (defined in \Cref{sec:system-model}), and (ii) the Move smart contract logic constraining valid state transitions on objects of defined types.
\end{theorem}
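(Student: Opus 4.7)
The plan is to decompose the statement into its two claims and reduce each to a property that has already been established or is immediate from the algorithms. Since a correct validator only updates its persistent stores ($\objdb$, $\ownedlockdb$, $\sharedlockdb$, and $\nextsharedlockdb$) via \textsc{AtomicPersist}, invoked at the very end of \textsc{ProcessCert} in \Cref{alg:process-cert} after a successful call to $\txexec{\cert, [\object_{o}], [\object_{s}]}$, it suffices to argue that every such call is (i) authorized with respect to the owned-object rules of \Cref{sec:system-model}, and (ii) produces a Move-valid state transition.

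For claim (i), I would first apply \Cref{th:valid-certificates} to conclude that the transaction carried by any certificate $\cert$ reaching \textsc{ProcessCert} satisfies $\txvalid{\transaction, [\object_{o}]}$, and hence respects the authorization rules on owned objects. To obtain the stronger, unconditional form, I would additionally invoke the implementation remark in \Cref{sec:operations} that a correct validator re-runs all checks of \Cref{alg:process-tx} when processing a certificate, so that authorization is locally re-verified before any state change is persisted; this closes the argument for (i) even if the BFT assumption were momentarily violated.

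For claim (ii), the new objects written by \textsc{AtomicPersist} are exactly the list $[\object_{out}]$ returned by $\txexec{\cert, [\object_{o}], [\object_{s}]}$, which runs inside the MoveVM. Because this execution is deterministic and by construction only produces outputs consistent with the type-level semantics of Move (\Cref{sec:messages}), the resulting objects form a valid Move state transition from the loaded pre-state $[\object_{o}] \cup [\object_{s}]$. The main obstacle I anticipate is establishing that this pre-state is really the intended one, which is subtle for shared objects: here I would rely on the checks performed just before execution, namely \textsc{SharedLocksExist} (\Cref{alg:line:ensure-shared-locks-exist}), which ensures that consensus has already ordered the certificate and that \textsc{AssignSharedLocks} has fixed the shared-object versions via Lamport timestamps (\Cref{alg:line:lamport-timestamp}), together with \textsc{CheckSharedLocks} (\Cref{alg:line:ensure-shared-locks-correct}), which ensures that these prescribed versions match the versions actually loaded from $\objdb$ on \Cref{alg:line:process-cert-load-shared-objects}. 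For owned-object inputs the argument is simpler: check~4.2 loads precisely the versions named in $\txinputs{\cert}$, which are the same versions the certificate was authorized against by check~4.1 of \Cref{alg:process-tx}. Combining the two claims yields the theorem.
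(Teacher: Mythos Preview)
Your proposal is correct and follows essentially the same approach as the paper: state transitions at a correct validator occur only via \textsc{AtomicPersist} at the end of \textsc{ProcessCert}, so (i) reduces to \Cref{th:valid-certificates} and (ii) to the fact that the persisted objects are exactly the output of $\txexec{\cdot}$ in the MoveVM. Your extended discussion of pre-state correctness (shared-lock checks, Lamport timestamps, loaded versions) goes beyond what the paper argues and is really a safety concern rather than a validity one; the paper's proof stops once it observes that $[\Obj_{out}]$ is produced by Move execution.
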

\begin{proof}
    State transitions at correct validators only happen if a valid certificate $\cert$ exists and is processed. Point (i) is thus directly proven by the application of \Cref{th:valid-certificates}. 
    Point (ii) is proven by noting that correct validators only apply a state transition after calling  $\txexec{\cert}$ which only produces $[\Obj_{out}]$ by calling the Move smart contract logic constraining valid state transitions on objects of defined types.
\end{proof}

\subsection{Liveness} \label{sec:liveness}
We prove the liveness of the \sysname protocol. We start by showing that correct users can always obtain a certificate over their valid transactions, even across epochs.

\begin{lemma}[Dependencies Availability] \label{th:dependencies-availability}
    Given a certificate $\cert$ a correct user can always retrieve all the dependencies (i.e. parents) of $\cert$.
\end{lemma}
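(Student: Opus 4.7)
The plan is to argue that any dependency of a valid certificate must be stored by a quorum of correct validators, so a correct user can always fetch it by querying the network. First I would unpack what a dependency is: if $\cert$ takes an input object $\Obj$, then the relevant parent is the certificate identified by $\parent{\Obj}$, i.e.\ the certificate whose execution produced $\Obj$ at its current version. Because $\cert$ is valid, it carries $2f+1$ validator signatures, of which at least $f+1$ come from correct validators. By check (1.1) of \Cref{alg:process-tx}, a correct validator only signs $\transaction$ after successfully loading every $\txinputs{\transaction}$ and $\txecon{\transaction}$ object from $\objdb$; and an object is placed in $\objdb$ only by \textsc{AtomicPersist} (\Cref{alg:storage-support}), which at the same time inserts the creating certificate into $\certdb$. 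Hence at least $f+1$ correct validators already hold every direct parent of $\cert$ in $\certdb$.

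Next I would handle retrieval within the current epoch. The user issues a lookup for each $\txdigest{\cert'}$ recorded in $\parent{\Obj}$ to all validators. Because at least $f+1$ correct validators hold the dependency and the network is eventually reliable, the user eventually receives an authenticated copy. Applying the argument inductively along the dependency DAG (whose depth is finite since versions are monotone Lamport timestamps) yields the full transitive closure $\txdeps{\effects}$.

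The main obstacle is the cross-epoch case: by the time a user wants to resolve a dependency, the committee that signed the parent certificate may have rotated out. Here I would invoke the checkpoint machinery. Every dependency of an executed certificate is itself final (otherwise $\cert$ could not have been executed), so by \Cref{th:client-safety} together with the causal-closure property of checkpoint construction described in \Cref{sec:checkpoint-protocol}, each parent certificate is included in some checkpoint of the epoch in which it was produced. New validators sync $\certdb$ from the canonical checkpoint sequence before going online (\Cref{sec:reconfiguration-protocol}, step 2), so in every subsequent epoch at least $2f+1$ validators still hold the dependency in $\certdb$, more than enough to guarantee retrieval by any correct user. The delicate part of the write-up will be pinning down that checkpoint causal completion really does cover \emph{all} ancestors transitively, not merely the immediate inputs, and connecting this to the user's ability to walk the dependency DAG across arbitrary epoch boundaries.
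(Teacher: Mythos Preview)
Your core argument is exactly the paper's: a certificate carries $2f+1$ signatures, hence at least $f+1$ correct signers; by check~(1.1) of \Cref{alg:process-tx} each such signer must have loaded the input objects from $\objdb$, which (via \textsc{AtomicPersist}) means it has already executed and stored the parent certificate in $\certdb$; so the user can query and obtain it, and one recurses down to genesis. The paper's proof is precisely this induction and nothing more.

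Where you diverge is the cross-epoch paragraph. The paper's proof simply does not treat epoch rotation: it tacitly assumes the $f+1$ correct signers remain queryable. Your addition is a genuine strengthening, not an alternative route, but as written it has a gap. You invoke \Cref{th:client-safety} to conclude the parent is ``final'' and hence checkpointed, yet that theorem's hypothesis is the existence of an \emph{effect certificate} ($2f+1$ signed effects), whereas all you have established is that $f+1$ correct validators executed the parent. The underlying quorum-intersection argument can be adapted---$f+1$ correct executors must overlap the $2f+1$ validators needed to close the epoch, so at least one correct validator will insist on sequencing the parent before \textsc{End-of-Epoch}---but this needs to be argued directly rather than cited. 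Also be careful about forward references: this lemma sits early in the liveness section and is itself a building block for later results, so pulling in checkpoint-inclusion machinery risks circularity or at least obscures the proof dependency order. If you want the cross-epoch case, it is cleaner to appeal directly to step~2 of \Cref{sec:reconfiguration-protocol} (new validators sync $\certdb$) together with the raw quorum-intersection count, rather than to \Cref{th:client-safety}.
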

\begin{proof}
    We argue this property by induction on the serialized retrieval of the direct parent certificates. Assuming a history of $n+1$ certificate dependencies for which this property holds, we consider certificate $n$ noted $\cert$.
    $\cert$ is signed by $2f+1$ validators, out of which at least $f+1$ are correct. Correct validators only sign a transaction after ensuring they hold all its input objects.
    This means that $f+1$ correct validators have executed (and persisted) certificate $n-1$ that created the inputs of $\cert$. A correct user can thus query any of those $f+1$ correct validators for certificate $n-1$.
    The inductive base case assumes that the first dependency of every certificate is a fixed genesis (which we ensure axiomatically).
\end{proof}

\begin{lemma}[Certificate Creation] \label{th:certificate-creation}
    A correct user can obtain a certificate $\cert$ over a valid transaction $\transaction$.
\end{lemma}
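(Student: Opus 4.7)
The plan is to show that the four checks performed by each correct validator in \textsc{ProcessTx} (\Cref{alg:process-tx}) all succeed when invoked on a valid transaction $\transaction$ submitted by a correct user, so that the user can collect signatures from the $2f+1$ correct validators to form $\cert$. I will consider each check in turn and then address the cross-epoch case, which is the main subtlety.

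For check (1.1), a correct validator may initially lack one or more input objects of $\transaction$. To handle this, the user first invokes \Cref{th:dependencies-availability} on each direct parent certificate of each input object to retrieve those certificates from $f+1$ correct validators, and then pushes them (together with their own recursively-retrieved dependencies) to any validator that does not yet hold the relevant $\objectkey$. After this synchronization step, every correct validator has all input objects of $\transaction$ at the specified versions and (1.1) succeeds. Check (1.2) succeeds by assumption, since $\transaction$ is valid. Check (1.3) concerns only local mutex acquisition over a finite set of owned inputs and thus terminates successfully at each correct validator. For check (1.4), I rely on the correctness of the user: by definition a correct user does not equivocate, so no transaction $\transaction' \neq \transaction$ sharing an owned input $\objectkey$ with $\transaction$ has ever been signed by the user. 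Combined with \Cref{th:bcb-consistency}, this means no conflicting certificate exists, and by induction on the order in which correct validators first observe transactions using $\objectkey$ within the epoch, each correct validator either finds $\ownedlockdb[\objectkey]=\textsf{None}$ or the value $\txsign$ corresponding to $\transaction$ itself. Hence (1.4) succeeds and each correct validator returns its partial certificate $\txsign$; the user collects $2f+1$ such responses to form $\cert$.

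The hard part is the cross-epoch case: an epoch may terminate while the user is still collecting signatures, or the user may submit $\transaction$ to a committee that has already paused transaction locking (\Cref{alg:line:pause-tx-locking} of \Cref{alg:reconfiguration}). I would argue that the user re-issues $\transaction$ with the epoch number of the next committee. By the reconfiguration protocol of \Cref{sec:reconfiguration-protocol}, at the start of each new epoch every correct validator drops its $\ownedlockdb$ and rolls back any owned-object transaction not included in a checkpoint; in particular, no lingering lock from the previous epoch can block a correct user from re-attempting $\transaction$. Combined with the fact that the reconfiguration smart contract eventually transitions to \textsf{Register} again under partial synchrony (by liveness of the consensus protocol), the user eventually encounters an epoch in which it can complete the four checks above and assemble $\cert$. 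The remaining step is to rule out that the re-issued transaction becomes invalid across the epoch boundary, which follows because the input owned objects referenced by $\transaction$ are preserved across epochs by \Cref{th:safety} (they were only locked, never consumed by a final transaction, as no conflicting effect certificate exists by the user's own correctness).
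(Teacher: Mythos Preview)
Your core argument matches the paper's: walk through the four checks of \textsc{ProcessTx}, invoke \Cref{th:dependencies-availability} for check~(1.1), validity for~(1.2), local mutex termination for~(1.3), and non-equivocation for~(1.4), then collect $2f{+}1$ signatures from correct validators. Two divergences are worth flagging.

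First, your handling of check~(1.4) detours through \Cref{th:bcb-consistency}, which is the wrong tool. That lemma guarantees no two conflicting \emph{certificates} exist, but what you need here is that no conflicting \emph{transaction} has been written into $\ownedlockdb$ at any correct validator---and a transaction can be locked without ever being certified. The paper's argument is more direct: a correct user does not equivocate, and (implicitly via check~(1.2)) only the owner can produce a valid transaction on an owned input, so $\transaction$ is the only valid transaction ever presented for each $\objectkey$; hence the lock is either $\textsf{None}$ or already $\txsign$. Your appeal to BCB consistency and the subsequent ``induction on the order in which correct validators first observe transactions'' are superfluous once you use the authentication guarantee of check~(1.2).

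Second, your entire cross-epoch paragraph lies outside the scope of this lemma in the paper's decomposition. The paper proves Certificate Creation for a single epoch and factors the epoch-boundary case into a separate lemma, \Cref{th:renew-certificate} (Certificate Renewal), whose proof simply reruns the four-check argument after noting that $\ownedlockdb$ is dropped at epoch change. Your version is not wrong, but invoking \Cref{th:safety} here is heavy machinery for what the paper handles with a one-line observation about the reconfiguration protocol.
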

\begin{proof}
    A correct validator always signs a transaction $\transaction$ if it passes the 4 checks of function \textsc{ProcessCert} of Algorithm 1 (see the long version of the paper~\cite{sui-lutris}). The first check (1.1) ensures that all objects referenced by the transaction exist, the second check (1.2) ensures the transaction is valid with respect to the authorization rules relating to owned objects, the third check (1.3) ensures no concurrent transaction is concurrently accessing the same input objects, and the fourth check (1.4) ensures that all these objects can be locked.

    \Cref{th:dependencies-availability} proves that a correct user can always ensure check (1.1) passes by providing all the transaction's dependencies the validator missed.
    Correct transactions always pass check (1.2).
    Check (1.3) always passes for the first copy of $\transaction$ received by the validator (at any given time).
    Finally correct users do not equivocate. Thus $\transaction$ is the first and only transaction referencing its owned object, and always passes check (1.4).
    As a result, if $\transaction$ is disseminated to $2f+1$ correct validators by a correct user, they will eventually all return a signature $\txsign$ to the user. The user then aggregates those $\txsign$ into a certificate $\cert$ over $\transaction$.
\end{proof}

\begin{lemma}[Certificate Renewal] \label{th:renew-certificate}
    A correct user holding a certificate over transaction $\transaction$ for an old epoch $e$ that did not finalize in $e$ can get a new certificate over $\transaction$ for the current epoch $e'$.
\end{lemma}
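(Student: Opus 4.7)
The plan is to reduce \Cref{th:renew-certificate} to \Cref{th:certificate-creation} by showing that after the epoch transition from $e$ to $e'$, the preconditions of \Cref{alg:process-tx} are again satisfied for $\transaction$ (with its epoch field updated to $e'$). The core idea is that ``did not finalize in $e$'' means no effects certificate was assembled in $e$, so by the reconfiguration protocol the state reverts to a configuration in which the input objects of $\transaction$ are still present at their original versions and no owned-object lock blocks resubmission.

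More concretely, I would proceed in the following steps. First, invoke \Cref{th:client-safety} contrapositively: since $\transaction$ did not finalize in $e$, no effects certificate $\ecert$ over $\transaction$ exists, and therefore $\transaction$ was not included in any epoch-$e$ checkpoint. Second, appeal to the reconfiguration protocol of \Cref{sec:reconfiguration-protocol}: at handover each correct validator (i) drops $\ownedlockdb$ and $\sharedlockdb$, and (ii) rolls back the execution of any transaction not in a checkpoint. Consequently, the owned inputs of $\transaction$ are still in $\objdb$ at the exact $\objectkey = (\objectid, \Version)$ referenced by $\transaction$ at the start of epoch $e'$. Third, use \Cref{th:bcb-consistency} to argue that no conflicting certificate over those owned inputs exists from epoch $e$, so no other epoch-$e$ transaction could have legitimately consumed those objects either. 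Fourth, have the correct user form the resubmitted transaction $\transaction$ with $\txepoch{\transaction} = e'$ and follow the certificate-creation proof: checks (1.1)--(1.4) of \Cref{alg:process-tx} all succeed at every correct validator because the inputs are present (1.1), validity is unchanged (1.2), the mutex is acquirable (1.3), and $\ownedlockdb$ was reset so \Cref{alg:line:assign_owned_lock} writes the fresh $\txsign$ (1.4). Finally, invoke \Cref{th:certificate-creation} to conclude that $2f+1$ partial signatures are collected and aggregated into a certificate $\cert$ for $\transaction$ in epoch $e'$.

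The main obstacle I expect is carefully justifying step two: I need to rule out the case where some owned input of $\transaction$ was mutated or deleted \emph{inside} epoch $e$ by a different, non-conflicting execution path (for example by a child/parent interaction or by a wrap/unwrap in another checkpointed transaction). \Cref{th:bcb-consistency} only excludes direct conflicts on $(e,\objectid,\Version)$; it does not, by itself, exclude an orthogonal transaction in epoch $e$ that legitimately touched the same $\objectid$ at a \emph{different} version number. I would address this by observing that correct validators certify $\transaction$ only after check (1.1) pins the exact $\objectkey$ pair, and that any subsequent successful mutation of that object necessarily advances $\Version$ via a Lamport timestamp; hence any such mutation would require a conflicting certificate at the same $(\objectid,\Version)$ pair, contradicting \Cref{th:bcb-consistency}. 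Combined with the rollback of unfinalized transactions at epoch boundary, this guarantees the exact pre-image state needed to reapply \Cref{th:certificate-creation}.
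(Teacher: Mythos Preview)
Your overall approach matches the paper's: reduce to \Cref{th:certificate-creation} by arguing that the four checks of \Cref{alg:process-tx} succeed again in epoch $e'$, with check~(1.4) passing because $\ownedlockdb$ is dropped at reconfiguration and the correct user does not equivocate in $e'$. The paper's proof is considerably lighter than yours: it invokes only \Cref{th:dependencies-availability} for check~(1.1), repeats the argument of \Cref{th:certificate-creation} for checks~(1.2)--(1.3), and uses the $\ownedlockdb$ reset for check~(1.4). It does not invoke \Cref{th:client-safety} or \Cref{th:bcb-consistency} at all.

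Your added machinery in steps~1--3 is not wrong in spirit, but step~1 contains a logical slip: \Cref{th:client-safety} establishes ``$\ecert$ exists $\Rightarrow$ $\transaction$ is in a checkpoint''; its contrapositive is ``$\transaction$ not in a checkpoint $\Rightarrow$ no $\ecert$.'' You need the \emph{converse}, ``no $\ecert$ $\Rightarrow$ $\transaction$ not in a checkpoint,'' which \Cref{th:client-safety} does not give you. That direction would have to come from \Cref{th:checkpoint-execution} (every checkpointed transaction is executed by all correct validators, hence an $\ecert$ could be assembled), and even then only eventually. The paper sidesteps this entirely by treating ``did not finalize in $e$'' together with the rollback step of \Cref{sec:reconfiguration-protocol} as directly yielding that the input $\objectkey$'s are still present; your step~3 via \Cref{th:bcb-consistency} is a genuinely useful tightening the paper leaves implicit, but step~1 as written does not hold.
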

\begin{proof}
    A correct user holding a certificate over transaction $\transaction$ for an old epoch $e$ can re-submit $\transaction$ to $2f+1$ correct validators to obtain a new certificate for the current epoch $e'$. Indeed, correct validators sign $\transaction$ if it passes all 4 checks of function \textsc{ProcessTx} of Algorithm 1 (see the long version of the paper~\cite{sui-lutris}) like in \Cref{th:certificate-creation}.
    If the validator did not already execute $\transaction$, the correct user can ensure check (1.1) passes by providing all the transaction's dependencies the validator missed (\Cref{th:dependencies-availability}).
    Check (1.2) and (1.3) will pass exactly as described in \Cref{th:certificate-creation}.
    Finally, check (1.4) passes since correct users do not attempt equivocation during epoch $e'$ and correct validators drop the store $\ownedlockdb$ upon epoch change (and thus $\ownedlockdb[\objectkey] == \text{None}$, for every input of $\transaction$).
    Thus, if $\transaction$ is disseminated to $2f+1$ correct validators by a correct user, they will eventually all return a signature $\txsign$ to the user. The user then aggregates those $\txsign$ into a certificate $\cert$ over $\transaction$.
\end{proof}

The existence of a certificate implies that every owned object used as input of a certified transaction is locked for a particular version number. We now prove that all the shared objects of the certificate are also eventually locked for a version number.

\begin{lemma}[Shared Locks Availability] \label{th:shared-locks-availability}
    A correct user can always ensure that all correct validators eventually assign shared locks to all shared objects of a valid transaction $\transaction$.
\end{lemma}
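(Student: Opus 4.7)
The plan is to combine our earlier certificate-creation guarantee with the liveness of the consensus engine. First, by Lemma~\ref{th:certificate-creation}, a correct user can always obtain a certificate $\cert$ over the valid transaction $\transaction$ in the current epoch. The user then forwards $\cert$ to every validator (or relies on the fact that at least $f+1$ correct validators already hold it as signers).

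Upon receiving $\cert$, each correct validator invokes \textsc{ProcessCert} (Algorithm~\ref{alg:process-cert}). Since $\transaction$ references at least one shared object and its shared locks have not yet been written, the call to \textsc{SharedLocksExist} at Line~\ref{alg:line:ensure-shared-locks-exist} returns false, so the validator forwards $\cert$ to the consensus engine at Line~\ref{alg:line:forward-to-consensus}. At least $f+1$ correct validators perform this step, so at least one honest submission is guaranteed to reach the consensus module.

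Next, by the liveness property of the underlying consensus protocol (Bullshark under partial synchrony), $\cert$ is eventually delivered in the consensus output sequence to every correct validator within the epoch. Upon consuming $\cert$ from consensus, each correct validator calls \textsc{AssignSharedLocks}, which in turn invokes \textsc{WriteSharedLocks} and deterministically assigns a version number (a Lamport timestamp) to every shared object of $\cert$ in its $\sharedlockdb$ store. Because \textsc{AssignSharedLocks} is executed by a single task and guarded by an idempotence check, this assignment happens exactly once per correct validator.

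The main obstacle is handling epoch boundaries: if the current epoch closes before $\cert$ is sequenced, the end-of-epoch procedure (\Cref{sec:reconfiguration-protocol}) discards pending certificates and drops $\ownedlockdb$, so the user must acquire a fresh certificate for the new epoch via Lemma~\ref{th:renew-certificate} and retry the submission. Since consensus makes progress within each stable epoch under partial synchrony and a correct user only needs to succeed in one epoch, the retry loop terminates and the shared locks are eventually assigned by every correct validator.
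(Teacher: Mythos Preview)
Your proof is correct and follows essentially the same route as the paper: obtain a certificate via Lemma~\ref{th:certificate-creation}, have it submitted to consensus, invoke consensus liveness to guarantee sequencing, and observe that sequencing triggers \textsc{AssignSharedLocks} (and hence \textsc{WriteSharedLocks}) at every correct validator. The paper's own proof is considerably terser---it simply says the user forwards $\cert$ to an honest authority and then appeals to consensus liveness---whereas you additionally trace the concrete code path through \textsc{ProcessCert} and Line~\ref{alg:line:forward-to-consensus}, and you handle the epoch-boundary case via Lemma~\ref{th:renew-certificate}, which the paper's proof omits entirely.

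One small inaccuracy worth fixing: your parenthetical ``or relies on the fact that at least $f+1$ correct validators already hold it as signers'' is not right. The validators who signed $\transaction$ produced only partial signatures $\txsign$; the assembled certificate $\cert$ exists only at the client until the client disseminates it. So the argument must rest on the user actively forwarding $\cert$, which is exactly what both you (in the main clause) and the paper assume.
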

\begin{proof}
    \Cref{th:certificate-creation} shows that a correct user can always assemble a certificate $\cert$ over a valid transaction $\transaction$.
    The correct user can then forward the $\cert$ to an honest authority who submits it to the consensus engine. By the liveness property of the consensus, $\cert$ is eventually sequenced by all correct validators. When $\cert$ is sequenced, correct validators assign locks to all shared objects of $\cert$.
    %
\end{proof}

We finally show that the existence of a certificate ensures the transactions of a correct user are eventually included in a checkpoint, and thus eventually executed.

\begin{lemma}[Effect Certificates Availability] \label{th:effect-dependencies-availability}
    A correct user can always ensure an effect certificate $\ecert$ over transaction $\transaction$ will eventually exist if a certificate $\cert$ over $\transaction$ exists.
\end{lemma}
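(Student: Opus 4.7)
The plan is to compose the preceding availability lemmas. Given a certificate $\cert$ over $\transaction$, I would show that a correct user can drive $2f+1$ correct validators to complete \textsc{ProcessCert} of \Cref{alg:process-cert} successfully and return a signed effects $\esign$, which the user then aggregates into $\ecert$ exactly as it would aggregate partial transaction signatures.

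First I would have the user forward $\cert$ to every validator, accompanied by any missing input-object history that \Cref{th:dependencies-availability} guarantees the user can obtain. Walking through the four checks of \textsc{ProcessCert}: Check 4.1 holds as long as $\txepoch{\cert}$ matches the current epoch; Check 4.2 holds because the user supplies any missing input objects from the dependency retrieval; and Check 4.3 is vacuously satisfied when $\cert$ uses only owned objects, while for $\cert$ with shared objects \Cref{th:shared-locks-availability} ensures every correct validator eventually runs \textsc{AssignSharedLocks} so that both \textsc{SharedLocksExist} and \textsc{CheckSharedLocks} pass. Once the checks pass, the validator executes the certificate via $\txexec{\cert}$, atomically persists via \textsc{AtomicPersist}, and returns $\esign$ to the user, who collects $2f+1$ such signatures from correct validators into $\ecert$.

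The main obstacle is what happens if the current epoch ends before $2f+1$ correct validators have replied with $\esign$. In that case Check 4.1 permanently fails for $\cert$ against the new committee, and any $\esign$ already collected is bound to the stale epoch and cannot be completed into a quorum. The remedy is \Cref{th:renew-certificate}: the user obtains a fresh certificate $\cert'$ over the same $\transaction$ for the new epoch (the $\ownedlockdb$ drop at reconfiguration removes the only obstruction to re-signing) and reruns the procedure above. I then need to argue that finitely many such renewals suffice. Within any single epoch the consensus protocol is live under partial synchrony, so once the user holds a current-epoch certificate, every correct validator reached by the user forwards it into consensus, eventually observes its sequencing, assigns the shared locks, executes, and signs the effects well before the $\textsf{End-of-Epoch}$ handover is triggered. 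Consequently a quorum of effects signatures is collected within one completed epoch, yielding $\ecert$ as required.
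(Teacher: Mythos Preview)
Your proposal is correct and follows essentially the same approach as the paper: walk through the checks of \textsc{ProcessCert}, discharge check~4.1 via \Cref{th:renew-certificate}, check~4.2 via \Cref{th:dependencies-availability}, and check~4.3 via \Cref{th:shared-locks-availability}, then aggregate $2f+1$ signed effects into $\ecert$. Your treatment of the epoch-boundary race is in fact more explicit than the paper's (which simply invokes certificate renewal without arguing termination); the minor slip of calling these ``four checks'' rather than three is immaterial.
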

\begin{proof}
    \lef{@ALberto:Same for this}
    A correct validator signs an effect $\esign$ if it passes all 3 checks of function \textsc{ProcessCert} of Algorithm 4 (see the long version of the paper). The first check (4.1) ensures that the certificate was issued during the current epoch, the second check (4.2) ensures that all dependencies of the certificate are available, and the third check (4.3) ensures that all shared objects of the certificate are (i) known to the validators and (ii) locked for the next version number.

    A correct user can ensure that check (4.1) passes by either providing the validator with the certificate $\cert$ during the same epoch of its creation or by re-creating a certificate for the current epoch (\Cref{th:renew-certificate}).
    A correct user can ensure check (4.2) passes by providing all the certificate's dependencies the validator missed (\Cref{th:dependencies-availability}).
    \Cref{th:shared-locks-availability} ensures that a correct user can make correct validators assign shared locks to all shared objects of a certificate $\cert$, thus validating the first part of check (4.3). It can then ensure the second part of check (4.3) succeeds by providing the validator with all dependencies it missed.
    As a result, a correct user can collect at least $2f+1$ effects $\esign$ over $\transaction$ and assemble them into an effect certificate $\ecert$.
\end{proof}

\begin{lemma}[Checkpoint Inclusion] \label{th:checkpoint-inclusion}
    If an effect certificate over transaction $\transaction$ exists within an epoch, $\transaction$ will be included in a checkpoint within the same epoch.
\end{lemma}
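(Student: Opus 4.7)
The plan is to exploit quorum intersection between the signers of $\ecert$ and the validators that must agree to end the epoch, combined with the discipline imposed on correct validators by the reconfiguration protocol of \Cref{sec:reconfiguration-protocol}.

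First I would observe that since $\ecert$ exists, at least $2f+1$ validators signed an effect $\esign$ over $\transaction$, and therefore at least $f+1$ correct validators $V_E$ executed $\transaction$ via \textsc{ProcessCert}. By the checkpoint creation rule (\Cref{sec:checkpoint-protocol}), upon processing a valid certificate every correct validator records it and commits to submitting it to the consensus engine so that it is sequenced before the epoch ends; in particular, every validator in $V_E$ has the certificate $\cert$ over $\transaction$ in its set of locally processed certificates that must appear in a checkpoint of the current epoch.

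Next I would invoke the End-of-Epoch discipline. Epoch $e$ can only terminate after $2f+1$ validators $V_{EoE}$ have invoked \textsc{End-of-Epoch}, and a correct validator only issues this call once all certificates it has processed during $e$ have been observed in the consensus output sequence, hence are guaranteed to be incorporated into a checkpoint of $e$ (together with any missing causal dependencies, by the gap-filling rule of \Cref{sec:checkpoint-protocol}). By quorum intersection, $V_E \cap V_{EoE}$ contains at least one correct validator $v^\star$, because $|V_E| + |V_{EoE}| \ge (f+1) + (2f+1) > n + f$. Validator $v^\star$ will not signal End-of-Epoch until $\cert$ has been sequenced through consensus and thus included in a checkpoint of epoch $e$; and since it must signal End-of-Epoch for the epoch to terminate, $\transaction$ must be in a checkpoint of $e$ by the time the epoch ends.

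The remaining piece is to ensure that the waiting of $v^\star$ terminates, i.e., that $\cert$ actually gets sequenced inside the current epoch rather than stalling forever. Here I would appeal to consensus liveness under partial synchrony (our standing assumption for Bullshark): $v^\star$ keeps submitting $\cert$ to consensus until it observes it in the output, and since the epoch does not close before $v^\star$ signals End-of-Epoch, there is no race in which the epoch boundary cuts ahead of the sequencing of $\cert$. The main obstacle I expect is precisely this interleaving argument — showing that the consensus progress needed to sequence $\cert$ is compatible with the fact that $2f+1$ validators must eventually become willing to end the epoch; this is handled by noting that any correct validator in $V_{EoE}$ is itself waiting for its own processed certificates (which by $V_E \cap V_{EoE} \neq \emptyset$ include $\cert$ for at least one such validator), so consensus progress toward sequencing $\cert$ is exactly what is required for the epoch to advance, and liveness of consensus guarantees that progress.
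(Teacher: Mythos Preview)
Your argument is essentially the paper's: from $\ecert$ obtain $f{+}1$ correct executors, use the End-of-Epoch discipline and quorum intersection with the $2f{+}1$ End-of-Epoch signers to find a correct validator that will not close the epoch before $\cert$ is sequenced, and appeal to consensus liveness for termination. The paper additionally spells out that all causal dependencies of $\transaction$ also have effect certificates and hence are eventually sequenced, whereas you defer this to the gap-filling rule of \Cref{sec:checkpoint-protocol}; both are fine.

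One concrete slip: the inequality you write, $|V_E| + |V_{EoE}| \ge (f{+}1) + (2f{+}1) > n + f$, is false for $f \ge 1$ since $3f{+}2 \le 4f{+}1$. What you actually need (and what your conclusion already uses) is only $|V_E| + |V_{EoE}| \ge 3f{+}2 > 3f{+}1 = n$, which forces $V_E \cap V_{EoE} \neq \emptyset$; any element of this intersection lies in $V_E$ and is therefore correct. Equivalently, as in the paper's proof of \Cref{th:client-safety}, one may intersect the $f{+}1$ correct executors with the $\ge f{+}1$ correct validators inside $V_{EoE}$, out of at most $2f{+}1$ correct validators total. Fix the bound and the proof goes through.
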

\begin{proof}
    If an effect certificate $\ecert$ exists, at least $f+1$ correct validators executed its corresponding transaction $\transaction$.  When correct validators execute a transaction they include it in the list of certificates to sequence and checkpoint (\Cref{sec:checkpoint-protocol}). Since $f+1$ correct validators are also needed to close the epoch, and a correct validator will not do so until it witnesses all listed certificates being sequenced, and by the liveness of consensus within an epoch, it follows that eventually the certificate will be sequenced (similar to \Cref{th:client-safety}). Since all certificates on which a certificate depends must also have been executed (in case of owned objects) or sequenced and executed (in case of shared objects) before an honest validator executes the transaction and signs it, it follows that if an $\ecert$ exists then an $\ecert$ for all dependencies will also exist and also be eventually sequenced. Since the certificate and all its causal dependencies will eventually be sequenced within the epoch, they will be included in a checkpoint within the epoch.
\end{proof}

\begin{lemma}[Checkpoint Execution] \label{th:checkpoint-execution}
    All correct validators eventually execute all transactions included in all checkpoints.
\end{lemma}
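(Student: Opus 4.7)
The plan is to prove this by strong induction on the position of a transaction within the canonical order that checkpoints induce (which, by the checkpoint creation rule, refines the causal order of certificates). The inductive hypothesis says that every transaction appearing earlier in this order has already been executed by every correct validator; the base case is the genesis state, which by construction is identical across all correct validators. For the step, I would fix a transaction $\transaction$ contained in some certified checkpoint $c$ and show that any correct validator $v$ eventually calls $\txexec{\cert}$ successfully on the certificate $\cert$ over $\transaction$.

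First I would note that $c$ carries a certificate from $2f+1$ validators, so at least $f+1$ correct validators signed it; and by the checkpoint creation rule they only sign after $\cert$ and its causal history are sequenced and checkpoint-complete. Therefore $v$ can always obtain $\cert$, together with any missing dependency certificates, by querying this honest quorum, reusing the synchronization argument of \Cref{th:dependencies-availability}. By the inductive hypothesis, all such dependencies are eventually executed at $v$, so the input objects land in $\objdb$ at the versions $\cert$ requires.

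Next I would verify that $v$'s invocation of \textsf{ProcessCert} from \Cref{alg:process-cert} succeeds. Check (4.1) passes because $\cert$'s epoch matches the epoch of $c$ (a validator never checkpoints a certificate from a different epoch). Check (4.2) passes by the synchronization step above. For check (4.3), the owned-object case is immediate; for the shared-object case, I would observe that $\cert$ must have been sequenced by the consensus engine before any correct validator signed $c$, since shared locks are only written by \textsc{AssignSharedLocks} after a consensus output is observed. By the liveness and agreement of the consensus protocol within the epoch, $v$ also eventually observes $\cert$ in its own consensus output and, by \Cref{th:shared-locks-consistency}, deterministically installs the same shared locks. After this the check at \Cref{alg:line:ensure-shared-locks-correct} succeeds, and by \Cref{th:deterministic-exec} the execution produces the same state transition as at the honest signers.

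The main obstacle is disentangling the three concurrent sources of progress that $v$ depends on: state synchronization of missing certificates, consensus-driven shared-lock assignment for shared-object dependencies, and the inductive assumption on causal ancestors. These must be interleaved carefully so that each call to \textsf{ProcessCert} finds its preconditions already met; the cleanest way is to induct over the topological order produced by checkpoint construction, which by design waits for causal completeness before emitting a transaction. Once that ordering is fixed, the remainder reduces to direct appeals to lemmas already established and carries no new machinery.
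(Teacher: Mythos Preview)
Your argument is correct in outline, but it takes a markedly different route from the paper. The paper's proof is essentially two lines: by the liveness (and agreement) of the consensus protocol, every correct validator eventually observes the same sequenced certificates and therefore assembles the same checkpoints locally; it then invokes the protocol rule from \Cref{sec:checkpoint-protocol} that a correct validator executes every transaction in every checkpoint it assembles. No induction, no per-check analysis of \textsc{ProcessCert}, and no synchronization via querying an honest quorum is needed, because the validator obtains everything it needs directly from its own consensus output.

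Your approach instead treats checkpoint execution as if it must funnel through the user-facing \textsc{ProcessCert} path of \Cref{alg:process-cert}, and therefore you carefully discharge checks (4.1)--(4.3) by induction over the topological order. This works, and it has the virtue of making explicit why each precondition is met (in particular your appeal to \Cref{th:shared-locks-consistency} for shared objects is sound). But it is heavier than necessary: the paper deliberately decouples checkpoint assembly and execution from the client-driven certificate path, so that a correct validator never needs to fetch missing certificates from peers or worry about epoch mismatches in check (4.1) --- it simply replays the deterministic checkpoint contents it derived from consensus. Your reliance on \Cref{th:dependencies-availability} to pull certificates from other validators is thus an unnecessary detour; the consensus output already delivers $\cert$ and its causal closure to $v$. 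In short, both arguments are valid, but the paper's leans on the consensus black box and a protocol invariant, while yours re-derives executability from first principles at the cost of more moving parts.
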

\begin{proof}
    Correct validators assemble checkpoints out of certificates sequenced by consensus. By the liveness property of the consensus protocol, all correct validators can eventually sequence all the certificates they are executed (or observe others do so) and assemble them into checkpoints. We conclude the proof by noting that correct validators execute all transactions within all checkpoints they assemble.
\end{proof}

\begin{theorem}[\sysname Liveness]
    A correct user can always ensure its transaction $\transaction$ will eventually be finalized. That is, all correct validators execute it and never revert it.
\end{theorem}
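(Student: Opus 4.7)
The plan is to chain together the preceding lemmas so that each stage of the transaction life-cycle (signature, certificate, effect certificate, checkpoint, execution) is guaranteed to be reachable from the previous one by a correct user. Informally, the liveness theorem is the end-to-end composition of the per-stage liveness results already established, plus the safety guarantee that once execution happens it is not rolled back.

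First, I would start from a valid transaction $\transaction$ submitted by a correct user. By \Cref{th:certificate-creation}, the user can disseminate $\transaction$ to $2f+1$ correct validators and collect their partial certificates into a certificate $\cert$. Next, by \Cref{th:effect-dependencies-availability}, the existence of $\cert$ together with a correct user who (i) supplies missing dependencies via \Cref{th:dependencies-availability} and (ii) ensures shared locks are assigned via \Cref{th:shared-locks-availability}, implies that an effect certificate $\ecert$ over $\transaction$ will eventually exist. If the current epoch closes before $\ecert$ is formed, I would invoke \Cref{th:renew-certificate} to refresh $\cert$ in the new epoch and repeat the argument; this is the step I expect to require the most care, because it must be argued that repeated re-submission across epochs does not itself obstruct finality, which follows because a correct user does not equivocate and because $\ownedlockdb$ is cleared on reconfiguration.

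Once $\ecert$ exists in some epoch $e$, \Cref{th:client-safety} states that $\transaction$ is never reverted, and \Cref{th:checkpoint-inclusion} states that $\transaction$ (together with all its causal dependencies) is included in a checkpoint within epoch $e$. Applying \Cref{th:checkpoint-execution} then yields that every correct validator eventually executes every transaction in that checkpoint, and in particular executes $\transaction$. Combining the non-reversion from \Cref{th:client-safety} with the universal execution from \Cref{th:checkpoint-execution} gives exactly the statement of the theorem: the transaction is executed by all correct validators and never reverted, i.e.\ it is finalized.

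The main obstacle in this plan is bridging the epoch boundary. All intermediate lemmas guarantee progress \emph{within} an epoch (since consensus liveness is only invoked within an epoch), but the theorem asks for eventual finality in the face of arbitrarily many reconfiguration events. The argument I would use is that at each epoch the user can deterministically reattempt the sequence described above, and the reconfiguration protocol of \Cref{sec:reconfiguration-protocol} guarantees that epochs terminate (so the user is never stuck inside one indefinitely) and that the fresh epoch starts from a consistent state (\Cref{th:safety}) in which the cleared $\ownedlockdb$ lets the user's non-equivocating transaction pass check (1.4) again. Thus in finitely many epochs at least one attempt succeeds in producing an $\ecert$, which by the above discussion suffices.
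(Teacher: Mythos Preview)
Your proposal is correct and follows essentially the same chain of lemmas as the paper: \Cref{th:certificate-creation} $\to$ \Cref{th:effect-dependencies-availability} $\to$ \Cref{th:checkpoint-inclusion} $\to$ \Cref{th:checkpoint-execution}, followed by the observation that checkpointed transactions are never reverted. The only minor differences are that you invoke \Cref{th:client-safety} for non-reversion where the paper appeals directly to \Cref{sec:reconfiguration-protocol}, and you spell out the cross-epoch re-attempt argument explicitly (via \Cref{th:renew-certificate}) whereas the paper leaves this buried inside the proof of \Cref{th:effect-dependencies-availability}; neither change alters the structure of the argument.
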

\begin{proof}
    \Cref{th:certificate-creation} ensures that a correct user can eventually obtain a certificate $\cert$ over their valid transaction $\transaction$.
    \Cref{th:effect-dependencies-availability} then ensures the user can get an effect certificate $\ecert$ using $\cert$.
    \Cref{th:checkpoint-inclusion} shows that the existence of $\ecert$ implies the transaction is eventually included in the a checkpoint.
    Finally \Cref{th:checkpoint-execution} shows that all transactions included in all checkpoints are executed.
    To conclude the proof we note that the execution of transactions included in checkpoints is never reverted (\Cref{sec:reconfiguration-protocol}).
\end{proof}

\begin{theorem}[Client-Perceived Starvation Freedom] \label{th:client-starvation}
    Let's assume two correct validators respectively set $\ownedlockdb[\objectkey] = \sign{\transaction_1}$ and $\ownedlockdb[\objectkey] = \sign{\transaction_2}$ (with $\transaction_1 \neq \transaction_2$) during epoch $e$.
    A correct user can eventually obtain an effect certificate over transaction $
        \transaction'$ accessing $\objectkey$ at epoch $e' > e$.
\end{theorem}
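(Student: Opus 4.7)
The plan is to leverage the fact that at each epoch boundary every correct validator drops its $\ownedlockdb$ store, as specified in Step~4 of Section~\ref{sec:reconfiguration-protocol}. I would proceed by a case analysis on whether either of the equivocating transactions achieves a certificate during epoch $e$; by Lemma~\ref{th:bcb-consistency}, at most one of $\transaction_1, \transaction_2$ can be certified.

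In the principal case, neither $\transaction_1$ nor $\transaction_2$ is certified in epoch $e$. Then no transaction consuming $\objectkey$ is included in any checkpoint of epoch $e$, and by the rollback rule of the reconfiguration protocol any partial state touching $\objectkey$ is reverted (which is safe since, by Theorem~\ref{th:client-safety}, no such transaction was ever final). Hence $\objectkey$ persists in $\objdb$ at the start of epoch $e'$, and $\ownedlockdb[\objectkey] = \textsf{None}$ there. A correct user now submits a fresh $\transaction'$ accessing $\objectkey$ to all correct validators of epoch $e'$. I would verify that the four checks of Algorithm~\ref{alg:process-tx} pass at each correct validator: (1.1) because $\objectkey \in \objdb$; (1.2) by construction of $\transaction'$ as a valid, correctly-signed transaction; (1.3) trivially; and, crucially, (1.4) because the lock store was dropped. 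Thus $\transaction'$ collects at least $2f+1$ signatures, the user assembles a certificate, and the \sysname Liveness theorem yields an effect certificate on $\transaction'$ within epoch $e'$.

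In the remaining case, without loss of generality $\transaction_1$ acquired a certificate in epoch $e$ that was not driven to an effect certificate before the epoch change; otherwise $\transaction_1$ itself already witnesses the statement (with the understanding that $\objectkey$ becomes consumed, so the ``access'' necessarily happens by the end of epoch $e$ rather than strictly later). Here I would apply Lemma~\ref{th:renew-certificate} to re-certify $\transaction_1$ in epoch $e'$, set $\transaction' := \transaction_1$, and then invoke Lemma~\ref{th:effect-dependencies-availability} together with the \sysname Liveness theorem to obtain the required effect certificate within epoch $e'$.

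I expect the main obstacle to be arguing that the cross-epoch state transition leaves $\objectkey$ in a uniformly usable condition across all correct validators. The delicate point is that the reconfiguration protocol must effectively couple the drop of $\ownedlockdb$ with the rollback of any non-checkpointed execution that touched $\objectkey$; otherwise an honest validator could enter epoch $e'$ either with a stale lock still in place or with $\objectkey$ spuriously consumed, and the four checks of Algorithm~\ref{alg:process-tx} would fail to pass \emph{uniformly} at the $2f+1$ correct validators needed to form a certificate on $\transaction'$. Once this coupling is extracted from the description in Section~\ref{sec:reconfiguration-protocol} and the persistence of checkpointed state across epochs is invoked, the argument reduces to the already-proven safety and liveness results.
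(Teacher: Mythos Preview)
Your proposal is correct and, in its principal case, follows exactly the paper's line: the $\ownedlockdb$ store is dropped at the epoch boundary (Step~4 of \Cref{sec:reconfiguration-protocol}), so a correct, non-equivocating client's fresh $\transaction'$ passes all four checks of \Cref{alg:process-tx} in epoch $e'$, yielding a certificate and then an effect certificate. The paper's proof is terser than yours: it simply cites \Cref{th:certificate-creation} and \Cref{th:effect-dependencies-availability} rather than re-verifying the four checks inline, and it invokes those two lemmas directly rather than the full \sysname Liveness theorem.

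Where you differ is in the case analysis on whether one of $\transaction_1,\transaction_2$ was certified during epoch $e$. The paper does not perform this split; it implicitly addresses only the stuck scenario where neither equivocating transaction made progress, which is the starvation situation the theorem is really about. Your second case, handled via \Cref{th:renew-certificate}, is a genuine edge case the paper glosses over, and you are right that it sits awkwardly against the literal ``$e' > e$'' in the statement when $\transaction_1$ was already finalized in $e$. Likewise, your worry about the coupling of lock-drop with rollback of non-checkpointed execution is more careful than the paper, which simply takes this for granted from the reconfiguration description. So your argument is strictly more thorough; the paper's proof is the two-line core of your principal case.
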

\begin{proof}
    All owned objects locked by transactions at epoch $e$ are freed upon entering epoch $e+1$ (see \Cref{sec:reconfiguration}); that is, correct validators drop all $\ownedlockdb[\cdot]$ upon epoch change.
    Correct validators thus sign a correct transaction $\transaction'$ accessing $\objectkey$ at epoch $e' > e$ submitted by correct clients (who do not equivocate). \Cref{th:certificate-creation} then ensures the client eventually obtains a certificate over $\transaction'$ and \Cref{th:effect-dependencies-availability} ensures the client eventually obtains an effect certificate over $\transaction'$.
\end{proof}

\section{Implementation} \label{sec:implementation}

We implement a networked multi-core \sysname validator in Rust forking the FastPay~\cite{fastpay-code},  Narwhal~\cite{narwhal-code}, and Bullshark~\cite{bullshark-code} projects. It uses Tokio~\cite{tokio-code} for asynchronous networking, fastcrypto~\cite{fastpay-code} for elliptic curve based  signatures. Data-structures are persisted using RocksDB~\cite{rocksdb-code}. We use QUIC~\cite{quinn-code} to achieve reliable authenticated point-to-point channels. The implementation of \sysname is around 30 kLOC and over 10 kLOC of tests.
Contrarily to most prototypes, our implementation is production-ready and fully-featured. It runs at the heart of a major new blockchain (integrated in 350 kLOC) mainnet.
\extendedBragging
%
We are open sourcing our implementation of \sysname\footnote{
  \ifdefined\cameraReady
    \url{https://github.com/mystenlabs/sui}
  \else
    Code available and deployed in production but link omitted for review.
  \fi
}.

\section{Evaluation} \label{sec:evaluation}

We evaluate the throughput and latency of our implementation of \sysname through experiments on AWS.
We particularly aim to demonstrate the following claims.
\textbf{C1} \sysname achieves high throughput despite its consensus-intensive checkpointing mechanism.
\textbf{C2} \sysname finalizes owned-objects transactions with sub-second latency (in the WAN) for small and medium committee sizes or when the system is under low load.
\textbf{C3} \sysname is robust when some parts of the system inevitably crash-fail.
\textbf{C4} \sysname validators are capable of quickly recovering after crashes without visible performance impact.
\textbf{C5} \sysname's epoch change mechanism only causes a small disruption.
%
%
Experimental evaluation of BFT protocols in the presence of Byzantine faults is an open research question~\cite{bano2020twins} but \Cref{sec:safety} and \Cref{sec:security} provide detailed safety and liveness proofs.

To demonstrate those claims we compare the fast path of \sysname with an implementation of Bullshark that uses the same execution engine and consensus protocol as \sysname. This means that although Bullshark is used as a baseline it also represents the actual performance of \sysname when deployed only with a shared objects workload. We use our implementation of Bullshark, which is an extension of the original codebase, for three reasons. First, the original Bullshark protocol only measures throughput by counting the number of bytes committed; as a result the throughput reported did not correspond to the performance of a real (production-ready) system because there was no execution or state to be updated (i.e., Bullshark is a BAB~\cite{cristian1995atomic} and not an SMR~\cite{schneider1990implementing}). Second, as a result of counting bytes there was no duplicate suppression, resulting in a potential $O(n)$ duplicate transactions. In our implementation we report the goodput of the system, i.e., the number of distinct certificates finalized by the system. Finally, due to lack of state it was trivial to DoS the original Bullshark by sending invalid and malformed transactions such that the goodput would drop to zero.
Instead, in our experiments we ensures that all transactions forwarded to the consensus engine use valid gas objects that are spent regardless of whether the transaction bytes are valid.

\subsection{Experimental Setup}
We deploy a fully-featured \sysname testbed on AWS, using \texttt{m5d.8xlarge} instances across 13 different AWS regions: N. Virginia (us-east-1), Oregon (us-west-2), Canada (ca-central-1), Frankfurt (eu-central-1), Ireland (eu-west-1), London (eu-west-2), Paris (eu-west-3), Stockholm (eu-north-1), Mumbai (ap-south-1), Singapore (ap-southeast-1), Sydney (ap-southeast-2), Tokyo (ap-northeast-1), and Seoul (ap-northeast-2).
Validators are distributed across those regions as equally as possible. Each machine provides 10Gbps of bandwidth, 32 virtual CPUs (16 physical core) on a 2.5GHz, Intel Xeon Platinum 8175, 128GB memory, and runs Linux Ubuntu server 22.04. \sysname persists all data on the NVMe drives provided by the machine (rather than the root partition). We select these machines because they provide decent performance and are in the price range of `commodity servers'.

In all graphs of the paper, each data point is the average of the latency of all transactions of the run, and the error bars represent one standard deviation (errors bars are sometimes too small to be visible on the graph). 
We instantiate several geo-distributed benchmark clients submitting transactions at a fixed rate for a duration of 10 minutes; unless specified otherwise each benchmark client submits at most 350 tx/s and the number of clients thus depends on the desired input load.

When referring to \emph{latency}, we mean the time elapsed from when the client submits the transaction to when it assembles an effect certificate (and the transaction has reached settlement, see \Cref{sec:overview}). The latency of \sysname shared-object transaction finality, are lower than the ones of Bullshark in our experiments: \sysname reaches transaction finality before execution with a latency equal to the owned-object transaction. The Bullshark latency is similar to \sysname's shared-object settlement latency instead (see \Cref{sec:overview}). When referring to \emph{throughput}, we mean the number of effect certificates over \emph{distinct} transactions over the entire duration of the run. Since \sysname also accepts bundles of multiple transactions (called \emph{programmable transaction blocks}, see \Cref{sec:common-case}) as input, we report throughput as certificates per second (cert/s) to denote distinct unrelated transaction throughput or how many users can concurrently be served.

Transactions processed by \sysname are payment transfers and transactions processed by Bullshark are increments of a shared counter which simulates the sequence number of a shared account.
%
\ifdefined\cameraReady
    We benchmark the version of the codebase \texttt{mainnet-v1.4.3}\footnote{
        \url{https://github.com/asonnino/sui/tree/sui-lutris} (commit \texttt{7f3d922})
    } deployed on the Sui mainnet in July 2023. We open-source all orchestration scripts, benchmarking scripts, and measurements data to enable reproducible evaluation results\footnote{
        \url{https://github.com/asonnino/sui-paper/tree/main/data}
    }. The extended version of the paper~\cite{sui-lutris} provides a tutorial to reproduce our experiments.
\fi

\subsection{Benchmark in the Common Case} \label{sec:common-case}
\Cref{fig:latency-common-case} compares the performance of \sysname with the baseline Bullshark when running with 10, 50, and 100 non-faulty validators. The lower part of the figure provides another view of the same data by showing the maximum achievable throughput while keeping the latency below 0.5 seconds and 5 seconds (the system's SLA).

Bullshark finalizes 4,000 cert/s with a committee size of 10 or 50 and 3,500 cert/s with a committee size of 100. In comparison, \sysname finalizes 5,000 cert/s with a committee size of 10 or 50, and 4,000 cert/s with a committee size of 100. In all cases, \sysname's throughput outperforms Bullshark's despite its checkpointing mechanism ( \Cref{sec:checkpoint}) that sequences all certificates. This observation validates our claim \textbf{C1}.
Regardless of the committee size, Bullshark's latency is about 3 seconds while \sysname's latency is less than 0.5 seconds -- corresponding to a 6x latency improvement. This observation validate our claim \textbf{C2}.

\begin{figure}[t]
    \includegraphics[width=\columnwidth]{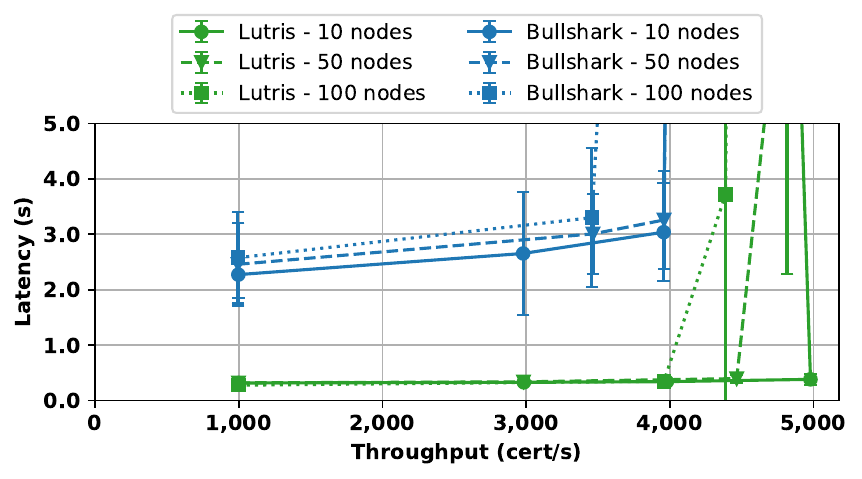}
    \includegraphics[width=\columnwidth]{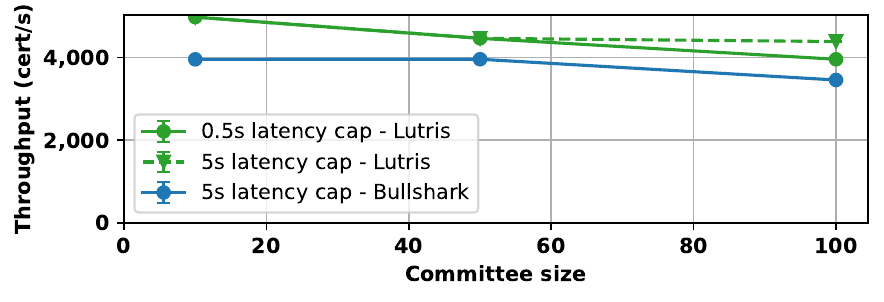}
    \vspace{-0.5cm}
    \caption{\sysname and Bullshark WAN latency-throughput with 10, 50, and 100 validators (no faults).}
    \label{fig:latency-common-case}
\end{figure}

\noindent \textbf{Programmable transaction blocks.}
\sysname allows a client to `bundle' multiple transactions into a single system transaction, and signing only the bundle rather than each individual transaction. We call such bundle \emph{programmable transaction block} (PTB). This is useful for large exchanges and corporate entities submitting numerous transactions on behalf of their users. Bundling transactions reduces the number of messages exchanged between the client and the system and greatly reduces the cost of signature verification and can be seen as equivalent to the batching strategies consensus protocols take to increase their throughput.

\Cref{fig:bundle-latency} shows the performance of a 100-validator deployment of \sysname executing a payload composed of PTBs of 100 transactions.
The graphs shows a peak at 150,000 ops/s indicating that \sysname can process 1,500 bundles of 100 transactions per second while keeping latency below 0.5 seconds.

\begin{figure}[t]
    \vspace{-0.25cm}
    \includegraphics[width=\columnwidth]{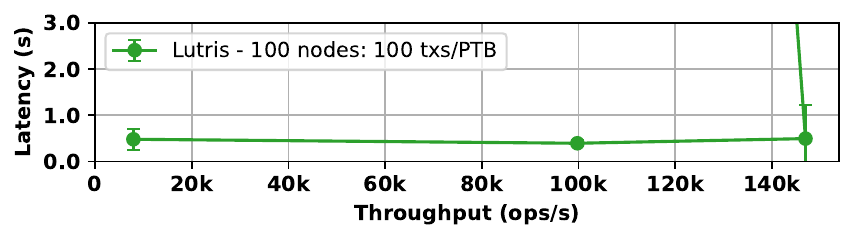}
    \vspace{-0.5cm}
    \caption{\sysname latency-throughput with bundles of 100 transactions per programmable transaction block (PTB); 100 validators, no faults.}
    \label{fig:bundle-latency}
    \vspace{-0.5cm}
\end{figure}

\subsection{Benchmark with Faults}
\Cref{fig:latency-faults} compares the performance of \sysname with the baseline Bullshark for a 10-validator deployment when the system experiences (crash-)faults; after running without faults for one minute, 1 and 3 validators crash.
The figure shows that Bullshark can finalize 3,500 cert/s in about 5 seconds and 3,000 cert/s in about 7.5 seconds when respectively 1 and 3 validators crashes. In contrast, \sysname is largely unaffected by validator's crashes: it can still finalize over 4,000 cert/s with a latency of less than 0.5 seconds. We thus observe that \sysname provides up to 15x latency reduction when the system experiences (crash-)faults. This observation validates our claim \textbf{C3}.
\Cref{fig:crash-recovery} shows the throughput and latency of \sysname and Bullshark when 3 validators are crashing and recovering. The plots are divided in 5 zones by vertical black lines; no validators are crashed in the first zone; then respectively 1, 2, and 3 validators are crashed in the 2nd, 3rd, and 4th zone; and all validators recover in the last zone. The systems are submitted to a constant load of 3,000 cert/s (regardless of the number of faults). Each point on the graphs is the average metric observed by the clients (averaged over all clients and with a 15-seconds window).
As expected, the throughput of Bullshark slightly degrades (barely visible) and its latency increases when the number of crash-faults increases. The 5th zone shows that performance starts recovering when all validators recover. The slight delay in recovery is due to the overhead required by our orchestration to apply changes to the deployment environment. In contrast, \sysname is largely unaffected by crash-recovering validators, validating claim \textbf{C4}.

\begin{figure}[t]
    \includegraphics[width=\columnwidth]{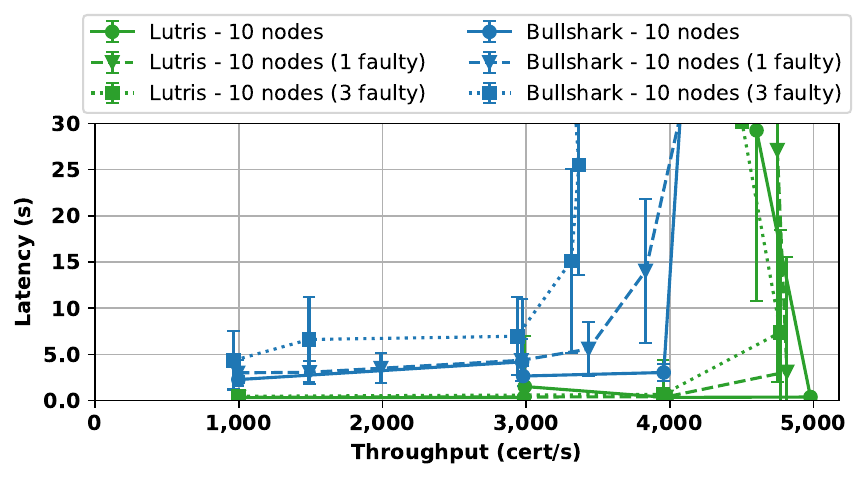}
    \vspace{-0.5cm}
    \caption{\sysname and Bullshark WAN latency-throughput with 20 validators (1, 3, and 6 faults).}
    \label{fig:latency-faults}
\end{figure}

\begin{figure}[t]
    \vspace{-0.45cm}
    \centering
    \includegraphics[width=\columnwidth]{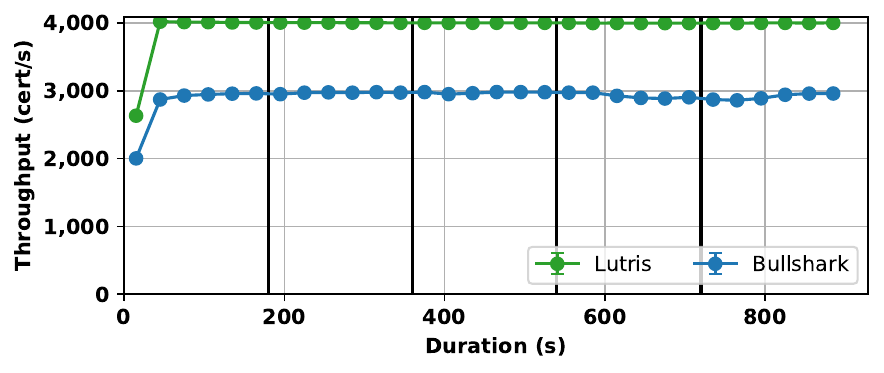}
    \vspace{-0.1cm}
    \includegraphics[width=\columnwidth]{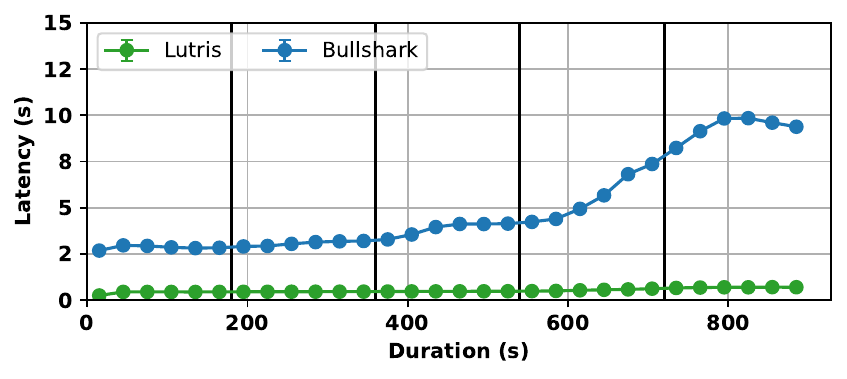}
    \vspace{-0.7cm}
    \caption{Performance of a 10-validators committee when up to 3 validators crash and recover.}
    \label{fig:crash-recovery}
    \vspace{-0.45cm}
\end{figure}

\subsection{Stability during Epoch Changes}

\Cref{fig:epoch-change} shows the throughput and latency over time of 10-validator deployments of \sysname and Bullshark. The systems are submitted to a constant load of 3,000 cert/s for about 35 minutes during which the systems undergo 3 epoch changes (one every 10 minutes, indicated by black vertical lines). Noting that the performance of \sysname (and Bullshark) are largely unaffected by epoch changes validates claim \textbf{C5}.

\begin{figure}[t]
    \includegraphics[width=\columnwidth]{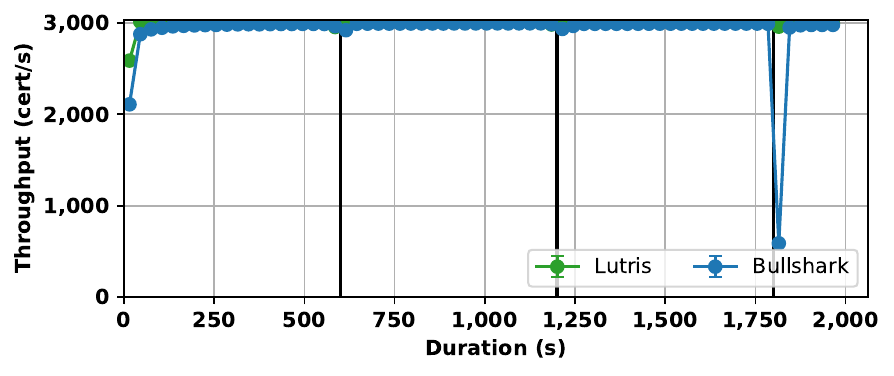}
    \vspace{-0.2cm}
    \includegraphics[width=\columnwidth]{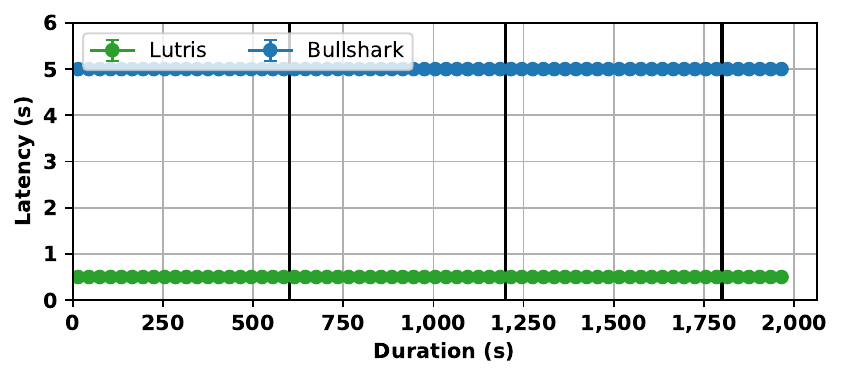}
    \vspace{-0.5cm}
    \caption{Performance of a 10-validators committee during epoch changes.}
    \label{fig:epoch-change}
    \vspace{-0.5cm}
\end{figure}
\section{Related and Future Work}
\sysname is the first deployed secure smart-contract platform that provides sub-second finality for distributed ledger transactions without compromising the expressiveness and the performance of state-of-the-art consensus protocols such as Bullshark.
To achieve this it combines said state-of-the-art consensus protocols with ideas from the FastPay~\cite{BaudetDS20} low-latency settlement systems to gain the ability to operate on arbitrary objects through user-defined smart contracts, and with a delegated proof-of-stake committee~\cite{sok-consensus}.

The \sysname owned object path is based on Byzantine consistent broadcast~\cite{cachin2011introduction}. Previous works suggested using this weaker primitive to build payment systems~\cite{GuerraouiKMPS19,DBLP:journals/corr/abs-1812-10844,astro} but lack an integration with a consensus path making it both unsuitable to run indefinitely (no garbage-collection or reconfiguration) as well as limited functionality (only payments) and usability (client-side bugs result in permanent loss of funds).

Other systems similar to \sysname are Astro~\cite{astro} and ABC/CoD~\cite{abc,sliwinski2022consensus}.
Astro relies on an eager implementation of \emph{Byzantine reliable broadcast}~\cite{cachinBook} which achieves \emph{totality}~\cite{cachinBook} without relying on an external synchronizer at the cost of higher communication in the common case. Additionally, Astro is designed as a standalone payment system but does not handle checkpointing or reconfiguration.
Similar to Astro and FastPay, ABC~\cite{abc} proposes a relaxed notion of consensus where termination is only guaranteed for honest senders, this however is quite disruptive for the client experience as simple mistakes cause complete loss of access to user assets.

\sysname's fast path operates with a latency of 2RTTs. This is higher than the 1RTT of FastPay and Astro but enables secure reconfiguration and the capability to unlock equivocated user assets. The RTTs of the shared object path of \sysname match those of its underlying black-box consensus mechanism. The mixed-objects path requires one extra RTT for certification but this step does not only serve as a locking mechanism between the fast path and consensus. It also serves as a pre-filter for spam transactions. BAB protocols do not handle transaction prevalidation on their own, which means that malicious clients can submit junk transactions and flood validators with duplicates (\Cref{sec:evaluation}). Existing blockchains perform this pre-filtering through a gossip layer, which increases (unquantifiable) latency. We instead added a certification step that both implements our locking mechanism and thwarts these attacks, ensuring that consensus channels execute only transactions that consume gas and yield profits for validators.

\sysname integration with a consensus protocol able to keep up with checkpointing the throughput the fast path is able to execute. For this reason, we chose the use of Narwhal-Bullshark~\cite{narwhal,bullshark} in a variant without asynchronous fallback~\cite{spiegelman2022bullshark}, due to their reported and observed high performance.
Similarly, \sysname requires integration with a deterministic execution engine in order to provide end-to-end application level semantics instead of simply ordering bytes.
Nevertheless, our design is modular~\cite{cohen2022proof} and can interface both with any total-ordering protocol~\cite{malkhi2023hotstuff, gelashvili2022jolteon, kokoris2018omniledger, al2017chainspace, castro1999practical,sonnino2020replay} as well as with any deterministic execution engine~\cite{gelashvili2023block, zhang2022ethereum}.

\ifdefined\cameraReady
    \section*{Acknowledgments}
This work is funded by MystenLabs. We thank the Mysten Labs Engineering teams for valuable feedback broadly, and specifically Dmitry Perelman and Todd Fiala for
managing the implementation effort. A number of folks contributed to specific aspects of the implementation of \sysname (amongst many other contributions to the overall blockchain):
%
Francois Garillot, Laura Makdah,
%
Mingwei Tian, Andrew Schran, Sadhan Sood and William Smith implemented and optimized aspects of both \sysname and Narwhal / Bullshark consensus;
%
Alonso de Gortari oversaw the crypto economics of the blockchain, and Emma Zhong, Ade Adepoju, Tim Zakia and Dario Russi designed and implemented staking and gas mechanisms.
Adam Welc designed several Move tools and provided great feedback on the manuscript.
We also extend our thanks to Patrick Kuo, Ge Gao, Chris Li, and Arun Koshy for their work on the \sysname SDK, clients, and RPC layer;
Kostas Chalkias, Jonas Lindstrøm, and Joy Wang built cryptographic components.

\fi

\bibliographystyle{ACM-Reference-Format}
\bibliography{references}

\appendix
\section{Detailed \sysname System} \label{sec:detailed-design}
We present the \sysname core protocol by providing algorithms and specifying the checks performed by validators at each step of the protocol.

\subsection{Objects Operations} \label{sec:objects-operations}
\sysname validators rely on a set of objects to represent the current and historical state of the replicated system. An \emph{Object} ($\object$) stores user smart contracts and data within \sysname. Transactions can affect objects which can be $\created$, $\mutated$, $\wrapped$, $\unwrapped$ and $\deleted$. 

Calling $\okey{\object}$ returns the \emph{key} (\objectkey) of the object, namely a tuple (\objectid, \objectversion). \objectid is cryptographically derived so that finding collisions is infeasible. Versions monotonically increase with each transaction processing the object and are determined via Lamport timestamps~\cite{lamport2019time}. Calls to $\version{\object}$ and $\initialversion{\object}$ return the current and initial version of the object respectively.

The $\oauth{\object}$ is either the owner's public key that may use this object, or the $\objectid$ of another \emph{parent} object, in which case this is a \emph{child} object. $\owned{\object}$ returns whether the object has an owner, or whether it is read-only or a shared object (see below).  The last transaction digest (\transactiondigest) that last mutated or created the object $\parent{\object}$.

\subsection{Protocol Messages} \label{sec:messages}
Validators and users run the core protocol described in \Cref{sec:core-protocol} by exchanging the following messages.

\noindent \textbf{Transactions.}
A \emph{transaction} ($\transaction$) is a structure representing a state transition for one or more objects. They support a few self-explanatory access operations, such as to get its digest $\txdigest{\transaction}$, and different types of input objects $\txinputs{\transaction}$ (object reference), $\txreadonlyinputs{\transaction}$, $\txsharedinputs{\transaction}$ (object ID and initial version), and $\txecon{\transaction}$ (the reference to the gas object to pay fees).

A Transaction may be checked for validity given a set of input objects, or can be executed to compute output objects:
\begin{itemize}[leftmargin=*]
    \item $\txvalid{\transaction, [\object]}$ returns true if the transaction is valid, given the requested input objects provided. This check verifies that the transactions are authorized to act on the input objects, as well as sufficient gas is available. Transaction validity, as returned by $\txvalid{\transaction, [\object]}$ can be determined statically without executing the Move contract.
    \item $\txexec{\transaction, [\object_{o}], [\object_{s}]}$ executes using the
          MoveVM~\cite{move_white} and returns a structure $\effects$ representing its effects along with the output objects $[\object_{out}]$. The output objects are the new objects $\created$, $\mutated$, $\wrapped$, $\unwrapped$ and $\deleted$ by the transaction. When objects are created, updated, or unwrapped their version number is the Lamport timestamp of the transaction.
          $[\object_{o}]$ and $[\object_{s}]$ respectively represent the owned and shared input objects. A valid transaction execution is infallible and has deterministic output.
\end{itemize}%
A transaction is indexed by the $\transactiondigest$ over its raw data
, which also authenticates its full contents. All valid transactions
have at least one owned input, namely the objects used to pay for gas. We note that a \transaction, can contain multiple related commands at once that execute atomically, and we call these \emph{Programmable Transaction Blocks} (PTBs) and show they increase the throuhgput of the system significantly.

\noindent \textbf{Certificates.}
A \emph{transaction certificate} ($\cert$) contains the transaction itself, the signature of the user authorizing the use of the input objects, and the identifiers and signatures from a quorum of $2f+1$ validators or more. For simplicity, we assume that every operation defined over a transaction is also defined over a certificate. For instance, ``$\txdigest{\transaction}$'' is equivalent to ``$\txdigest{\cert}$''.
A certificate may not be unique, and the same logical certificate may be signed by a different quorum of validators or even have different authorization paths (e.g., a 2-out-of-3 multisig). However, two different valid certificates on the same transaction should be treated as representing semantically the same certificate. The identifiers of signers are included in the certificate to identify validators ready to process the certificate, or that may be used to download past information required to process the certificate.
The signatures are aggregated (e.g., using BLS~\cite{BonehLS01}), compressing the quorum of signers to a single signature.


\noindent \textbf{Transaction effects.}
A \emph{transaction effects} (\effects) structure summarizes the outcome of a transaction execution. Its digest and authenticator is computed by $\txdigest{\effects}$. It supports operations to access its data such as $\efftransaction{\effects}$ (returns the transaction digest) and $\txdeps{\effects}$ (the digest of all transactions to create input objects). The $\contents{\effects}$ returns a summary of the execution: $\status$ reports the outcome of the smart contract execution. The lists $\created$, $\mutated$, $\wrapped$, $\unwrapped$, and $\deleted$, list the object references that underwent the respective operations. Finally, $\events$ lists the events emitted by the execution.

\noindent \textbf{Partial certificate.}
A \emph{partial certificate} (\txsign) contains the same information, but signatures from a set of validators representing stake lower than the required quorum, usually a single one. We call \emph{signed transaction} a partial certificate signed by one validator.

\noindent \textbf{Effect certificates.}
Similarly, an \emph{effects certificate} ($\ecert$) on an effects structure contains the effects structure itself, and signatures from validators
that represent a quorum for the epoch in which the transaction is valid. The same caveats, about non-uniqueness and identity apply as for transaction certificates. A partial effects certificate, usually containing a single validator signature and the effects structure is denoted as $\esign$. For transactions that only include owned objects, this certificate provides both finality and settlement. For transactions with shared objects validators can first reply with an empty effects structure (so when an empty effects certificate is formed the system reaches finality) and then add the effects structure and re-sign after consensus (to reach settlement). For the rest of the paper, we omit the transmission of the empty effects structure for simplicity.

\subsection{Data Structures} \label{sec:datastructures}
Validators maintain a set of persistent tables abstracted as key-value maps, with the usual $\mathsf{contains}$, $\mathsf{get}$, and $\mathsf{set}$ operations.

Reliable broadcast on owned objects uses the \emph{owned lock map} ($\ownedlockdb[\objectkey] \rightarrow \txsign \textsf{Option}$) which records the first valid transaction $\transaction$ seen and signed by the validator for an owned object's $\objectkey$, or None if the $\objectkey$ exists but no valid transaction using it as an input has been seen.
The \emph{certificate map} ($\certdb[\transactiondigest] \rightarrow (\cert, \esign)$) records all full certificates $\cert$, including $\transaction$, processed by the validator, along with their signed effects $\esign$.

To manage the execution of shared object transactions, the \emph{shared lock map} ($\sharedlockdb[(\transactiondigest, \ObjID)] \rightarrow \Version$) records the version number of $\ObjID$ assigned to a transaction $\transactiondigest$.
The \emph{next shared lock map} ($\nextsharedlockdb[\ObjID] \rightarrow \Version$) records the next available version (we discuss their use in \Cref{sec:operations}).

The \emph{object map} ($\objdb[\objectkey] \rightarrow \object$) records all objects $\object$ created by executed certificates within $\certdb$ by object key, and also allows lookups for the latest known object version.
This store can be completely derived by re-executing all certificates in $\certdb$. Only the latest version is necessary to process new transactions and older versions are only maintained to facilitate parallel execution, reconfiguration, reads, and audits.




Only ($\ownedlockdb, \sharedlockdb, \nextsharedlockdb$) require key self-consistency, namely a read on a key should always return whether a value or None is present for a key that exists, and such a check should be atomic with an update that sets a lock to a non-None value. This is a weaker property than strong consistency across keys and allows for efficient sharding of the store. To ensure this property, they are only updated by a single task (see \Cref{sec:operations}). The other stores may be eventually consistent without affecting safety.

\subsection{Validator Core Operation} \label{sec:operations}
Validators process transactions and certificates as described in \Cref{sec:core-protocol}. Transactions are submitted to the validator core for processing by users, while certificates can either be submitted by users or by the consensus engine.

\noindent \textbf{Process Transaction.}
\Cref{alg:process-tx} shows how \sysname processes transactions; that is, step~\two of \Cref{fig:sui-overview} (see \Cref{sec:core-protocol}). The function \textsf{LoadObjects} (\Cref{alg:line:process-tx-load-objects}) simply loads the specified object(s) from the $\objdb$ store; \textsf{LoadLatestVersionObjects} (\Cref{alg:line:process-tx-load-latest-objects}) loads the latest version of the specified object(s) from the $\objdb$ store; and \textsf{AcquireLocks} (\Cref{alg:line:acquire_tx_locks}) acquires a mutex for every owned-object transaction input.
Upon receiving a transaction $\transaction$ a validator calls \textsf{ProcessTx} to perform a number of checks:
\begin{enumerate}
    \item It ensures all object references $\txinputs{\transaction}$ and the gas object reference in $\txecon{\transaction}$ exist in the $\objdb$ store and loads them into $[\object]$. For owned objects both the id and version should be available; for read-only or shared objects the object ID should exist, and for shared objects, the initial version specified in the input must match the initial version of the shared object returned by $\initialversion{\cdot}$.  This check implicitly ensures that all owned objects have the version number specified in the transaction since the call to \textsf{LoadObjects} (\Cref{alg:line:process-tx-load-objects}) loads the pair $\objectkey = (\objectid, \Version)$ from the $\objdb$ store; $\objdb$ store holds a single entry (the latest version) per object.
    \item It checks $\txvalid{\transaction, [\object]}$ is true. This step ensures the authentication information in the transaction allows access to the owned objects. That is, (i) the signer of the transaction must be the owner of all the input objects owned by an address input; (ii) the parent object of any included child object owned by another object should be an input to the transaction; and (iii) sufficient gas can be made available to cover the minimum cost of execution.
    \item It ensures it can acquire a lock for every owned-object transaction input; otherwise, it returns an error. Acquiring a lock ensures that no other task can concurrently perform the next step of the algorithm on the same input objects.
    \item It checks that $\ownedlockdb[\objectkey]$ for all owned $\txinputs{\transaction}$ objects exist, and it is either None or set to \emph{the same} $\transaction$, and atomically sets it to $\txsign$. In other words, for each owned input version in the transactions: (i) a key for this object \emph{exists} in $\ownedlockdb$ and (ii) no other transaction $\transaction' \neq \transaction$ has been assigned as a value for this object version in $\ownedlockdb$, i.e.\ $\transaction$ is the first valid transaction seen using this input object. This is a key validity check to implement \emph{Byzantine consistent broadcast}~\cite{cachinBook} and ensure safety.
\end{enumerate}

Transaction processing ends if any of the checks fail and an error is returned.
If all checks are successful then the validator returns a signature on the transaction, ie.\ a partial certificate \txsign. Processing a transaction is idempotent upon success, and always returns a partial certificate (\txsign) within the same epoch. Any party may collate a transaction and signatures (\txsign) from a set of validators forming a quorum for epoch $e$, to form a transaction certificate $\cert$. It is \emph{critical that this step happens for all transactions as it acts as spam protection} for order-execute consensus engines. In the original Bullshark~\cite{bullshark} work it is trivial to drop the throughput to zero by a single malicious client sending corrupted transactions or duplicates. \sysname prevents such attacks thanks to the stronger coupling with the state that allows only transactions with gas to make it to consensus.

Many tasks can call \textsf{ProcessTx} concurrently. \sysname only acquires mutexes on the minimum amount of data: the owned-objects transaction inputs (\Cref{alg:process-tx} \Cref{alg:line:acquire_tx_locks}).

\begin{algorithm}[t]
    \caption{Process transaction}
    \label{alg:process-tx}
    \footnotesize
    \begin{algorithmic}[1]
        \Statex // Executed upon receiving a transaction from a user.
        \Statex // Many tasks can call this function.
        \Procedure{ProcessTx}{$\transaction$}
        \State // Check 1.1: Ensure all objects exist.
        \State $[\object_o] = \Call{LoadObjects}{\txinputs{\transaction}}$ \label{alg:line:process-tx-load-objects}
        \State $[\object_r] = \Call{LoadLatestVersionObjects}{\txreadonlyinputs{\transaction}}$ \label{alg:line:process-tx-load-latest-objects}
        \For{$(\objectid, \InitVersion) \in \txsharedinputs{\transaction}$}
        \State $\object_s = \Call{LoadLatestVersionObjects}{\objectid}$
        \If{$\initialversion{\object_s} \neq \InitVersion$}
        \State \Return Error
        \EndIf
        \EndFor
        \State
        \State // Check 1.2: Check the transaction's validity (see \Cref{sec:messages}).
        \If{!$\txvalid{\transaction, [\object_o]}$} \Return Error \EndIf
        \State
        \State // Check 1.3: Try to acquire a mutex over $\txinputs{\transaction}$
        \State guard = \Call{AcquireLocks}{$\transaction$} \Comment{Error if cannot acquire all locks} \label{alg:line:acquire_tx_locks}
        \State
        \State // Check 1.4: Lock all owned-objects.
        \State $\txsign = \sign{\transaction}$
        \For{$\objectkey \in \txinputs{\transaction}$}
        \If{$\ownedlockdb[\objectkey] == \textsf{None}$}
        \State $\ownedlockdb[\objectkey] = \txsign$ \label{alg:line:assign_owned_lock}
        \ElsIf{$\ownedlockdb[\objectkey] \neq \txsign$} \label{alg:line:no-conflict}
        \State \Return Error
        \EndIf
        \EndFor
        \State
        \State \Return \txsign
        \EndProcedure
    \end{algorithmic}
\end{algorithm}

\begin{algorithm}[t]
    \caption{Storage support (generic)}
    \label{alg:storage-support}
    \footnotesize
    \begin{algorithmic}[1]
        \Statex // Check whether the certificate as already been executed.
        \Procedure{AlreadyExecuted}{\cert}
        \State $\transactiondigest = \txdigest{\cert}$
        \State $(\_, \esign) = \certdb[\transactiondigest]$
        \If{$\esign$}
        \Return $\esign$
        \EndIf
        \Return \textbf{None}
        \EndProcedure

        \Statex
        \Statex // All operations inside this function are atomic.
        \Statex // Note that all owned and shared objects have the same version $v$.
        \Procedure{AtomicPersist}{$\cert, \esign, [\Obj_{out}]$}
        \State $\transactiondigest = \txdigest{\cert}$
        \State $\certdb[\transactiondigest] = (\cert, \esign)$
        \For{$\Obj \in \Obj_{out}$}
        \State $\objectkey = \okey{\Obj}$
        \State $\objdb[\objectkey] = \Obj$
        \If{$\owned{\Obj}$}
        \State $\ownedlockdb[\objectkey]$ = None
        \EndIf
        \EndFor
        \EndProcedure
    \end{algorithmic}
\end{algorithm}

\begin{algorithm}[t]
    \caption{Storage support (shared objects)}
    \label{alg:storage-support-shared-objects}
    \footnotesize
    \begin{algorithmic}[1]
        \Statex // Assign locks to the shared objects referenced by \cert.
        \Procedure{WriteSharedLocks}{$\cert, v$}
        \State $\transactiondigest = \txdigest{\cert}$
        \For{$(\ObjID, \InitVersion) \in \txsharedinputs{\cert}$}
        \State $\Version = \nextsharedlockdb[\ObjID] \; || \; \InitVersion$
        \State $\sharedlockdb[(\transactiondigest, \ObjID)] = \Version$
        \State $\nextsharedlockdb[\ObjID] = v + 1$ \Comment{Lamport timestamp}
        \EndFor
        \EndProcedure

        \Statex
        \Statex // Check whether all shared objects referenced by \cert are locked.
        \Procedure{SharedLocksExist}{$\cert$}
        \State $\transactiondigest = \txdigest{\cert}$
        \For{$\ObjID \in \txsharedinputs{\cert}$}
        \If{$!\sharedlockdb[(\transactiondigest, \ObjID)]$}
        \State \Return false \Comment{lock not found}
        \EndIf
        \EndFor
        \State \Return true
        \EndProcedure

        \Statex
        \Statex // Ensure that $\cert$ is the next certificate scheduled for execution.
        \Procedure{CheckSharedLocks}{$\cert$}
        \State $\transactiondigest = \txdigest{\cert}$
        \For{$\ObjID \in \txsharedinputs{\cert}$}
        \State $\Obj = \objdb[\ObjID]$
        \State $\Version = \version{\Obj}$
        \If{$\sharedlockdb[(\transactiondigest, \ObjID)] \neq \Version$}
        \State \Return false
        \EndIf
        \EndFor
        \State \Return true
        \EndProcedure
    \end{algorithmic}
\end{algorithm}

\begin{algorithm}[t]
    \caption{Process certificate}
    \label{alg:process-cert}
    \footnotesize
    \begin{algorithmic}[1]
        \Require Input certificate (\cert) is signed by a quorum

        \Statex
        \Statex // Executed upon receiving a certificate from a user.
        \Statex // Many tasks can call this function.
        \Procedure{ProcessCert}{$\cert$}
        \State // Check 4.1: Ensure the $\cert$ is for the current epoch \epoch.
        \If{$\txepoch{\cert} \neq \epoch$} \Return Error \EndIf
        \State
        \State // Check 4.2: Load objects from store, return error if missing object.
        \State [$\object_o$] = \Call{LoadObjects}{$\txinputs{\cert}$} \label{alg:line:process-cert-load-owned-objects}
        \State [$\object_{s}$] = \Call{LoadObjects}{$\txsharedinputs{\cert}$} \label{alg:line:process-cert-load-shared-objects}
        \State
        \State // Check 4.3: Check the objects locks.
        \If{!\Call{SharedLocksExist}{$\cert$}} \label{alg:line:ensure-shared-locks-exist}
        \State // Sequence the certificates
        \State \Call{ForwardToConsensus}{$\cert$} \label{alg:line:forward-to-consensus}
        \State \Return
        \EndIf
        \If{!\Call{CheckSharedLocks}{$\cert$}} \Return Error \EndIf \label{alg:line:ensure-shared-locks-correct}
        \State
        \State // Execute the certificate. \label{alg:line:execute-cert}
        \State $(\esign, [\Obj_{out}]) = \txexec{\cert, [\object_{o}], [\object_{s}]}$ \label{alg:line:exec}
        \State \Call{AtomicPersist}{$\cert, \esign, [\Obj_{out}]$} \label{alg:line:atomic-persist}
        \State \Return \esign
        \EndProcedure

        \Statex
        \Statex // Executed upon receiving a certificate from consensus.
        \Statex // This function must be called by a single task.
        \Procedure{AssignSharedLocks}{$\cert$}
        \State // Ensure shared locks are assigned only once.
        \If{\Call{SharedLocksExist}{$\cert$}}
        \State \Return
        \EndIf
        \State
        \State // Extract the highest objects version.
        \State $v_o = 1$
        \For{$(\objectid, \Version) \in \txinputs{\cert}$}
        \State $v_o = \max(v_o, \Version)$
        \EndFor
        \State $v_s = 1$
        \For{$(\objectid, \InitVersion) \in \txsharedinputs{\cert}$}
        \State $\Version = \nextsharedlockdb[\objectid] \; || \; \InitVersion$
        \State $v_s = \max(vs, \Version)$
        \EndFor
        \State $v_{max} = \max(v_o, v_s)$ \Comment{Lamport timestamp} \label{alg:line:lamport-timestamp}
        \State
        \State // Lock all shared objects to $v_{max}$.
        \State \Call{WriteSharedLocks}{$\cert, v_{max}$}  \label{alg:line:shared-lock-write}
        \EndProcedure
    \end{algorithmic}
\end{algorithm}

\para{Process user certificates}
\Cref{alg:process-cert} shows how \sysname processes certificates; that is, step~\five of \Cref{fig:sui-overview} (see \Cref{sec:core-protocol}). Algorithms \ref{alg:storage-support} and \ref{alg:storage-support-shared-objects} provide non-trivial support functions. Every certificate input to a function of \Cref{alg:process-cert} is first checked to ensure it is signed by $2f+1$ validators.
Upon receiving a certificate $\cert$ a validator calls \textsf{ProcessCert} to perform a number of checks:
\begin{enumerate}[leftmargin=*]
    \item Ensure $\txepoch{\cert}$ is the current epoch. This is a property of the quorum of signatures forming the certificate.
    \item Load the owned objects (\Cref{alg:line:process-cert-load-owned-objects}) and shared objects (\Cref{alg:line:process-cert-load-shared-objects}). Success in loading these objects ensures the validator already processed all past certificates concerning the loaded objects. If any object is missing, the validator aborts and returns an error. When loading shared objects, the function \textsf{LoadObjects} returns the shared object with the highest version number.
    \item If the certificate contains shared objects, it ensures the certificate is already sequenced by the consensus engine; otherwise, it forwards the certificate to consensus (\Cref{alg:line:forward-to-consensus}). Finally, it checks the shared locks to ensure the current certificate $\cert$ is the next scheduled certificate.
\end{enumerate}

If all check succeeds, the transaction can be executed. The validator then atomically persists the execution results to storage (function \textsf{AtomicPersist} of \Cref{alg:storage-support}).
It inserts the new (or mutated) objects in the $\objdb$ store and sets the version number of all owned and shared objects to the highest version number amongst all objects in the transaction plus 1 (Lamport timestamp); that is, the new version is
$v = 1 + \max_{o \in [\object_{o}] \cup [\object_{s}]} \version{o}$.
The validator also persists the certificate along with the effects resulting from its execution;
and updates the owned-object lock store to unblock future transactions using these objects. If transaction execution fails, \sysname unlocks any owned objects used as input of the transaction. That is, it sets $\ownedlockdb[\objectkey] = \text{None}, \;\forall\objectkey \in \txinputs{\transaction}$. Unlocking owned objects is essential to allow future transactions to re-use them. Gas payment is deduced from the payment objects whether the execution succeeds or fails.

\para{Process consensus certificates}
Upon receiving a certificate output from consensus, the validator calls \textsc{AssignSharedLocks} (\Cref{alg:process-cert}) to lock the transaction's shared-objects to a version number. This can be done without executing the transactions, only by inspecting and updating the transaction as well as the $\sharedlockdb$ and $\nextsharedlockdb$ tables. When an entry for a shared object does not exist in the tables it is assigned the initial version number given in the transaction input. 
Otherwise, it is given the value in the $\nextsharedlockdb$ table. The $\nextsharedlockdb$ is updated with the Lamport timestamp~\cite{lamport2019time} of the transaction: the highest version of all input objects used plus one. \textsc{AssignSharedLocks} must be only called by a single task.
After successfully calling \textsc{AssignSharedLocks}, the validator can call (again) \textsf{ProcessCert} to execute the certificate.

Shared objects may be included in a transaction explicitly only for reads (we omit this special case from the algorithms for clarity). In that case, the transaction is assigned in $\sharedlockdb$ the version in the $\nextsharedlockdb$ table for the shared object. However, the version of the object in $\nextsharedlockdb$ is not increased, and upon transaction execution, the shared object is not mutated. In order to preserve the safety of dynamic accesses we make sure that within a Bullshark commit (level of concurrency) all read-only transactions on shared objects are executed on the initial version V and only after writes are allowed to execute and mutate the object. This facility allows multiple transactions executing in parallel to use the same shared object for reads. For example, it is used to update a clock object with the current system time upon each commit that transactions may read concurrently.

\noindent \textbf{Additional checks.}
Algorithms~\ref{alg:process-tx} and~\ref{alg:process-cert} only describe the core validator operations. In practice, validators perform a number of extra checks to early reject duplicate messages. For instance, validators can easily check whether a certificate has already been executed by calling \textsf{AlreadyExecuted} (\Cref{alg:storage-support}). Such a check is useful to improve performance and prevent obvious DoS attacks but is not strictly needed for security (it does not guarantee idempotent validator replies on its own since both \textsf{ProcessTx} and \textsf{ProcessCert} can be called concurrently by multiple tasks).

\section{Benchmark Setup}\label{app:Benchmark}
We deploy a fully-featured \sysname testbed on AWS, using \texttt{m5d.8xlarge} instances across 13 different AWS regions: N. Virginia (us-east-1), Oregon (us-west-2), Canada (ca-central-1), Frankfurt (eu-central-1), Ireland (eu-west-1), London (eu-west-2), Paris (eu-west-3), Stockholm (eu-north-1), Mumbai (ap-south-1), Singapore (ap-southeast-1), Sydney (ap-southeast-2), Tokyo (ap-northeast-1), and Seoul (ap-northeast-2).
Validators are distributed across those regions as equally as possible. Each machine provides 10Gbps of bandwidth, 32 virtual CPUs (16 physical core) on a 2.5GHz, Intel Xeon Platinum 8175, 128GB memory, and runs Linux Ubuntu server 22.04. \sysname persists all data on the NVMe drives provided by the machine (rather than the root partition). We select these machines because they provide decent performance and are in the price range of `commodity servers'.

In all graphs of the paper, each data point is the average of the latency of all transactions of the run, and the error bars represent one standard deviation (errors bars are sometimes too small to be visible on the graph). 
We instantiate several geo-distributed benchmark clients submitting transactions at a fixed rate for a duration of 10 minutes; unless specified otherwise each benchmark client submits at most 350 tx/s and the number of clients thus depends on the desired input load. 
\section{Reconfiguration Appendix} \label{sec:reconfig-appendix}
This appendix complements \Cref{sec:reconfiguration-protocol} by providing the detailed algorithm of the \sysname reconfiguration logic and by presenting benchmarks demonstrating that \sysname's performance is largely unaffected by epoch changes.

\subsection{Algorithm}
\Cref{alg:reconfiguration} shows the \sysname reconfiguration logic coded as a smart contract.

\begin{algorithm}[th!]
    \caption{Reconfiguration Contract}
    \label{alg:reconfiguration}
    \footnotesize
    \begin{algorithmic}[1]
        \Statex // Smart contract state
        \Statex \textsf{T} \Comment{Minimum stake to become a validator}
        \Statex \textsf{S} \Comment{Last sequence number before starting epoch change}
        \Statex $\textsf{total\_old\_stake} = \textsf{GenesisStake}$  \Comment{Total stake of the old committee}
        \Statex $\textsf{total\_new\_stake} = 0$  \Comment{Total stake of the new committee}
        \Statex $\textsf{old\_keys} = \textsf{GenesisKeys}$ \Comment{Identifiers of the old committee members}
        \Statex \textsf{new\_keys} = $\{\}$ \Comment{Identifiers of the new committee members}
        \Statex $\textsf{epoch\_edge} = 0$ \Comment{Sequence number of the last epoch's checkpoint}
        \Statex \textsf{state} = \textsf{Register} \Comment{Current state of the smart contract}
        \Statex \textsf{stake} = 0 \Comment{Variable counting the accumulated stake}

        \Statex
        \Statex // Step 1: Users register to become the next validators.
        \Function{Register}{$\texttt{sender}$}
        \If{$\textsf{state} \neq \textsf{Register}$} \Return \EndIf
        \If{$\texttt{sender.stake} \geq \textsf{T}$}
        \State $\textsf{new\_keys} = \textsf{new\_keys} \cup \texttt{sender}$
        \State \textsf{total\_new\_stake} += \texttt{sender.stake}
        \EndIf
        \EndFunction

        \Statex
        \Statex // Step 2: New validators signal they are ready to take over.
        \Function{Ready}{$\texttt{sender}$}
        \State $seq = \Call{GetLatestCheckpointSeq}$
        \If{$\textsf{state} = \textsf{Register}  \text{ and } seq \geq \textsf{S}$} \label{alg:line::ready}
        \State $\textsf{state} = \textsf{Ready}$
        \EndIf
        \If{$\textsf{state} \neq \textsf{Ready}$} \Return \EndIf
        \If{$\texttt{sender} \in \textsf{new\_keys}$}
        \State \textsf{stake} += \texttt{sender.stake}
        \EndIf
        \If{$\textsf{stake} \geq 2*\textsf{total\_new\_stake}/3+1$}
        \State $\textsf{state} = \textsf{End-of-Epoch}$
        \State $\textsf{stake} = 0$
        \State \Call{PauseTxLocking}{} \Comment{Stop signing messages} \label{alg:line:pause-tx-locking}
        \EndIf
        \EndFunction

        \Statex
        \Statex // Step 3: Old validators signal the epoch can safely finish.
        \Function{End-of-Epoch}{$\texttt{sender}$}
        \If{$\textsf{state} \neq \textsf{End-of-Epoch}$} \Return \EndIf
        \If{$\texttt{sender} \in \textsf{old\_keys}$}
        \State \textsf{stake} += \texttt{sender.stake}
        \EndIf
        \If{$\textsf{stake} \geq 2*\textsf{total\_old\_stake}/3+1$}
        \State $\textsf{state} = \textsf{Handover}$
        \State $\textsf{stake} = 0$
        \State $\textsf{epoch\_edge} = \Call{GetLatestCheckpointSeq}$ \label{alg:line:latest-checkpoint_seq}
        \EndIf
        \EndFunction

        \Statex
        \Statex // Step 4: The new validators take over.
        \Function{Handover}{$\texttt{sender}$}
        \If{$\textsf{state} \neq \textsf{Handover}$} \Return \EndIf
        \State $seq = \Call{GetLatestCheckpointSeq}$
        \If{$seq \geq \textsf{epoch\_edge} + 1$} \label{alg:line:one-more}
        \State $\textsf{old\_keys} = \textsf{new\_keys}$
        \State $\textsf{total\_old\_stake} = \textsf{total\_new\_stake}$
        \State $\textsf{state} = \textsf{Register}$
        \State $\textsf{new\_keys} = \{\}$
        \State $\textsf{total\_old\_stake} = 0$, $\textsf{epoch\_edge} = 0$
        \State \Call{Shutdown}{} \Comment{Old validators can shutdown} \label{alg:line:shutdown}
        \EndIf
        \EndFunction
    \end{algorithmic}
\end{algorithm}

\section{Programming with Objects}\label{sec:typesystem}
Although objects give the developers a way to expose dependencies and parallelism, we need develop a few object-oriented processes to make it work seamlessly with low programming overhead.

\sysname relies on the correct implementation of the MoveVM~\cite{move} and its semantics to guarantee a number of properties:
\begin{itemize}[leftmargin=*]
    \item The type checker ensures an object may only be wrapped into another object if the wrapped object is an input of the transaction, or itself indirectly a child of an input object in the transaction.
    \item The type checker ensures that an input object may only become a child of another object if the new parent object is an input of the transaction, or indirectly a child of an input to the transaction.
\end{itemize}
These checks ensure that wrapped and child objects will always have a version number that is not larger than the object that wraps them, the top-level object that wraps their parent, or their parent if it is already top-level. In turn, this ensures that when unwrapped or a child object becomes a top-level object again the Lamport timestamp corresponding to its new version number will be greater than any version number it was assigned before.
Lamport timestamps~\cite{lamport2019time} allow \sysname to efficiently delete and re-create objects without opening itself to replay attacks through version number re-use. In a nutshell, re-created objects always have a higher version number.

\para{Object Wrapping/Unwrapping} \sysname allows an executed transaction to \emph{wrap} an input object into another object that is output. The wrapped object then `disappears' from the system tables as it is in-lined within the
data of the object that wrap it. The Move type system that bounds all executions ensures that the type of a wrapped object does not change; and that it is not cloned with the same object ID.
Eventually, the wrapped object can be \emph{unwrapped} by a transactions taking the object that wraps it as an input. Once unwrapped the object is again included and tracked in the system tables
under a version that is the Lamport timestamp of the transaction that unwrapped it. This ensures that the version is greater than any previous version, preventing replay attacks on the consistent
broadcast protocol.

\para{Parent-child object relations} An owned object (called a \emph{child}) may be owned by another object (called the \emph{parent}). Parent object may be shared objects or owned objects, and the latter may themselves be children
of other parents, forming a chain of parent-child relations. A transaction is authorized to use a child object if the ultimate parent, ie.\ the object in the parent-child chain that is either shared
of owned by an address, is included in the transaction as an input. Assigning an owned object to a parent requires both the child and the parent to be included or authorized in the inputs of the
transaction that performs the assignment. This ensures that all operations on a child (including assignment) result in the full chain being updated as part of the transaction output, and assigned
the Lamport timestamp of the transaction.

Notably, children may be used as inputs to transactions implicitly without specifying them in the transactions inputs. A Move call may load a child object by object ID, and the owned objects table
(which is consistent) is used to determine the correct version of the loaded child object to use. Since the transaction executing has a lock on the parent it also has implicitly a lock on all direct
or indirect children removing any ambiguity about the correct version to use. The full chain of a dynamically loaded child is treated as if it was an input to the transaction and version numbers of
all objects on the chain updated in the output. This ensures that when the child object is assigned to an address (stops being a child) it will always have a fresh version.

\para{Objects Deletion} \label{sec:objects-deletion}
Any type of \sysname object (read-only, owned, or shared) can be either created by a transaction containing only owned objects or by a transaction containing a mix of owned and shared objects. As a result, owned objects can be created by transactions sequenced by the consensus engine, and shared objects can be created by transactions forgoing consensus. This creates a challenge when deleting shared objects that are created without consensus, as the creation transaction might be replayed causing a double-spend. To resolve this issue we rely on remembering deleted objects through tombstones. We achieve this by marking deleted objects with the version number zero upon successfully execution of the certificate (i.e., when calling $\txexec{\cert}$ in \Cref{alg:process-cert}). Remember the second transactions validity check of \Cref{sec:operations} prevents transactions from using objects with version zero as inputs. Thus setting the version of an object to zero is effectively a tombstone. Since this could create infinite memory needs we also rely on the fact that certificates are valid only while the epoch is ongoing and clear all tombstones upon epoch change.

\ifdefined\cameraReady
    \section{Reproducing Experiments} \label{sec:reproduce}

We provide the orchestration scripts\footnote{\url{https://github.com/asonnino/sui/tree/sui-lutris/crates/orchestrator}} used to benchmark the codebase evaluated in this paper on AWS .

\para{Deploying a testbed}
The file `\texttildelow/.aws/credentials' should have the following content:
\begin{footnotesize}
    \begin{verbatim}
[default]
aws_access_key_id = YOUR_ACCESS_KEY_ID
aws_secret_access_key = YOUR_SECRET_ACCESS_KEY
\end{verbatim}
\end{footnotesize}
configured with account-specific AWS \emph{access key id} and \emph{secret access key}. It is advise to not specify any AWS region as the orchestration scripts need to handle multiple regions programmatically.

A file `settings.json' contains all the configuration parameters for the testbed deployment. We run the experiments of \Cref{sec:evaluation} with the following settings:

\begin{footnotesize}
    \begin{lstlisting}[language=json]
{
    "testbed_id": "${USER}-sui",
    "cloud_provider": "aws",
    "token_file": "/Users/${USER}/.aws/credentials",
    "ssh_private_key_file": "/Users/${USER}/.ssh/aws",
    "regions": [
        "us-east-1",
        "us-west-2",
        "ca-central-1",
        "eu-central-1",
        "ap-northeast-1",
        "ap-northeast-2",
        "eu-west-1",
        "eu-west-2",
        "eu-west-3",
        "eu-north-1",
        "ap-south-1",
        "ap-southeast-1",
        "ap-southeast-2"
    ],
    "specs": "m5d.8xlarge",
    "repository": {
        "url": "https://github.com/mystenlabs/sui.git",
        "commit": "sui-lutris"
    }
}
\end{lstlisting}
\end{footnotesize}

where the file `/Users/\${USER}/.ssh/aws' holds the ssh private key used to access the AWS instances.

The orchestrator binary provides various functionalities for creating, starting, stopping, and destroying instances. For instance, the following command to boots 2 instances per region (if the settings file specifies 13 regions, as shown in the example above, a total of 26 instances will be created):
\begin{footnotesize}
    \begin{verbatim}
cargo run --bin orchestrator -- \
    testbed deploy --instances 2
\end{verbatim}
\end{footnotesize}
The following command displays he current status of the testbed instances
\begin{footnotesize}
    \begin{verbatim}
cargo run --bin orchestrator testbed status
\end{verbatim}
\end{footnotesize}
Instances listed with a green number are available and ready for use and instances listed with a red number are stopped. It is necessary to boot at least one instance per load generator, one instance per validator, and one additional instance for monitoring purposes (see below).
The following commands respectively start and stop instances:
\begin{footnotesize}
    \begin{verbatim}
cargo run --bin orchestrator -- testbed start
cargo run --bin orchestrator -- testbed stop
\end{verbatim}
\end{footnotesize}
It is advised to always stop machines when unused to avoid incurring in unnecessary costs.

\para{Running Benchmarks}
Running benchmarks involves installing the specified version of the codebase on all remote machines and running one validator and one load generator per instance. For example, the following command benchmarks a committee of 100 validators (none faulty) under a constant load of 1,000 tx/s for 10 minutes (default), using 3 load generators:
\begin{footnotesize}
    \begin{verbatim}
cargo run --bin orchestrator -- benchmark \
    --committee 100 fixed-load --loads 1000 \
    --dedicated-clients 3 --faults 0 
    --benchmark-type 0
\end{verbatim}
\end{footnotesize}
The parameter `benchmark-type` is typically set to ``0'' to instruct the load generators to submit individual payment transactions. It can also be set to ``batch'' to instruct them to submit bundles of 100 payment transactions, as experimented in \Cref{fig:bundle-latency}. When benchmarking individual transactions, we select the number of load generators by ensuring that each individual load generator produces no more than 350 tx/s (as they may quickly become the bottleneck). We set the number of load generators to 40 when benchmarking bundles of 100 transactions (\Cref{fig:bundle-latency}).

\para{Monitoring}
The orchestrator provides facilities to monitor metrics. It deploys a Prometheus instance and a Grafana instance on a dedicated remote machine. Grafana is then available on the address printed on stdout when running benchmarks with the default username and password both set to \texttt{admin}. An example Grafana dashboard can be found in the file `grafana-dashboard.json'\footnote{\url{https://github.com/asonnino/sui/blob/sui-lutris/crates/orchestrator/assets/grafana-dashboard.json}}.

\para{Troubleshooting}
The main cause of troubles comes from the genesis. Prior to the benchmark phase, each load generator creates a large number of gas object later used to pay for the benchmark transactions. This operation may fail if there are not enough genesis gas objects to subdivide or if the total system gas limit is exceeded. As a result, it may be helpful to increase the number of genesis gas objects per validator in the `genesis\_config' file\footnote{\url{https://github.com/asonnino/sui/blob/7f3d922432b185e6977513ea577929ea06097102/crates/sui-swarm-config/src/genesis\_config.rs\#L361}} when running with very small committee sizes (such as 10).

\fi

\end{document}